\newcommand{\todo}[1]{\textcolor{red}{\textbf{TODO:}} #1\\}
\newcommand{\comment}[1]{}
\newcommand{\true}{\mathit{true}}
\newcommand{\false}{\mathit{false}}
\newcommand{\trueval}{\mathsf{true}}
\newcommand{\falseval}{\mathsf{false}}
\newcommand{\nats}{\mathbb{N}}
\newcommand{\natsbot}{\nats \cup \{\bot\}}
\newcommand\defeq{\ensuremath{\mathrel{\raisebox{-.3ex}{$\stackrel{\text{\tiny def}}=$}}}\xspace}
\newcommand{\bigo}[1]{\mathcal{O}(#1)}
\newcommand{\subf}[1]{\mathsf{subf}(#1)}
\newcommand{\trans}{\mathcal{T}}
\newcommand{\Traces}{\mathsf{Traces}}
\newcommand{\val}[2]{\mathit{val}(#1,#2)}
\newcommand{\autA}{\mathcal{A}}
\newcommand{\autB}{\mathcal{B}}
\newcommand{\autN}{\mathcal{N}}
\newcommand{\lang}[1]{\mathcal{L}(#1)}
\newcommand{\relaxfg}{\mathit{Relax}_{\tiny\fg}}
\newcommand{\rej}{\mathsf{rej}}
\newcommand{\Rej}{\mathit{Rej}}
\newcommand{\props}{\mathcal{P}}
\newcommand{\alphabet}{\Sigma}
\newcommand{\ialphabet}{{2^\mathcal{I}}}
\newcommand{\oalphabet}{{2^\mathcal{O}}}
\newcommand{\inpv}{\mathcal{I}}
\newcommand{\outv}{\mathcal{O}}
\newcommand{\inpval}{{\sigma_{I}}}
\newcommand{\outval}{{\sigma_{O}}}
\newcommand{\spec}{{\varphi}}
\newcommand{\softSpec}{{\varphi}}
\newcommand{\fg}{\LTLfinally\LTLglobally}
\newcommand{\gf}{\LTLglobally\LTLfinally}
\newcommand{\gphi}{{\LTLglobally\varphi}}
\newcommand{\gphij}{{\LTLglobally\varphi_j}}
\newcommand{\fgphi}{{\LTLfinally\LTLglobally\varphi}}
\newcommand{\gfphi}{{\LTLglobally\LTLfinally\varphi}}
\newcommand{\Relax}{\mathit{Relax}}
\newcommand{\anot}{\lambda}
\newcommand{\anotfg}{\pi}
\newcommand{\anotb}{\lambda^{\mathbb{B}}}
\newcommand{\anotfgb}{\pi^{\mathbb{B}}}
\newcommand{\anotfgbj}{\pi^{\mathbb{B},j}}
\newcommand{\anotbj}{\lambda^{\mathbb{B},j}}
\newcommand{\anotbjk}{\lambda^{\mathbb{B},j,k}}
\newcommand{\anotbjl}{\lambda^{\mathbb{B},j,l}}
\newcommand{\anotn}{\lambda^{\mathbb{N}}}
\newcommand{\anotfgn}{\pi^{\mathbb{N}}}
\newcommand{\anotfgnj}{\pi^{\mathbb{N},j}}
\newcommand{\anotnj}{\lambda^{\mathbb{N},j}}
\newcommand{\fgvalid}{$\LTLfinally\LTLglobally$--valid}
\newcommand{\succa}{\mathsf{succ}}
\newcommand{\soft}{\mathit{Soft}}
\newtheorem{appxlemma}{Lemma}
\newtheorem{appxprop}{Proposition}
\newtheorem{appxthm}{Theorem}
\newcommand{\supplies}{\mathit{P}}
\newcommand{\loads}{\mathit{L}}
\newcommand{\slp}{s_{l \rightarrow p}}
\newcommand{\caps}{E^+}
\newcommand{\cons}{\mathit{Consumers}}
\newcommand{\supl}{\mathit{Suppliers}}
\newcommand{\office}{\mathit{office}}
\newcommand{\occupied}{\mathit{occupied}}
\newcommand{\passage}{\mathit{passage}}
\newcommand{\specex}{\mathit{special\_exhibition}}
\newcommand{\library}{\mathit{library}}
\newcommand{\ent}{\mathit{entrance}}
\newcommand{\corr}{\mathit{corridor}}
\newcommand{\exh}{\mathit{exhibition}}
\newcommand{\enter}{\mathit{enter}}
\title{Maximum Realizability for Linear Temporal Logic Specifications \vspace{-.25cm}}
\author{Rayna Dimitrova\inst{1}\thanks{This work was done while the author was at The University of Texas at Austin.} \and
Mahsa Ghasemi\inst{2} \and
Ufuk Topcu\inst{2}}
\institute{
$^1$University of Leicester, UK\\
$^2$University of Texas at Austin, USA \vspace{-.25cm}
}
\begin{document}

\maketitle

\begin{abstract}
Automatic synthesis from linear temporal logic (LTL) specifications is widely used in robotic motion planning, control of autonomous systems, and load distribution in power networks. A common specification pattern in such applications consists of an LTL formula describing the requirements on the behaviour of the system, together with a set of additional desirable properties. We study the synthesis problem in settings where the overall specification is unrealizable, more precisely, when some of the desirable properties have to be (temporarily) violated in order to satisfy the system's objective.
We provide a quantitative semantics of sets of safety specifications, and use it to formalize the ``best-effort'' satisfaction of such \emph{soft} specifications while satisfying the \emph{hard} LTL specification. We propose an algorithm for synthesizing implementations that are optimal with respect to this quantitative semantics. Our method builds upon the idea of the bounded synthesis approach, and we develop a MaxSAT encoding which allows for maximizing the quantitative satisfaction of the safety specifications. We evaluate our algorithm on scenarios from robotics and power distribution networks.
\end{abstract}

\section{Introduction}\label{sec:intro}
Automatic synthesis from temporal logic specifications is increasingly becoming a viable alternative for system design in a number of domains such as control and robotics~\cite{FainekosGKP09,Belta16}. The main advantage of synthesis is that it allows the system designer to focus on \emph{what} the system should do, rather than on \emph{how} it should do it. Thus, the main challenge becomes providing the right specification of the system's required behaviour. While significantly easier than developing a system at a lower level, specification design is in its own a difficult and error-prone task. For example, in the case of systems operating in a complex adversarial environment, such as robots, the specification might be over-constrained, and as a result unrealizable, due to failure to account for some of the possible behaviours of the environment. In other cases, the user might have several alternative specifications in mind, possibly with some preferences, and wants to know what the best realizable combination of requirements is. For instance, a temporary violation of a safety requirement might be acceptable, if it is necessary to achieve an important goal. In such cases it is desirable that, when the specification is determined to be unrealizable, the synthesis procedure provides a ``best-effort'' implementation either according to some user-given criteria, or according to the semantics of the specification language.

The challenges of specification design motivate the need to develop synthesis methods for \emph{maximum realizability problem}, where the input to the synthesis tool consists of a \emph{hard specification} which \emph{must} be satisfied by the system, and \emph{soft specifications} which describe other desired, possibly prioritized properties.

A key ingredient of the formulation of maximum realizability problem is a quantitative semantics of the soft requirements. Broadly speaking, one can distinguish between two types of quantitative satisfaction: \emph{intrinsic}, which is based on the semantics of the qualitative operators of the specification language, and \emph{extrinsic}, which requires the user to provide certain quantitative information in terms of costs, weights, priority, or in terms of quantitative operators of the specification language. The approach to maximum realizability that we propose in this paper is applicable to quantitative semantics from both classes.

Our main focus is on soft specifications of the form $\gphi_1,\ldots,\gphi_n$, where each $\softSpec_i$ is a syntactically safe LTL formula. For formulas of the form $\gphi$, we consider a quantitative semantics that is typically used in the context of robustness. More precisely, we consider an intrinsic quantitative semantics which accounts for how often $\softSpec$ is satisfied. In particular, we consider truth values corresponding to $\softSpec$ being satisfied at every point of an execution, being violated only finitely many times, being both violated and satisfied infinitely often, or being continuously violated from some point on. We define a function that determines the value in a given implementation of a conjunction $\gphi_1\wedge\ldots\wedge\gphi_n$ of soft specifications, based on this semantics. Our method then synthesizes an implementation that maximizes the value of the soft specifications. We further extend our proposed method to address quantitative semantics based on user-provided relaxations of the soft specification, and weights capturing their priority.

The approach to maximum realizability that we develop is based on the bounded synthesis technique. Bounded synthesis is able to synthesize implementations of optimal size by leveraging the power of SAT (or QBF, or SMT) solvers. Since maximum realizability is an optimization problem, we reduce its bounded version to maximum satisfiability (MaxSAT). More precisely, we encode the bounded maximum realizability problem with hard and soft specifications as a partial weighted MaxSAT problem, where hard specifications are captured by hard clauses in the MaxSAT formulation, and the weights of soft clauses encode the quantitative semantics of soft specifications. By adjusting these weights our approach can easily capture different variations of quantitative semantics.
Although the formulation encodes the bounded maximum realizability problem (where the maximum size of the implementation is fixed), by providing a bound on the size of the optimal implementation, we are able to establish the completeness of our synthesis method. The existence of such completeness bound is guaranteed by considering quantitative semantics in which the values can be themselves encoded by LTL formulas.

We have applied the proposed synthesis method to examples from two domains where considering combinations of hard and soft specifications is natural and often unavoidable. For example, such a combination of specifications arises in power networks where generators of limited capacity have to power a set of vital and non-vital loads, whose total demand may exceed the capacity of the generators. Another example is robotic navigation, where due to the adversarial nature of the environment in which robots operate, safety requirements might prevent a system from achieving its goal, or a large number of tasks of different nature might not necessarily be consistent when posed together. 

\comment{
\todo{Related Work}
\begin{itemize}
\item specification debugging
\begin{itemize}
\item unsatisfiable and unrealizable cores~\cite{CimattiRST07,Schuppan12,RamanK13}
\item specification analysis~\cite{CimattiRST08,KleinP10,JinDDS13,EhlersR14}
\end{itemize}
\item quantitative synthesis

\begin{itemize}
\item \cite{AlmagorBK16}
\begin{itemize}
\item propositional quality (Boolean operations replaced by arbitrary functions over $[0,1]$)
\item  temporal quality (discounting for eventually operators)
\item propositional quality: formula has an exponential number of satisfaction values, solve an automaton nonemptiness problem for each value; still 2EXPTIME-complete; NGBAs that differ only in their initial states, possible to perform the check simultaneously for all values
\item temporal quality: realizability and synthesis with a given threshold
\end{itemize}
\item \cite{AlmagorKRV17}
\begin{itemize}
\item GR(1) with quantitative propositional operations (two quantitative implications: disjunctive and ratio)
\item implication semantics: reduce to GR(1) synthesis
\item ratio semantics: sound approximation
\item efficient symbolic implementation
\item maximum realizability: maximize number or guarantees satisfied
\end{itemize}
\item \cite{BloemCHJ09}
\begin{itemize}
\item automata with lexicographic mean-payoff conditions
\item for safety specifications, reduction to lexicographic mean-payoff games; algorithm for computing optimal strategy
\item for liveness specifications, reduction to lexicographic mean-payoff games with parity objectives; algorithm for computing $\varepsilon$-optimal strategy
\end{itemize}
\item \cite{AlurKW08}
\begin{itemize}
\item associate rank with each execution, compute the best requirement that can be enforced from a given state
\item optimization analogs of verification and synthesis, symbolic fixpoint algorithms
\end{itemize}
\item \cite{TabuadaN16}
\begin{itemize}
\item robust LTL with multi-valued semantics, which encodes robustness
\item different values of $\LTLglobally p$ correspond to $\LTLglobally p$, $\LTLfinally \LTLglobally  p$, $ \LTLglobally \LTLfinally p$, $\LTLfinally p$ and this express the number of times that $p$ is satisfied 
\item verification and synthesis procedures by reduction to B\"uchi automata
\item synthesis: given a set of values, synthesize an implementation such that the value of the formula in the synthesized system is in the set, no optimization
\end{itemize}
\end{itemize}

\item maximum realizability
\begin{itemize}
\item \cite{Tomita2017}
\begin{itemize}
\item must specification, desirable specifications of the form $\gphi$ with weights
\item reduction to composition of safety and mean-payoff games via Safraless approach
\item two-types of payoff terms: positive payoff for satisfying $\varphi$ at a given step or negative payoff for violating it
\item the property $\varphi$ is approximated (strengthened), mean-payoff weakens the property, i.e., relaxes the satisfaction
\item for LTL formulas that are not safety, when the approximation of $\varphi$ is done at the automaton level, the precise meaning of the payoff terms depends on translation to automata (designer should understand LTL to automata translation)
\item due to the combination of the different types of approximation, it is unclear what is the precise relation between the synthesized implementation and the original desirable specifications
\end{itemize}
\item \cite{TumovaHKFR13} 
\begin{itemize}
\item safety + finite horizon LTL, closed systems (no environment)
\item level of unsafety for a set of safety rules with priorities
\item lexicographic ordering according to priority of safety rules
\item problem reduced to shortest path computation in a weighted finite automaton
\end{itemize}
\item \cite{KimFS15}
\begin{itemize}
\item minimal revision of specification automata, atomic propositions associated with priorities given by user, closed system (no environment)
\item minimizes the number (total priority) of modifications of the automaton, regardless of whether they can be visited infinitely often or not 
\end{itemize}
\item \cite{LahijanianAFKV15}
\begin{itemize}
\item co-safe LTL specification, no environment, user-defined cost for propositions
\item optimal path in the product graph of weighted DFA for specification and system abstraction
\end{itemize}
\item \cite{LahijanianK16}
\begin{itemize}
\item specification revision for MDPs to achieve higher probability of satisfaction
\item syntactically co-safe specifications
\item multi-objective minimization (maximize probability of satisfaction, minimize revision cost)
\end{itemize}
\item \cite{JumaHM12}
\begin{itemize}
\item partial weighted MaxSAT for preference-based planning
\item mandatory goals, desirable plan properties optional weight to capture relative importance of desirable goals
\item efficient encoding into MaxSAT
\end{itemize}
\end{itemize}
Our work: for safety properties-- specification bases; for general properties: general framework based on relaxations; as long as values captured by LTL formulas: bound on the minimal size of optimal implementation;rely on weighted Boolean optimization
\end{itemize}
}

{\noindent \bf Related work. }Maximum realizability and several closely related problems have attracted significant attention in recent years. Planning over a finite horizon with prioritized safety requirements was studied in~\cite{TumovaHKFR13}, where the goal is to synthesize a least-violating control strategy. A similar problem for infinite-horizon temporal logic planning was studied in~\cite{KimFS15}, which seeks to revise an inconsistent specification, minimizing the cost of revision with respect to costs for atomic propositions provided by the specifier. \cite{LahijanianAFKV15} describes a method for computing optimal plans for co-safe LTL specifications, where optimality is again with respect to the cost of violating each atomic proposition, which is provided by the user. All of these approaches are developed for the planning setting, where there is no adversarial environment, and thus they are able to reduce the problem to the computation of an optimal path in a graph. The case of probabilistic environments was considered in~\cite{LahijanianK16}. In contrast, in our work we seek to maximize the satisfaction of the given specification against the worst-case behaviour of the environment. 

The problem setting that is the closest to ours is that of~\cite{Tomita2017}. The authors of~\cite{Tomita2017} study a maximum realizability problem in which the specification is a conjunction of a \emph{must} (or \emph{hard}, in our terms) LTL specification, and a number of weighted \emph{desirable} (or \emph{soft}, in our terms) specifications of the form $\gphi$, where $\varphi$ is an arbitrary LTL formula. Their synthesis method requires translating $\gphi$ to a mean-payoff term, by first approximating the LTL formula $\varphi$ with a safety property. This approximation strengthens $\varphi$, while the transition  to a mean-payoff weakens the resulting safety property. Thus,  when $\varphi$ is not a safety property, there is no clear relationship between $\gphi$ and the corresponding mean-payoff term. The mean-payoff terms for the individual desirable specifications are combined in a weighted sum, and the synthesized implementation is optimal with respect to this combined term. In contrast, in our maximum realizability setting each satisfaction value is characterized as an LTL formula which is a relaxation of the original specification, and thus so is the optimal value.  

To the best of our knowledge, our work is the first to employ MaxSAT in the context of reactive synthesis. MaxSAT has been used in~\cite{JumaHM12} for preference-based planning. However, since maximum realizability is concerned with reactive systems, it requires a fundamentally different approach than planning.

The two other main research directions related to maximum realizability are \emph{quantitative synthesis} and \emph{specification debugging}. There are two predominant flavours of quantitative synthesis problems studied in the literature. In the first one (cf.~\cite{BloemCHJ09}), the goal is to generate an implementation that maximizes the value of a mean-payoff objective, while possibly satisfying some $\omega$-regular specification.
In the second setting (cf.~\cite{AlmagorBK16,AlmagorKRV17,TabuadaN16}), the system requirements are formalized in a multi-valued temporal logic. The synthesis methods in these works, however, do not solve directly the corresponding optimization problem, but instead check for the existence of an implementation whose value is in a given set. The optimization problem can then be reduced to a sequence of such queries. 

An optimal synthesis problem for an ordered sequence of prioritized $\omega$-regular properties was studied in~\cite{AlurKW08}, where the classical fixpoint-based game-solving algorithms are extended to a quantitative setting. The main difference in our work is that we allow for incomparable soft specifications each with a number of prioritized relaxations, for which the equivalent sequence of preference-ordered combinations would be of length exponential in the number of soft specification. Our MaxSAT formulation avoids considering explicitly these combinations.

In specification debugging there is a lot of research dedicated to finding good explanations for the unsatisfiability or unrealizability of temporal specifications~\cite{CimattiRST07,Schuppan12,RamanK13}, and more generally at the analysis of specifications~\cite{CimattiRST08,KleinP10,JinDDS13,EhlersR14}. Our approach to maximum realizability can prove useful for specification analysis, since instead of simply providing an optimal value, it computes an optimal relaxation of the given specification in the form of another LTL formula.

\section{Maximum Realizability Problem}\label{sec:problem}
In this section, we first overview linear-time temporal logic, LTL, and the corresponding synthesis problem, which asks to synthesize an implementation, in the form of a transition system, that satisfies an LTL formula given as input. 

Then, we proceed by providing a quantitative semantics for a class of LTL formulas, and the definition of the corresponding maximum realizability problem.

\vspace{-.3cm}
\subsection{Specifications, Transition Systems, and the Synthesis Problem}\label{sec:basic-definitions}

Linear-time temporal logic (LTL) is a standard specification language for formalizing requirements on the behaviour of reactive systems. Given a finite set $\props$ of atomic propositions, the set of LTL formulas is generated by the grammar
$\varphi := p \mid \true \mid \false \mid \neg \varphi \mid \varphi_1 \wedge \varphi_2 \mid \varphi_1 \vee \varphi_2 \mid \LTLnext  \varphi  \mid \varphi_1 \LTLuntil \varphi_2 \mid \varphi_1 \LTLrelease \varphi_2,$
where $p \in \props$  is an atomic proposition, $\LTLnext$ is the \emph{next} operator, $\LTLuntil$ is the \emph{until} operator, and $\LTLrelease$ is the \emph{release} operator. As usual, we  define the derived operators 
\emph{finally}: $\LTLfinally \varphi = \true \LTLuntil \varphi$ and 
\emph{globally}: $\LTLglobally \varphi = \false \LTLrelease \varphi$. An LTL formula $\varphi$ is in negation normal form (NNF) if all the negations appear only in front of atomic propositions. Since every LTL formula can be converted to an equivalent one in NNF, we consider only formulas in NNF.
A \emph{syntactically safe} LTL formula is an LTL formula which contains no occurrences of the $\LTLuntil$ operator in its NNF. 

Let $\Sigma = 2^{\props}$ be the finite alphabet consisting of the valuations of the propositions $\props$. A letter $\sigma \in \Sigma$ is interpreted as the valuation that assigns value $\trueval$ to all $p \in \sigma$ and $\falseval$ to all $p \in \props \setminus \sigma$. LTL formulas are interpreted over infinite words $w \in\alphabet^\omega$. If a word $w \in \alphabet^\omega$ satisfies an LTL formula $\varphi$, we write $w \models \varphi$. The definition of the semantics of LTL can be found for instance in~\cite{BaierKatoen08}. 
We denote with $|\varphi|$ the length of $\varphi$, and with $\subf\varphi$ the set of its subformulas.

In the rest of the paper we assume that the set $\props$ of atomic propositions is partitioned into disjoint sets of \emph{input} propositions $\inpv$ and \emph{output} propositions $\outv$.

A \emph{transition system} over a set of input propositions $\inpv$ and a set of output propositions $\outv$ is a tuple $\trans = (S,s_0,\tau)$, where $S$ is a set of states, $s_0$ is the initial state, and the transition function $\tau : S \times \ialphabet \to S \times \oalphabet$ maps a state $s$ and a valuation $\inpval \in \ialphabet$ of the input propositions  to a successor state $s'$ and a valuation $\outval \in \oalphabet$ to the output propositions. 
Let $\props = \inpv \cup \outv$ be the set of all propositions. For $\sigma \in \Sigma = 2^{\props}$ we denote $\sigma \cap \inpv$ by $\inpval$, and $\sigma \cap \outv$ by $\outval$.

If the set $S$ is finite, then $\trans$ is a finite-state transition system. In this case we define the size $|\trans|$ of $\trans$ to be the number of its states, i.e.,  $|\trans| \defeq |S|$.

An \emph{execution} of $\trans$ is an infinite sequence $s_0, (\inpval_0 \cup \outval_0), s_1, (\inpval_1 \cup \outval_1), s_2\ldots$ such that $s_0$ is the initial state, and $(s_{i+1},\outval_i) = \tau(s_i,\inpval_i)$ for every $i \geq 0$. The corresponding sequence $(\inpval_0 \cup \outval_0),(\inpval_1 \cup \outval_1),\ldots \in \alphabet^\omega$ is called a trace. We denote with $\Traces(\trans)$ the set of all traces of a transition system $\trans$. 

We say that a transition system $\trans$ satisfies an LTL formula $\varphi$ over atomic propositions $\props = \inpv \cup \outv$, denoted $\trans \models \varphi$, if $w \models \varphi$ for every $w \in \Traces(\trans)$.

The \emph{realizability problem for LTL} is to determine whether for a given LTL formula $\varphi$ there exists a transition system $\trans$ that satisfies $\varphi$. The \emph{LTL synthesis problem} asks to construct such a transition system if one exists.

Often, the specification  is a combination of multiple requirements,  which might not be realizable in conjunction. In such a case, in addition to reporting the unrealizability to the system designer, we would like the synthesis procedure to construct an implementation that satisfies the specification ``as much as possible''. Such implementation is particularly useful in the case where some of the requirements  describe desirable but not necessarily essential properties of the system. To determine what ``as much as possible'' formally means, a quantitative semantics of the specification language is necessary. In the next subsection we provide such semantics for a fragment of LTL. The quantitative interpretation is based on the semantics of LTL formulas of the form $\gphi$.

\subsection{Quantitative Semantics of Soft Safety Specifications}\label{sec:quantitative-semantics}
Let $\gphi_1,\ldots,\gphi_n$ be LTL specifications, where each $\softSpec_i$ is a syntactically safe LTL formula. In order to formalize the maximal satisfaction of $\gphi_1\wedge\ldots\wedge\gphi_n$, we first give a quantitative semantics of formulas of the form $\gphi$.

\vspace{-.3cm}
\paragraph{Quantitative semantics of safety specifications.} 
For an LTL formula of the form $\gphi$ and a transition system $\trans$, we define \emph{the value $\val\trans{\gphi}$ of $\gphi$ in $\trans$} as
\[
\val\trans\gphi \defeq
\begin{cases}
(1,1,1) & \text{if } \trans\models\LTLglobally\varphi,\\
(0,1,1) & \text{if } \trans\not\models\LTLglobally\varphi \text{ and } \trans\models\LTLfinally\LTLglobally\varphi,\\
(0,0,1) & \text{if } \trans\not\models\LTLglobally\varphi \text{ and } \trans\not\models\LTLfinally\LTLglobally\varphi, \text{ and }\trans\models\LTLglobally\LTLfinally\varphi,\\
(0,0,0) & \text{if } \trans\not\models\LTLglobally\varphi, \text{ and } \trans\not\models\LTLfinally\LTLglobally\varphi, \text{ and }\trans\not\models\LTLglobally\LTLfinally\varphi.\\
\end{cases}
\]

Thus, the value of $\gphi$ in a transition system $\trans$ is a vector $(v_1,v_2,v_3) \in \{0,1\}^3$, where the value $(1,1,1)$ corresponds to the $\true$ value in the classical semantics of LTL. When $\trans\not\models\gphi$, the values $(0,1,1)$, $(0,0,1)$ and $(0,0,0)$ capture the extent to which $\varphi$ holds or not along the traces of $\trans$. For example, if $\val\trans\gphi =(0,0,1)$, then $\varphi$ holds infinitely often on each trace of $\trans$, but there exists a trace of $\trans$ on which $\varphi$ is violated infinitely often. When $\val\trans\gphi =(0,0,0)$, then on some trace of $\trans$, $\varphi$ holds for at most finitely many positions.

Note that by the definition of $\mathit{val}$, if $\val\trans\gphi = (v_1,v_2,v_3)$, then
\emph{(1)} $v_1=1$ iff $\trans\models \gphi$,
\emph{(2)} $v_2=1$ iff $\trans\models \fgphi$, and
\emph{(3)} $v_3=1$ iff $\trans\models \gfphi$.
Thus, the lexicographic ordering on $\{0,1\}^3$ captures the preference of one transition system over another with respect to the  quantitative satisfaction of $\gphi$.

\begin{example}\label{ex:one-safety}
Consider a robot working as a museum guide. We want to synthesize a transition system representing a navigation strategy for the robot. One of the requirements is that its tour should visit the special exhibition infinitely often, formalized in LTL as $\LTLglobally\LTLfinally \specex$. We also desire that the robot never enters the staff's office, formalized as $\LTLglobally\neg\office$. Now, suppose that initially the key for the special exhibition is in the office. Thus, in order to satisfy $\LTLglobally\LTLfinally \specex$, the robot must violate $\LTLglobally\neg\office$. In any case, a strategy in which the office is entered only once, which satisfies $\fg \neg\office$ is preferable to one which enters the office over and over again, and only satisfies $\gf\neg\office$.
Thus, we want to synthesize a strategy $\trans$ with maximal value $\val\trans{\LTLglobally\neg\office}$.
\end{example}

\vspace{-.2cm}
In order to compare implementations with respect to their satisfaction of a conjunction $\gphi_1 \wedge \ldots \wedge \gphi_n$ of several safety specifications, we will extend the above definition. We consider the case when the specifier has not expressed any preference for the individual conjuncts. Consider the following example.

\vspace{-.2cm}
\begin{example}\label{ex:two-safety} We consider again the museum guide robot, now with two soft safety specifications. The specification $\LTLglobally (\occupied \rightarrow \neg \enter\_{\passage})$ requires that the robot does not enter the narrow passage if it is occupied. The second one, $\LTLglobally \neg\enter\_\library$, requires that the robot never enters the library. Passing through the library is an alternative to using the passage. Now, unless these specifications are given priorities, it is preferable to satisfy each of $(\occupied \rightarrow \neg \enter\_{\passage})$ and $\neg\enter\_\library$ infinitely often, rather than avoid entering the library by going through a occupied passage every time, or vice versa. 
\end{example}

\vspace{-.3cm}
\paragraph{Quantitative semantics of conjunctions.} 
To capture the idea illustrated in Example~\ref{ex:two-safety}, we define a value function, which intuitively 
gives higher values to transition systems in which a fewer number of soft specifications have low values. Formally, let \emph{the value of $\gphi_1 \wedge \ldots \wedge \gphi_n$ in $\trans$} be
\[\val\trans{\gphi_1 \wedge \ldots \wedge \gphi_n} \defeq \big(
\sum_{i=1}^n v_{i,3},
\sum_{i=1}^n v_{i,2},
\sum_{i=1}^n v_{i,1}
\big),\] 
where
$\val\trans{\gphi_i} = (v_{i,1},v_{i,2},v_{i,3})$ for $i \in \{1,\ldots,n\}$. To compare transition systems according to these values, we use lexicographic ordering on $\{0,\ldots,n\}^3$.

\begin{example}\label{ex:two-safety-val}  For the specifications in Example~\ref{ex:two-safety}, the value function defined above assigns value $(2,0,0)$ to a  system that satisfies  $\gf (\occupied \rightarrow \neg \enter\_{\passage})$ and $\gf\neg\enter\_\library$, but satisfies neither of $\fg (\occupied \rightarrow \neg \enter\_{\passage})$ and $\fg\neg\enter\_\library$. It assigns the smaller value $(1,1,1)$
to an implementation that satisfies $\LTLglobally (\occupied \rightarrow \neg \enter_{\passage})$, but not $\gf\neg\enter\_\library$.
\end{example}

Note that in the definition above we have reversed the order of the sums of the values $v_1, v_2$ and $v_3$, with the sum over  $v_3$ being the first, and the sum over $v_1$ being last. In this way, a transition system that satisfies all soft requirements to some extent is considered better in the lexicographic ordering than a transition system that satisfies one of them and violates all the others. We could instead consider the inverse lexicographic ordering, thus giving preference to satisfying some soft specification, over having some lower level of satisfaction over all of them. The next example illustrates the differences between these two variations.

\begin{example}\label{ex:ordering}
For the two soft specifications from Example~\ref{ex:two-safety}, reversing the order of the sums in the definition of $\val\trans{\gphi_1 \wedge \ldots \wedge \gphi_n}$ results in giving the higher value $(1,1,1)$ to a transition system that satisfies $\LTLglobally (\occupied \rightarrow \neg \enter\_{\passage})$ but not $\gf\neg\enter\_\library$, and the lower value $(0,0,2)$ to the one that guarantees only $\gf (\occupied \rightarrow \neg \enter\_{\passage})$ and $\gf\neg\enter\_\library$. The most suitable ordering usually depends on the specific application. 
\end{example}

In Appendix~\ref{sec:generalizations} we discuss generalizations of the framework, where
the user provides a set of relaxations for each of the soft specifications, and possibly a priority ordering among the soft specifications, or numerical weights.

\vspace{-.3cm}
\subsection{Maximum realizability}\label{sec:max-realizability}
Using the definition of quantitative satisfaction of soft safety specifications, we now define the maximum realizability problem, which asks to synthesize a transition system that satisfies a given \emph{hard} LTL specification, and is optimal with respect to the satisfaction of a conjunction of \emph{soft} safety specifications.

{\bf Maximum realizability problem:} Given an LTL formula $\spec$ and formulas $\gphi_1,\ldots,\gphi_n$, where each $\softSpec_i$ is a syntactically safe LTL formula, the maximum realizability problem asks to determine if there exists a transition system $\trans$ such that $\trans \models \spec$, and if the answer is positive, to synthesize a transition system $\trans$ such that $\trans \models \spec$, and such that for every transition system $\trans'$ with $\trans'\models \spec$ it holds that $\val\trans{\gphi_1 \wedge \ldots \wedge \gphi_n} \geq \val{\trans'}{\gphi_1 \wedge \ldots \wedge \gphi_n}$.

{\bf Bounded maximum realizability problem:} Given an LTL formula $\spec$ and formulas $\gphi_1,\ldots,\gphi_n$, where each $\softSpec_i$ is a syntactically safe LTL formula, and a bound $b$, the bounded maximum realizability problem asks to determine if there exists a transition system $\trans$ with $|\trans| \leq b$ such that $\trans \models \spec$, and if the answer is positive, to synthesize a transition system $\trans$ such that $\trans \models \spec$, $|\trans| \leq b$ and such that for every transition system $\trans'$ with $\trans'\models \spec$ and $|\trans'| \leq b$, it holds that $\val\trans{\gphi_1 \wedge \ldots \wedge \gphi_n} \geq \val{\trans'}{\gphi_1 \wedge \ldots \wedge \gphi_n}$.

\section{Preliminaries}
In this section we recall bounded synthesis, introduced in~\cite{ScheweF07a}, and in particular the approach based on reduction to SAT. We begin with the necessary preliminaries from automata theory, and the notion of annotated transition systems.

\subsection{Bounded Synthesis}\label{sec:bounded-synthesis}

A \emph{B\"uchi automaton} over a finite alphabet $\alphabet$ is a tuple $\autA = (Q,q_0,\delta,F)$, where $Q$ is a finite set of states, $q_0$ is the initial state, $\delta \subseteq Q \times \alphabet \times Q$ is the transition relation, and $F \subseteq Q$ is a subset of the set of states. A run of $\autA$ on an infinite word $w=\sigma_0\sigma_1\ldots \in \alphabet^\omega$ is an infinite sequence $q_0,q_1,\ldots$ of states, where $q_0$ is the initial state and for every $i \geq 0$ it holds that $(q_i,\sigma_i,q_{i+1}) \in \delta$. 

A run of a B\"uchi automaton is accepting if it contains infinitely many occurrences of states in $F$. A \emph{co-B\"uchi automaton} $\autA = (Q,q_0,\delta,F)$ differs from a B\"uchi automaton in the accepting condition: a run of a co-B\"uchi automaton is accepting if it contains only \emph{finitely many} occurrences of states in $F$. For a B\"uchi automaton the states in $F$ are called \emph{accepting states}, while for a co-B\"uchi automaton they are called \emph{rejecting states}.
A \emph{nondeterministic} automaton $\autA$ accepts a word $w \in \alphabet^\omega$ if \emph{some} run of $\autA$ on $w$ is accepting.
A \emph{universal} automaton $\autA$ accepts a word $w \in \alphabet^\omega$ if \emph{every} run of $\autA$ on $w$ is accepting.

The \emph{run graph} of a universal automaton $\autA = (Q,q_0,\delta,F)$ on a transition system $\trans = (S,s_0,\tau)$ is the unique graph $G = (V,E)$ with set of nodes $V = S \times Q$ and set of labelled edges $E \subseteq V \times \alphabet \times V$ such that $((s,q),\sigma,(s',q')) \in E$ iff $(q,\sigma,q') \in \delta$ and $\tau(s,\sigma\cap \inpv) = (s',\sigma\cap \outv)$.
That is, $G$ is the product of $\autA$ and $\trans$.

A run graph of a universal B\"uchi (resp.\ co-B\"uchi) automaton is accepting if every infinite path $(s_0,q_0),(s_1,q_1), \ldots$ contains infinitely many (resp. finitely) many occurrences of states $q_i$ in $F$. A transition system $\trans$ is accepted by a universal automaton $\autA$ if the unique run graph of $\autA$ on $\trans$ is accepting. We denote with  $\lang\autA$ the set of transition systems accepted by $\autA$.

The bounded synthesis approach is based on the following property.
\begin{lemma}[\cite{KupfermanV05}]\label{lem:ltl-ucba}
For every LTL formula $\varphi$ we can construct a universal co-B\"uchi automaton $\autA_\varphi$ that has at most $2^{O(|\varphi|)}$ states and is such that for every transition system $\trans$ it holds that $\trans \in \lang{\autA_\varphi}$ if and only if $\trans \models \varphi$.
\end{lemma}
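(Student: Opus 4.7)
The plan is to exploit the duality between nondeterministic B\"uchi and universal co-B\"uchi acceptance, together with the classical Vardi--Wolper translation from LTL to nondeterministic B\"uchi automata on infinite words. Concretely, I would first construct a nondeterministic B\"uchi automaton $\autN_{\neg\varphi}$ over the alphabet $\alphabet = 2^\props$ that accepts exactly the infinite words $w \in \alphabet^\omega$ with $w \models \neg\varphi$. The standard Vardi--Wolper construction takes subsets of $\subf{\neg\varphi}$ (or ``elementary'' sets of subformulas) as states, yielding a state space of size $2^{O(|\varphi|)}$; the accepting set $F$ enforces, via one accepting component per until-subformula, that eventualities are eventually fulfilled.

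Next, I would obtain $\autA_\varphi$ by simply reinterpreting $\autN_{\neg\varphi} = (Q,q_0,\delta,F)$ as a \emph{universal co-B\"uchi} automaton with the same states, initial state and transition relation, and with $F$ now playing the role of the set of rejecting states. The correctness proof on words then reduces to one observation: for any $w\in\alphabet^\omega$,
\[
w \models \varphi \;\;\Longleftrightarrow\;\; w \not\models \neg\varphi \;\;\Longleftrightarrow\;\; \text{no run of } \autN_{\neg\varphi} \text{ on } w \text{ visits } F \text{ infinitely often},
\]
which is exactly the statement that every run of $\autA_\varphi$ on $w$ visits the rejecting set $F$ only finitely often, i.e., $w$ is accepted by $\autA_\varphi$.

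To lift this from words to transition systems, I would use the definition of $\trans \models \varphi$ as ``$w \models \varphi$ for all $w \in \Traces(\trans)$'' together with the definition of acceptance of $\trans$ by a universal automaton via its unique run graph. Each infinite path in the run graph of $\autA_\varphi$ on $\trans$ projects to a trace $w \in \Traces(\trans)$ and induces a run of $\autA_\varphi$ on $w$; conversely, every trace of $\trans$ together with a run of $\autA_\varphi$ on it appears as such a path. Hence the run graph is accepting (every path visits $F$ finitely often) iff every trace $w$ of $\trans$ has every run of $\autA_\varphi$ on $w$ visiting $F$ finitely often iff every $w \in \Traces(\trans)$ satisfies $\varphi$ iff $\trans \models \varphi$. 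The state-count bound $|Q| = 2^{O(|\varphi|)}$ is inherited directly from the Vardi--Wolper construction, since dualization does not change the state set.

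The only nontrivial ingredient is the Vardi--Wolper construction itself, which is the main technical obstacle but is entirely classical; for the purposes of this paper I would cite it and just verify the dualization step and the projection between runs on traces and paths in the run graph, both of which are routine.
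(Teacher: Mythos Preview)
The paper does not prove this lemma: it is stated with a citation to Kupferman and Vardi and used as a black box. Your proposal is correct and is precisely the standard argument underlying that reference (build an NBA for $\neg\varphi$ via Vardi--Wolper and dualize to a universal co-B\"uchi automaton), so there is nothing to compare against beyond noting that your sketch matches the cited construction.
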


An \emph{annotation} of a transition system $\trans = (S,s_0,\tau)$  with respect to a universal co-B\"uchi automaton $\autA = (Q,q_0,\delta,F)$ is a function $\anot : S \times Q \to \natsbot$ that maps nodes of the run graph of $\autA$ on $\trans$ to the set $\natsbot$. Intuitively, such an annotation is valid if every node $(s,q)$ that is reachable from the node $(s_0,q_0)$ is annotated with a natural number, which is an upper bound on the number of rejecting states on any path from $(s_0,q_0)$ to $(s,q)$. 

Formally, an annotation $\anot : S \times Q \to \natsbot$ is \emph{valid} if
\begin{itemize}
\item $\anot(s_0,q_0) \neq \bot$, i.e., the pair of initial states is labelled with a number, and 
\item whenever $\anot(s,q) \neq \bot$, then for every edge $((s,q),\sigma,(s',q'))$ in the run graph of $\autA$ on $\trans$ we have that $(s',q')$ is annotated with a number (i.e., $\anot(s',q')\neq \bot$), such that
$\anot(s',q') \geq  \anot(s,q)$, and if $q' \in F$, then $\anot(s',q') >  \anot(s,q)$.
\end{itemize}

Valid annotations of finite-state systems correspond to accepting run graphs. An annotation $\anot$ is $c$-bounded if $\anot(s,q) \in \{0,\ldots,c\}\cup\{\bot\}$ for all $s \in S$ and $q \in Q$.\looseness=-1

The synthesis method proposed in \cite{ScheweF07a,FinkbeinerS13} employs the following result in order to 
reduce the bounded synthesis problem to checking the satisfiability of propositional formulas. A transition system $\trans$ is accepted by a universal co-B\"uchi automaton $\autA = (Q,q_0,\delta,F)$ iff there exists a $(|\trans|\cdot|F|)$-bounded valid annotation for $\trans$ and $\autA$. One can estimate a bound on the size of the transition system, which allows to reduce the synthesis problem to its bounded version. Namely, if there exists a transition system that satisfies an LTL formula $\varphi$, then there exists a transition system  satisfying $\varphi$ with  at most $\big(2^{(|\subf\varphi| +\log |\varphi|)}\big)!^2$ states.
\looseness=-1

Let $\autA = (Q,q_0,\delta,F)$ be a universal co-B\"uchi automaton for the LTL formula $\spec$.
Given a bound $b$ on the size of the sought transition system $\trans$, the bounded synthesis problem can be encoded as a satisfiability problem with the following sets of propositional variables and constraints.

{\bf Variables:} The variables represent the sought transition system $\trans$, and the sought valid annotation $\anot$ of the  run graph of $\autA$ on $\trans$. A transition system with $b$ states $S = \{1,\ldots,b\}$ is represented by  Boolean variables $\tau_{s,\inpval,s'}$ and $o_{s,\inpval}$ for every $s, s' \in S$, $\inpval \in \ialphabet$, and output proposition $o \in \outv$. The variable $\tau_{s,\inpval,s'}$ encodes the existence of transition from $s$ to $s'$ on input $\inpval$, and the variable $o_{s,\inpval}$ encodes $o$ being true in the output from state $s$ on input $\inpval$.

The annotation $\anot$ is  represented by the following variables. For each $s\in S$ and $q \in Q$, the annotation is represented by a Boolean variable $\anotb_{s,q}$ and a vector $\anotn_{s,q}$ of $\log(b\cdot |F|)$ Boolean variables: the variable $\anotb_{s,q}$ encodes the reachability of $(s,q)$ from the initial node in the corresponding run graph, and the vector of variables $\anotn_{s,q}$ represents the bound for the node $(s,q)$. 

{\bf Constraints for input-enabled $\trans$}:
$C_\tau \defeq\bigwedge_{s \in S}\bigwedge_{\inpval \in \ialphabet}\bigvee_{s' \in S} \tau_{s,\inpval,s'}$.

{\bf Constraints for valid annotation:}\\
$
\begin{array}{lll}
C_\anot & \defeq & 
\anotb_{s_0,q_0} \wedge\\&& 
\bigwedge_{q,q' \in Q}\bigwedge_{s,s' \in S}\bigwedge_{\inpval \in \ialphabet}
\Big( 
\big(
\anotb_{s,q} \wedge 
\delta_{s,q,\inpval,q'} \wedge 
\tau_{s,\inpval,s'}
\big) \rightarrow 
\succa_\anot(s,q,s',q')
\Big),
\end{array}
$\\
where $\delta_{s,q,\inpval,q'}$ is a formula over the variables $o_{s,\inpval}$ that characterizes the transitions in $\autA$ between  $q$ and $q'$ on labels consistent with $\inpval$, and
$\succa_\anot(s,q,s',q')$ is a formula  over the annotation variables such that
$\succa_\anot(s,q,s',q') \defeq (\anotb_{s',q'} \wedge (\anotn_{s',q'} > \anotn_{s,q}))$ if $q' \in F$, and
$\succa_\anot(s,q,s',q') \defeq (\anotb_{s',q'} \wedge (\anotn_{s',q'} \geq \anotn_{s,q}))$ if $q' \not\in F$.

\subsection{Maximum Satisfiability (MaxSAT)}\label{sec:prelim-maxsat}
While the bounded synthesis problem can be encoded into SAT, for the synthesis of a transition system that satisfies a set of soft specifications as well as possible, we need to solve an optimization problem. In the next section we will reduce the bounded maximum realizability problem to a \emph{partial weighted MaxSAT problem}.

\emph{MaxSAT} is a Boolean optimization problem. Similarly to SAT, instances of MaxSAT are given as propositional formulas in conjunctive normal form (CNF). That is, a MaxSAT instance is a conjunction of clauses, each of which is a disjunction of literals, where a literal is a Boolean variable or its negation. The objective in MaxSAT is to compute a variable assignment that maximizes the number of satisfied clauses. In \emph{weighted MaxSAT}, each clause is associated with a positive numerical weight and the objective is now to maximize the sum of the weights of the satisfied clauses. Finally, in \emph{partial weighted MaxSAT}, there are two types of clauses, namely \emph{hard} and \emph{soft} clauses, where only the soft clauses are associated with weights. In order to be a solution to a partial weighted MaxSAT formula, a variable assignment must satisfy all the hard clauses. An optimal solution additionally maximizes the sum of the weights of the soft clauses.
 
In the encoding in the next section we use hard clauses for the hard specification, and soft clauses to capture the soft specifications in the maximum realizability problem. The weights for the soft clauses  will encode the lexicographic ordering  on values of conjunctions of soft specifications.

\section{From Maximum Realizability to MaxSAT}\label{sec:maxsat-encoding}
We now describe the proposed MaxSAT-based approach to maximum realizability. First, we establish an upper bound on the minimal size of an  implementation that satisfies a given LTL specification $\varphi$ and maximizes the satisfaction of a conjunction of soft safety specifications $\gphi_1, \ldots,\gphi_n$ according to the value function for such formulas described in Section~\ref{sec:quantitative-semantics}.
The established bound can be used to reduce the maximum realizability problem to its bounded version, which, in turn, we encode as a MaxSAT problem.

\subsection{Bounded Maximum Realizability}
To establish an upper bound on the minimal (in terms of size) optimal implementation, we make use of an important property of the function $\mathit{val}$ defined in Section~\ref{sec:quantitative-semantics}. Namely, the property that for each of the possible values of $\gphi_1 \wedge\ldots \wedge\gphi_n$  there is a corresponding LTL formula that encodes this value in the classical LTL semantics, as we formally state in the next lemma.

\begin{lemma}\label{lem:value-as-ltl}
For every transition system $\trans$ and soft safety specifications $\gphi_1,\ldots,$ $\gphi_n$, if $\val\trans{\gphi_1 \wedge \ldots \wedge \gphi_n} = v$, then there exists an LTL formula $\psi_v$ such that $\trans \models \psi_v$ and the following conditions hold
\begin{itemize}
\item[(1)] $\psi_v = \softSpec_1'\wedge\ldots\wedge\softSpec_n'$, where $\softSpec_i' \in\{\gphi_i,\fgphi_i,\gfphi_i,\true\} \text{ for }i=1,\ldots,n$, 
\item[(2)] for every $\trans'$, if $\trans' \models \psi_v$, then $\val{\trans'}{\gphi_1 \wedge \ldots \wedge \gphi_n} \geq v$.
\end{itemize}
\end{lemma}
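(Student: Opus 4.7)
My plan is to construct $\psi_v$ explicitly, one conjunct per soft specification, by reading off each conjunct from the individual triple $\val\trans{\gphi_i} = (v_{i,1},v_{i,2},v_{i,3})$. The key structural fact I would first record is that these triples are monotone: since $\LTLglobally\varphi_i \Rightarrow \LTLfinally\LTLglobally\varphi_i \Rightarrow \LTLglobally\LTLfinally\varphi_i$, each triple $(v_{i,1},v_{i,2},v_{i,3})$ lies in $\{(1,1,1),(0,1,1),(0,0,1),(0,0,0)\}$. Based on which of these four patterns occurs, I would set $\softSpec_i'$ to be $\gphi_i$, $\fgphi_i$, $\gfphi_i$, or $\true$ respectively, and then define $\psi_v \defeq \softSpec_1' \wedge \ldots \wedge \softSpec_n'$. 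This immediately gives clause (1); that $\trans \models \psi_v$ is also immediate, since by the definition of $\mathit{val}$ the chosen $\softSpec_i'$ is precisely the strongest one of the four that holds on $\trans$.

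For clause (2), the plan is to show the stronger componentwise statement: for any $\trans'$ with $\trans' \models \psi_v$, and writing $\val{\trans'}{\gphi_i} = (v_{i,1}',v_{i,2}',v_{i,3}')$, we have $v_{i,j}' \geq v_{i,j}$ for every $i$ and every $j \in \{1,2,3\}$. This follows by a four-way case split on $\softSpec_i'$, each case using the implication chain above: e.g.\ if $\softSpec_i' = \fgphi_i$ then $\trans' \models \fgphi_i$ forces $v_{i,2}' = v_{i,3}' = 1$, matching the corresponding $v_{i,2}=v_{i,3}=1$ from the $(0,1,1)$ case, and $v_{i,1}' \geq 0 = v_{i,1}$ is trivial; the other three cases are analogous (with the $\true$ case being vacuous since then $v_{i,\cdot}=(0,0,0)$). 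Summing over $i$ yields componentwise dominance of the three sums defining $\val{\trans'}{\gphi_1\wedge\ldots\wedge\gphi_n}$ over those defining $v$, and componentwise dominance implies lexicographic dominance, giving clause (2).

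I do not anticipate any real obstacle: the lemma is essentially a book-keeping statement that exploits the fact that our four truth values $(1,1,1), (0,1,1), (0,0,1), (0,0,0)$ are exactly the ones expressible by the LTL relaxations $\gphi, \fgphi, \gfphi, \true$. The only minor subtlety to mention is the passage from componentwise $\geq$ on the per-specification triples to lexicographic $\geq$ on the aggregated value, which is immediate because componentwise dominance of the three sums clearly entails lexicographic dominance in any order of the components.
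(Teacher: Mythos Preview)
Your proposal is correct and follows essentially the same approach as the paper: define each $\softSpec_i'$ by case analysis on the triple $\val\trans{\gphi_i}$, take $\psi_v$ as their conjunction, and for clause~(2) establish componentwise dominance $v_{i,j}' \geq v_{i,j}$ (using the implication chain $\gphi_i \Rightarrow \fgphi_i \Rightarrow \gfphi_i$), sum over $i$, and conclude lexicographic dominance from componentwise dominance. The paper's write-up is slightly more compact (it argues directly ``if $v_{i,j}=1$ then $v_{i,j}'=1$'' rather than a four-way case split), but the content is the same.
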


The following theorem is a consequence of Lemma~\ref{lem:value-as-ltl}. 
\begin{theorem}\label{thm:optimal-bound-safety}
Given an LTL specification $\spec$ and soft safety specifications $\gphi_1,\ldots,$ $\gphi_n$,  
if there exists a transition system $\trans \models \spec$, then there exists  $\trans^*$ such that
\begin{itemize}
\item [(1)] $\val\trans{\gphi_1 \wedge \ldots \wedge \gphi_n} \leq \val{\trans^*}{\gphi_1 \wedge \ldots \wedge \gphi_n}$ for all $\trans$ with $\trans \models \spec$,
\item[(2)] $\trans^* \models \spec$ and $|\trans^*| \leq \left((2^{(b+\log b)})!\right)^2$,
\end{itemize}
 where $b = \max\{|\subf{\spec\wedge\softSpec_1'\wedge\ldots\wedge\softSpec_n'}| \mid \forall i:\ \softSpec_i' \in\{\gphi_i,\fgphi_i,\gfphi_i\}\}$.
\end{theorem}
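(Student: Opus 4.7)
The plan is to combine Lemma~\ref{lem:value-as-ltl} with the standard Schewe--Finkbeiner bounded-synthesis size bound. The key observation is that the possible values of $\val\trans{\gphi_1 \wedge \ldots \wedge \gphi_n}$ form a finite subset of $\{0,\ldots,n\}^3$, so among transition systems realizing $\spec$, the maximum value (under the lexicographic order) is attained.

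First, I would fix a witness $\trans_0$ with $\trans_0 \models \spec$ and $v^* \defeq \val{\trans_0}{\gphi_1 \wedge \ldots \wedge \gphi_n}$ equal to this maximum. Applying Lemma~\ref{lem:value-as-ltl} to $\trans_0$ yields an LTL formula $\psi_{v^*} = \softSpec_1' \wedge \ldots \wedge \softSpec_n'$ with each $\softSpec_i' \in \{\gphi_i, \fgphi_i, \gfphi_i, \true\}$ such that $\trans_0 \models \psi_{v^*}$. In particular, the conjoined specification $\spec \wedge \psi_{v^*}$ is realizable.

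Next, I invoke the Schewe--Finkbeiner bound cited just before the theorem: any realizable LTL formula $\varphi$ is realized by some transition system of size at most $\bigl(2^{(|\subf\varphi| + \log|\varphi|)}\bigr)!^2$. Applying this to $\spec \wedge \psi_{v^*}$ yields a transition system $\trans^*$ with $\trans^* \models \spec \wedge \psi_{v^*}$ of the corresponding size. By part (2) of Lemma~\ref{lem:value-as-ltl}, $\trans^* \models \psi_{v^*}$ forces $\val{\trans^*}{\gphi_1 \wedge \ldots \wedge \gphi_n} \geq v^*$, and since $v^*$ is already the maximum over all $\trans \models \spec$, equality holds. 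This establishes condition (1).

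For the size bound (2), observe that replacing any conjunct $\softSpec_i' = \true$ by one of $\gphi_i, \fgphi_i, \gfphi_i$ only adds subformulas and can only increase the length, hence $|\subf{\spec \wedge \psi_{v^*}}| \leq b$ and $|\spec \wedge \psi_{v^*}| \leq b$ for the quantity $b$ defined in the theorem (taking the max over $\softSpec_i' \in \{\gphi_i,\fgphi_i,\gfphi_i\}$). Plugging into the bounded-synthesis estimate gives $|\trans^*| \leq \bigl((2^{(b + \log b)})!\bigr)^2$. The only subtle step is this last bookkeeping on subformula counts; the rest is a direct composition of Lemma~\ref{lem:value-as-ltl} with the known bounded-synthesis bound, which is why I don't expect any real obstacle beyond careful accounting for the $\true$ conjuncts.
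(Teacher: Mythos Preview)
Your proposal is correct and follows essentially the same route as the paper's own proof: define $v^*$ as the maximum attainable value (using finiteness of $\{0,\ldots,n\}^3$), apply Lemma~\ref{lem:value-as-ltl} to obtain $\psi_{v^*}$, invoke the Schewe--Finkbeiner size bound on $\spec\wedge\psi_{v^*}$, and conclude via part~(2) of the lemma. Your explicit handling of the $\true$ conjuncts when bounding the subformula count by $b$ is a detail the paper leaves implicit; the only minor slip is writing ``$|\spec\wedge\psi_{v^*}|\leq b$'' for the formula \emph{length} (since $b$ is a subformula count), but this is the same bookkeeping the paper glosses over and does not affect the argument's structure.
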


The bound above is estimated based on the size of the specifications, using a worst-case bound on the size of the corresponding automata. Given automata for all the specifications $\gphi_i,\fgphi_i$ and $\gfphi_i$, a potentially better bound can be estimated based on the size of these automata.

Lemma~\ref{lem:value-as-ltl} immediately provides a naive synthesis procedure, which searches for an optimal implementation by enumerating possible $\psi_v$ formulas and solving the corresponding realizability questions. The total number of these formulas is $4^n$, where $n$ is the number of soft specifications. The approach that we propose avoids this rapid growth, by reducing the optimization problem to a single MaxSAT instance, and thus it also makes use of  the power of MaxSAT solvers.

\subsection{Automata and Annotations for Soft Safety Specifications}\label{sec:automata-safety}
Let $\gphi$ be an LTL formula, where $\varphi$ is a syntactically safe LTL formula.

The first step in the MaxSAT reduction is the construction of a universal B\"uchi automaton for each soft safety specification $\gphi$ and its modification to incorporate the relaxation of $\gphi$ to $\fgphi$, as we now describe.

\begin{proposition}\label{prop:aut-globally}
Given an LTL formula $\gphi$ where $\varphi$ is syntactically safe, we can construct a universal B\"uchi automaton  $\autB_{\scriptsize \gphi} = (Q_{\scriptsize \gphi},q_0^{\scriptsize \gphi},\delta_{\scriptsize \gphi},F_{\scriptsize \gphi})$ such that 
 $\lang{\autB_{\scriptsize \gphi}} = \{\trans \mid \trans \models \gphi\}$, and $\autB_{\scriptsize \gphi}$ has a unique non-accepting sink state, that is, there exists a unique state $\rej_\varphi \in Q_{\scriptsize \gphi}$ such that 
$F_{\scriptsize \gphi} = Q_{\scriptsize \gphi} \setminus \{\rej_\varphi\}$, and for every $\sigma \in \Sigma$ it holds that $\{q \in Q_{\scriptsize \gphi} \mid (\rej_\varphi,\sigma,q) \in \delta_{\scriptsize \gphi}\}  = \{\rej_\varphi\}$.
\end{proposition}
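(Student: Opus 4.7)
The plan is to leverage the syntactic safety of $\gphi$---which follows from that of $\varphi$, since $\LTLglobally\varphi = \false \LTLrelease \varphi$ introduces no $\LTLuntil$ operator in NNF. A syntactically safe LTL formula defines a safety language, so every violating infinite word has a finite bad prefix, no extension of which satisfies the formula. This is exactly the structural property that allows one to collapse all rejecting behaviour of any automaton for $\gphi$ into a single absorbing sink.

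I would construct $\autB_{\gphi}$ in two steps. First, apply Lemma~\ref{lem:ltl-ucba} to obtain a universal co-B\"uchi automaton $\autA_{\gphi}$ with $\lang{\autA_{\gphi}} = \{\trans \mid \trans \models \gphi\}$. Second, partition the state space of $\autA_{\gphi}$ into \emph{safe} states, from which some accepting continuation exists on every input sequence, and \emph{doomed} states, from which every universal run is forced to visit rejecting states infinitely often. Safety of $\gphi$ implies that a word is in $\lang{\autA_{\gphi}}$ iff no run from the initial state ever reaches a doomed state. I would then merge all doomed states into a single state $\rej_\varphi$, redirect every transition that targets a doomed state to $\rej_\varphi$, and equip $\rej_\varphi$ with only the self-loops $(\rej_\varphi,\sigma,\rej_\varphi)$ for every $\sigma \in \Sigma$. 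Setting $F_{\gphi} \defeq Q_{\gphi} \setminus \{\rej_\varphi\}$ yields the desired universal B\"uchi automaton: since $\rej_\varphi$ is absorbing, a run visits $F_{\gphi}$ infinitely often iff it never enters $\rej_\varphi$.

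The main obstacle is verifying that the merge preserves the language. The inclusion $\lang{\autB_{\gphi}} \subseteq \lang{\autA_{\gphi}}$ is easy: any run of $\autB_{\gphi}$ avoiding $\rej_\varphi$ lifts to a run of $\autA_{\gphi}$ avoiding all doomed states, and is hence accepting. The reverse inclusion relies on the safety characterisation---if $\trans \models \gphi$, then on every trace of $\trans$ no run of $\autA_{\gphi}$ ever enters a doomed state, so the corresponding runs in $\autB_{\gphi}$ never enter $\rej_\varphi$ either. The delicate point is showing that the safe/doomed partition can be computed without breaking universality, and, more concretely, that the tableau-style automaton produced by the standard translation for the syntactically safe formula $\gphi$ already admits such a clean partition, so that no additional determinisation or exponential blow-up beyond Lemma~\ref{lem:ltl-ucba} is introduced by the merge.
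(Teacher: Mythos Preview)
Your plan diverges from the paper's argument and contains a genuine gap at precisely the point you flag as delicate.

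\textbf{What the paper does.} The paper does \emph{not} start from the universal co-B\"uchi automaton of Lemma~\ref{lem:ltl-ucba} for $\gphi$. Instead it invokes a result of Kupferman and Vardi (for syntactically safe formulas) to obtain a \emph{nondeterministic finite automaton} $\autN$ over finite words that accepts at least one bad prefix of every $w\not\models\varphi$, and only bad prefixes. From $\autN$ one gets $\autB_\varphi$ by collapsing the accepting states of $\autN$ into a fresh sink $\rej_\varphi$ and declaring every other state accepting. The ``doomed'' structure is thus supplied directly by the bad-prefix automaton, not extracted after the fact. Finally, $\autB_{\gphi}$ is obtained from $\autB_\varphi$ by adding, at the initial state, a self-loop on every letter that does not already lead to $\rej_\varphi$; this is how the outer $\LTLglobally$ is handled.

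\textbf{Why your doomed-state merge is not sound in general.} Your key step is the claim that, because $\gphi$ is a safety property, a word is accepted by the universal co-B\"uchi automaton $\autA_{\gphi}$ iff no run ever enters a doomed state (a state with empty residual language). This fails for arbitrary UCWs recognising safety languages: universality allows rejection to be distributed across several non-doomed states whose residual languages have empty intersection. Concretely, over $\Sigma=\{a,b\}$ one can build a UCW for the safety language $b\Sigma^\omega$ with states $q_0,q_1,q_2,q_{\mathit{acc}}$, $F=\{q_1,q_2\}$, where on $a$ the initial state branches universally to $q_1$ and $q_2$, with $L_{q_1}=a\Sigma^\omega$ and $L_{q_2}=b\Sigma^\omega$ (each realised by sending the ``good'' letter to $q_{\mathit{acc}}$ and the ``bad'' letter back to $\{q_1,q_2\}$). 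No state is doomed, yet every word starting with $a$ is rejected. Your merge would then yield an automaton accepting all transition systems.

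You are right that the specific tableau construction behind Lemma~\ref{lem:ltl-ucba} may well admit a clean safe/doomed split, but establishing that is a construction-specific argument you have not supplied, and it is exactly what the paper sidesteps by going through the bad-prefix NFA for $\varphi$, where the sink structure is present by definition.
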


From $\autB_{\scriptsize \gphi}$, which has at most $2^{\mathcal{O}(|\varphi|)}$ states, we obtain a universal automaton $\relaxfg(\gphi)$ constructed by redirecting all the transitions leading  to $\rej_\varphi$ to the initial state $q_0^{\scriptsize \gphi}$. Formally, 
$\relaxfg(\gphi) = (Q,q_0,\delta,F)$, where $Q = Q_{\scriptsize \gphi} \setminus \{\rej_\varphi\}$, $q_0 = q_0^{\scriptsize \gphi}$, $F = F_{\scriptsize \gphi}$ and the transition relation is defined as
\[\delta = \left(\delta_{\scriptsize \gphi} \setminus \{(q,\sigma,q') \in \delta_{\scriptsize \gphi}\mid q' = \rej_\varphi\}\right) \cup \{(q,\sigma,q_0) \mid (q,\sigma,\rej_\varphi)\in \delta_{\scriptsize \gphi}\}.\]

\noindent
Let $\Rej(\relaxfg(\gphi)) = \{(q,\sigma,q_0) \in \delta \mid (q,\sigma,\rej_\varphi) \in \delta_{\scriptsize \gphi}\}$ be the set of transitions in $\relaxfg(\gphi)$ that correspond to transitions in $\autB_{\scriptsize \gphi}$ leading to $\rej_\varphi$. 

The next proposition formalizes the property that a transition system $\trans$ is accepted by $\autB_{\scriptsize \gphi}$ iff the run graph of $\relaxfg(\gphi)$ on $\trans$ does not contain an edge corresponding to a transition in $\Rej(\relaxfg(\gphi))$. 
\begin{proposition}\label{prop:rej-trans}
Let $\trans$ be a transition system and let $G = (V,E)$ be the run graph of $\relaxfg(\gphi)$ on $\trans$. Then, $\trans \in\lang{\autB_{\scriptsize \gphi}}$ iff for every  $((s,q),\sigma,(s',q')) \in E$ with $(q,\sigma,q') \in  \Rej(\relaxfg(\gphi))$, $(s,q)$ is not reachable from $(s_0,q_0)$ in $G$.
\end{proposition}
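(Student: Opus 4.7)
The plan is to rephrase membership in $\lang{\autB_{\scriptsize \gphi}}$ as a pure reachability condition on the run graph $G_{\scriptsize \gphi}$ of $\autB_{\scriptsize \gphi}$ on $\trans$, and then match that condition against the one stated for $G$. By Proposition~\ref{prop:aut-globally}, $\rej_\varphi$ is an absorbing non-accepting sink of $\autB_{\scriptsize \gphi}$ with $F_{\scriptsize \gphi} = Q_{\scriptsize \gphi} \setminus \{\rej_\varphi\}$, so any path in $G_{\scriptsize \gphi}$ visits $F_{\scriptsize \gphi}$ infinitely often iff it never reaches a node of the form $(s, \rej_\varphi)$. Consequently, $\trans \in \lang{\autB_{\scriptsize \gphi}}$ iff no node $(s, \rej_\varphi)$ is reachable from $(s_0, q_0)$ in $G_{\scriptsize \gphi}$.

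Next I would compare $G$ and $G_{\scriptsize \gphi}$ edge-by-edge using the definition of $\relaxfg(\gphi)$: the non-$\Rej$ edges of $G$ are exactly the edges of $G_{\scriptsize \gphi}$ between non-$\rej_\varphi$ nodes, while each $\Rej$ edge $((s,q),\sigma,(s',q_0))$ of $G$ corresponds bijectively to an edge $((s,q),\sigma,(s',\rej_\varphi))$ of $G_{\scriptsize \gphi}$. With this correspondence, the ``if'' direction is easy: if no reachable node of $G$ has an outgoing $\Rej$ edge, then no path in $G$ from $(s_0,q_0)$ ever traverses a $\Rej$ edge, so the nodes reachable in $G$ coincide with the non-$\rej_\varphi$ nodes reachable in $G_{\scriptsize \gphi}$; any $G_{\scriptsize \gphi}$-path into $\rej_\varphi$ would then enter $\rej_\varphi$ from a reachable non-$\rej_\varphi$ node, yielding via the bijection an outgoing $\Rej$ edge at a reachable $G$-node, a contradiction, hence $\trans \in \lang{\autB_{\scriptsize \gphi}}$.

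For the ``only if'' direction, given a reachable $(s,q)$ in $G$ with an outgoing $\Rej$ edge to $(s',q_0)$, I would prove by induction on the length of a path in $G$ the invariant that, for every path from $(s_0,q_0)$, either every node on it is reachable in $G_{\scriptsize \gphi}$ (case (a)), or some $\rej_\varphi$-node is already reachable in $G_{\scriptsize \gphi}$ (case (b)); non-$\Rej$ extensions preserve case (a) via the shared non-$\rej_\varphi$ structure, and a $\Rej$ extension from a node witnessed to be reachable in $G_{\scriptsize \gphi}$ delivers case (b) through the corresponding edge into $\rej_\varphi$. Applying the invariant to a $G$-path reaching $(s,q)$ and then taking its $\Rej$ edge to $(s',q_0)$, in either case the outgoing $\Rej$ edge witnesses that a $\rej_\varphi$-node is reachable in $G_{\scriptsize \gphi}$, whence $\trans \notin \lang{\autB_{\scriptsize \gphi}}$. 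The main obstacle I anticipate is exactly this direction: the redirection in $\relaxfg(\gphi)$ of would-be-rejecting transitions back to $q_0$ allows $G$-paths to revisit $q_0$ along edges absent from $G_{\scriptsize \gphi}$, so $G$-paths cannot in general be lifted verbatim to $G_{\scriptsize \gphi}$; the inductive argument sidesteps this by converting the first $\Rej$ edge encountered on a $G$-path into a direct witness of a reachable $\rej_\varphi$-node in $G_{\scriptsize \gphi}$, avoiding any attempt to reproduce the whole $G$-path in $G_{\scriptsize \gphi}$.
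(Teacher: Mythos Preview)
Your proposal is correct and follows essentially the same strategy as the paper's proof: both reduce acceptance by $\autB_{\scriptsize \gphi}$ to (non-)reachability of $\rej_\varphi$ in the run graph $G_{\scriptsize \gphi}$ of $\autB_{\scriptsize \gphi}$ on $\trans$, and both exploit the edge-level correspondence between $G$ and $G_{\scriptsize \gphi}$ that comes directly from the definition of $\relaxfg(\gphi)$. The only notable difference is stylistic: for the ``only if'' direction the paper simply assumes without loss of generality that the $G$-path to $(s,q)$ contains no $\Rej$ edges (equivalently, it looks at the prefix up to the first $\Rej$ edge) and observes that this prefix is literally a path in $G_{\scriptsize \gphi}$, whereas you package the same observation as an inductive invariant; your case (b) is exactly what the paper's WLOG step avoids having to consider, and your case (a) at the first $\Rej$ edge is precisely the paper's argument.
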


We define an annotation function $\anotfg : S \times Q \to \natsbot$ for a transition system $\trans = (S,s_0,\tau)$ and the automaton $\relaxfg(\gphi) = (Q,q_0,\delta,F)$. The value of $\anotfg$ for the initial node of the run graph determines whether $\trans$ satisfies $\gphi$ or $\fgphi$.
A function $\anotfg : S \times Q \to \natsbot$ is \emph{\fgvalid}\ annotation if it is such that
\begin{itemize}
\item[\emph{(1)}] $\anotfg(s_0,q_0) \neq \bot$, i.e., the pair of initial states is labelled with a number, and 
\item[\emph{(2)}] if $\anotfg(s,q) \neq \bot$, then for every edge $((s,q),\sigma,(s',q'))$ in the run graph of $\relaxfg(\gphi)$ on $\trans$ we have that $\anotfg(s',q') \neq \bot$, and
\begin{itemize}
\item if $(q,\sigma,q') \in\Rej(\relaxfg(\gphi))$, then $\anotfg(s',q') >  \anotfg(s,q)$, and 
\item if $(q,\sigma,q') \not\in\Rej(\relaxfg(\gphi))$, then $\anotfg(s',q') \geq  \anotfg(s,q)$.
\end{itemize}
\end{itemize}

This definition  guarantees that if $\anotfg$ is a bounded \fgvalid\ annotation for $\trans$ and $\relaxfg(\gphi)$, then the number of rejecting edges in each path in the run graph of $\relaxfg(\gphi)$ on $\trans$ is finite, which implies $\trans \models \fgphi$ as stated below.\looseness=-1

\begin{proposition}\label{prop:anotfg}
Let $\trans = (S,s_0,\tau)$ be a finite-state transition system, and $G = (V,E)$ be the run graph of  $\relaxfg(\gphi)$ on $\trans$. Then, $\trans \models \fgphi$ if and only if there exists a \fgvalid\ $|\trans|$-bounded annotation for $\trans$ and $\relaxfg(\gphi)$. 
\end{proposition}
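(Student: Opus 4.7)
The plan is to first establish a bridge between the semantic condition $\trans \models \fgphi$ and the combinatorial property that every infinite path in $G$ from $(s_0,q_0)$ contains only finitely many edges in $\Rej(\relaxfg(\gphi))$. The key intermediate claim is that, for every trace $w$ of $\trans$, $w \models \fgphi$ iff every run of $\relaxfg(\gphi)$ on $w$ starting at $q_0$ has only finitely many rejecting edges. The ``only if'' direction uses that $w^K \models \gphi$ implies $w^k \models \gphi$ for every $k \geq K$: if a run had infinitely many rejecting edges, some rejecting edge would occur at a position $\geq K$, leaving the run at $q_0$ and continuing on a suffix of $w^K$; by (the word-level analogue of) Proposition~\ref{prop:rej-trans} that continuation contains no further rejecting edges, a contradiction. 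The ``if'' direction is contrapositive: if $w \not\models \fgphi$, then $w^K \not\models \gphi$ for every $K$, so Proposition~\ref{prop:rej-trans} yields a finite path from $q_0$ on a prefix of $w^K$ that ends with a rejecting edge; splicing successive such paths together produces a run of $\relaxfg(\gphi)$ on $w$ from $q_0$ with infinitely many rejecting edges.

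For the $(\Rightarrow)$ direction, assume $\trans \models \fgphi$ and set $\anotfg(s,q) \defeq \bot$ when $(s,q)$ is not reachable from $(s_0,q_0)$ in $G$, and otherwise let $\anotfg(s,q)$ be the supremum, over finite paths in $G$ from $(s_0,q_0)$ to $(s,q)$, of the number of edges on the path that lie in $\Rej(\relaxfg(\gphi))$. Once the supremum is finite, validity is immediate: appending any single edge changes the rejecting-edge count by $0$ or $+1$, giving exactly the monotone and strict-monotone conditions required. The crucial step is the bound $\anotfg(s,q) \leq |\trans|$. Suppose for contradiction some path from $(s_0,q_0)$ to $(s,q)$ has more than $|S|$ rejecting edges. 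Since every rejecting edge ends at a node of the form $(\cdot, q_0)$, pigeonhole forces two such endpoints $(s', q_0)$ to coincide, so $G$ contains a cycle through $(s', q_0)$ that includes at least one rejecting edge. Iterating this cycle indefinitely turns the prefix into an infinite path in $G$ with infinitely many rejecting edges, contradicting $\trans \models \fgphi$ via the intermediate claim.

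For the $(\Leftarrow)$ direction, suppose a $|\trans|$-bounded \fgvalid\ annotation $\anotfg$ exists. On any infinite path from $(s_0,q_0)$ in $G$ the annotation takes values in $\{0,\dots,|\trans|\}$, is non-decreasing, and increases strictly on edges in $\Rej(\relaxfg(\gphi))$, so such edges occur at most $|\trans|$ times on the path. By the intermediate claim every trace of $\trans$ satisfies $\fgphi$, hence $\trans \models \fgphi$.

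The main obstacle is the ``if'' half of the intermediate claim, which requires gluing an unbounded sequence of finite $\varphi$-violation witnesses into a single infinite run on $w$. The design of $\relaxfg(\gphi)$ makes this clean: since rejecting edges reset the automaton to $q_0$, witnesses extracted from successive suffixes concatenate seamlessly. The same ``reset to $q_0$'' property is also what licenses the pigeonhole bound in the forward direction.
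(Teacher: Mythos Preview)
Your proof is correct and follows essentially the same route as the paper's. The only organizational difference is that you factor out the word-level equivalence ``$w \models \fgphi$ iff every run of $\relaxfg(\gphi)$ on $w$ has finitely many rejecting edges'' as an explicit intermediate claim and make the pigeonhole step for the $|\trans|$ bound explicit, whereas the paper argues both directions more inline (and leaves the pigeonhole implicit in the sentence ``Since the number of distinct nodes in $G$ of the form $(s,q_0)$ is $|S|$, we obtain an upper bound of $|S|$'').
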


If the run graph of $\relaxfg(\gphi)$ on $\trans$ contains a reachable edge which belongs to $\relaxfg(\gphi)$, then we can conclude that $\trans\not\models\gphi$. However, if each infinite path in the run graph contains only a finite number of occurrences of edges in $\relaxfg(\gphi)$, then $\trans\models\fgphi$. In particular, we have that if $\anotfg(s_0,q_0) \in \nats$, then $\trans \models \fgphi$, and if $\anotfg$ is $c$-bounded and $\anotfg(s_0,q_0) = c$, then $\trans \models \gphi$.

This property of $\anotfg$ allows us to capture the satisfaction of $\gphi$ and of $\fgphi$ with soft clauses for the same annotation function in the MaxSAT formulation.

\subsection{MaxSAT Encoding of Bounded Maximum Realizability}\label{sec:maxsat}

Let $\autA = (Q,q_0,\delta,F)$ be a universal co-B\"uchi automaton for the LTL formula $\spec$.

For each syntactically safe formula $\gphij$, $j \in\{1,\ldots,n\}$, we consider two universal automata:
the universal automaton $\autB_j = \relaxfg(\gphi_j)= (Q_j,q_0^j,\delta_j,F_j)$ constructed as described in Section~\ref{sec:automata-safety} and 
a universal co-B\"uchi automaton $\autA_j = (\widehat Q_j,\widehat q_0^j,\widehat\delta_j,\widehat F_j)$ for the formula $\gfphi_j$. 
Given a bound $b$ on the size of the sought transition system, we encode the bounded maximum realizability problem as a MaxSAT problem with the following sets of variables and constraints.

{\bf Variables:} The MaxSAT formulation includes the variables from the SAT formulation of the bounded synthesis problem, which represent the sought transition system $\trans$ and the sought valid annotation of the run graph of $\autA$ on $\trans$. Additionally, it includes variables for representing the annotations $\anotfg_j$ and $\anot_j$ for $\autB_j$ and $\autA_j$ respectively, similarly to $\anot$ in the SAT encoding. More precisely, the annotations for $\anotfg_j$ and $\anot_j$ are represented respectively by variables $\anotfgbj_{s,q}$ and $\anotfgnj_{s,q}$  where $s\in S$ and $q \in Q_j$, and $\anotbj_{s,q}$ and $\anotnj_{s,q}$ where $s\in S$ and $q \in \widehat Q_j$.

The set of constraints includes $C_\tau$ and $C_\anot$ from the SAT formulation as hard constraints, as well as the following constraints for the new annotations.

{\bf Hard constraints for valid annotations:}
For each $j=1,\ldots,n$, let
\begin{align*}
C_\anotfg^{j}\defeq\bigwedge_{q,q' \in Q_j}\bigwedge_{s,s' \in S}\bigwedge_{\inpval \in \ialphabet}
\Big( &
\big(
\anotfgbj_{s,q} \wedge
\delta^j_{s,q,\inpval,q'}\wedge 
\tau_{s,\inpval,s'}
\big) \rightarrow 
\succa_{\anotfg}^j(s,q,s',q',\inpval) 
\Big),
\\
C_\anot^{j}\defeq\bigwedge_{q,q' \in \widehat Q_j}\bigwedge_{s,s' \in S}\bigwedge_{\inpval \in \ialphabet}
\Big( &
\big(
\anotbj_{s,q} \wedge
\widehat\delta^j_{s,q,\inpval,q'}\wedge 
\tau_{s,\inpval,s'}
\big) \rightarrow 
\succa_{\anot}^j(s,q,s',q',\inpval)
\Big),
\end{align*}
\begin{align*}
\text{where }\succa_\anotfg^j(s,q,s',q',\inpval) \defeq \anotfgbj_{s',q'} \wedge &
\big(\rej^j(s,q,q',\inpval) \rightarrow \anotfgnj_{s',q'} > \anotfgnj_{s,q}\big) \wedge
\\&
\big(\neg\rej^j(s,q,q',\inpval) \rightarrow \anotfgnj_{s',q'} \geq \anotfgnj_{s,q}\big)
,
 \end{align*}
and $\rej^j(s,q,q',\inpval)$ is a formula over $o_{s,\inpval}$ obtained from $\Rej(\autB_j)$.
The formula $\succa_{\anot}^j(s,\widehat q,s',\widehat q',\inpval)$ is analogous to 
$\succa_{\anot}(s,q,s',q',\inpval)$ defined in Section~\ref{sec:bounded-synthesis}.

{\bf Soft constraints for valid annotations:}
For each $j=1,\ldots,n$ we define
\[
\begin{array}{lllll}
\soft_{\tiny\LTLglobally}^{j} & \quad \defeq \quad &  \anotfgb_{s_0,q_0^j} \wedge (\anotfgn_{s_0,q_0^j} = b) & \qquad & \text{with weight }1,\\
\soft_{\tiny\LTLfinally\LTLglobally}^{j} & \quad \defeq \quad & \anotfgb_{s_0,q_0^j} & \quad & \text{with weight }n, \text{ and }\\
\soft_{\tiny\LTLglobally\LTLfinally}^{j} & \quad \defeq \quad & \anotfgb_{s_0,q_0^j} \vee \anotb_{s_0,\widehat  q_0^j} & \quad & \text{with weight }n^2,\\
\end{array}
\]
where $b \in \nats$ is the bound on the size of the transition system.

The definition of the soft constraints guarantees that $\trans \models \gphi_j$ if and only if there exist corresponding annotations that satisfy all three of the soft constraints for $\gphi_j$. Similarly, if $\trans \models \fgphi_j$, then $\soft_{\tiny\LTLfinally\LTLglobally}^{j}$ and $\soft_{\tiny\LTLglobally\LTLfinally}^{j}$ can be satisfied.

The weights are selected in a way that reflects the ordering of transition systems with respect to their satisfaction of $\gphi_1 \wedge \ldots \wedge \gphi_n$, as stated below.

\begin{lemma}\label{lem:encoding-weights}
Let $\trans'$ and $\trans''$ be transition systems such that $\trans' \models \spec$ and $\trans'' \models \spec$. Let $a'$ and $a''$ be variable assignments satisfying the constraint system, such that $a'$ is an optimal assignment consistent with $\trans'$, and $a''$ is an optimal assignment consistent with $\trans''$. Furthermore, let $w'$ and $w''$ be the sums of the weights of the soft clauses satisfied in $a'$ and $a''$ respectively. Then, it holds that
$\val{\trans'}{\gphi_1 \wedge \ldots \wedge \gphi_n} < \val{\trans''}{\gphi_1 \wedge \ldots \wedge \gphi_n} \text{ iff } w' < w''.$
\end{lemma}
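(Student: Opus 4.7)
The plan is to show that for every transition system $\trans$ with $\trans \models \spec$, the maximum total weight of soft clauses satisfied by any assignment consistent with $\trans$ equals
\[
W_\trans \;=\; S_1^\trans \,+\, n\cdot S_2^\trans \,+\, n^2\cdot S_3^\trans,
\]
where $(v_{j,1},v_{j,2},v_{j,3}) = \val\trans{\gphi_j}$, $S_k^\trans = \sum_{j=1}^n v_{j,k}$, and $\val\trans{\gphi_1 \wedge \ldots \wedge \gphi_n} = (S_3^\trans,S_2^\trans,S_1^\trans)$. The lemma then reduces to showing that the $W$-ordering agrees with the lexicographic ordering on these value tuples.

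First I would analyze, for each $j$ independently, which of the three soft clauses $\soft_{\LTLglobally}^{j}$, $\soft_{\fg}^{j}$, $\soft_{\gf}^{j}$ can be satisfied jointly with the hard constraints. By Proposition~\ref{prop:anotfg}, a \fgvalid\ annotation with $\anotfgbj_{s_0,q_0^j}$ true exists iff $\trans \models \fgphi_j$, so $\soft_{\fg}^{j}$ contributes its weight $n$ exactly when $v_{j,2}=1$. The stronger clause $\soft_{\LTLglobally}^{j}$ additionally forces $\anotfgnj_{s_0,q_0^j}=b$; combined with the strict-increase requirement along $\Rej(\relaxfg(\gphi_j))$ transitions, the $b$-boundedness of the annotation, and Proposition~\ref{prop:rej-trans}, this precludes any reachable rejecting edge and is therefore satisfiable iff $\trans \models \gphi_j$, contributing $1$ when $v_{j,1}=1$. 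The clause $\soft_{\gf}^{j}$ is a disjunction whose first disjunct is already satisfied in the $\fg$ case, and whose second, by the standard bounded-synthesis correctness applied to $\autA_j$, is satisfiable iff $\trans \models \gfphi_j$; hence it contributes $n^2$ exactly when $v_{j,3}=1$. Since the annotation variables for different $j$ are disjoint and the implication chain $\gphi_j \Rightarrow \fgphi_j \Rightarrow \gfphi_j$ guarantees that the per-$j$ choices are jointly realizable, summing over $j$ yields the formula for $W_\trans$.

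The remaining step is an arithmetic claim: for tuples with $0 \leq S_1 \leq S_2 \leq S_3 \leq n$, which the nesting of the three formulas enforces, the quantity $W = S_1 + nS_2 + n^2 S_3$ refines the lex order on $(S_3,S_2,S_1)$. I expect this to be the main obstacle, because without the monotonicity $S_1 \leq S_2 \leq S_3$ the weight $n^2$ would not suffice to dominate the lower coordinates (each of $S_2,S_1$ could otherwise reach $n$). I would split on the three lex cases. When $S_3'' > S_3'$, the jump $n^2(S_3''-S_3') \geq n^2$ outweighs every loss below: using $S_2',S_1' \leq S_3' \leq S_3''-1 \leq n-1$, one obtains $n\,S_2' + S_1' \leq (n+1)(S_3''-1) \leq n^2-1$, so $W'' - W' \geq 1$. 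The case $S_3'=S_3''$, $S_2''>S_2'$ is analogous one order lower: $n(S_2''-S_2') \geq n$ while $S_1' - S_1'' \leq S_1' \leq S_2'' - 1 \leq n - 1$, hence $W'' - W' \geq 1$. The third case $S_3'=S_3''$, $S_2'=S_2''$, $S_1''>S_1'$ is immediate. Symmetric reasoning yields the reverse implications, completing the equivalence $\val{\trans'}{\gphi_1 \wedge \ldots \wedge \gphi_n} < \val{\trans''}{\gphi_1 \wedge \ldots \wedge \gphi_n} \Leftrightarrow W' < W''$.
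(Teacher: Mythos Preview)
Your proposal is correct and follows essentially the same approach as the paper's proof: first establish that the optimal weight consistent with $\trans$ equals $S_1^\trans + n\,S_2^\trans + n^2\,S_3^\trans$, and then do a three-way case split on the lexicographic order using the chain $S_1 \leq S_2 \leq S_3$ (coming from $\gphi_j \Rightarrow \fgphi_j \Rightarrow \gfphi_j$) to bound the lower-order differences by $n-1$. Your treatment is in fact more explicit than the paper's on the first step, invoking Propositions~\ref{prop:rej-trans} and~\ref{prop:anotfg} to justify why each soft clause is satisfiable exactly when the corresponding $v_{j,k}=1$; the paper simply asserts this from optimality of the assignment.
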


The next results establish the size and correctness of the encoding.

\begin{theorem}\label{thm:encoding-correctness}
Let $\autA$ be a given co-B\"uchi automaton for $\varphi$, and for each $j \in \{1,\ldots,n\}$, let $\autB_j = \relaxfg(\gphi_j)$ be the universal automaton for $\gphi_j$ constructed as in Section~\ref{sec:automata-safety}, and let $\autA_j$ be a universal co-B\"uchi automaton for $\gfphi_j$. 
The constraint system for bound $b \in \nats$ is satisfiable if and only if there exists an implementation $\trans$ with $|\trans| \leq b$  such that $\trans \models \varphi$. Furthermore, from the optimal satisfying assignment to the variables $\tau_{s,\inpval,s'}$ and $o_{s,\inpval}$, one can extract a transition system $\trans^*$ such that for every transition system $\trans$ with $|\trans| \leq b$ and $\trans \models \varphi$ it holds that $\val\trans{\gphi_1 \wedge \ldots \wedge \gphi_n} \leq \val{\trans^*}{\gphi_1 \wedge \ldots \wedge \gphi_n}$.
\end{theorem}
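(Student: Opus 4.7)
The plan is to split the statement into two sub-claims: (a) the satisfiability biconditional between the constraint system and the existence of a bounded implementation $\trans \models \varphi$, and (b) optimality of the extracted $\trans^*$ with respect to $\val{\cdot}{\gphi_1 \wedge \ldots \wedge \gphi_n}$. Part (a) is essentially the Schewe--Finkbeiner SAT encoding of bounded synthesis for $\autA$, augmented with the additional hard constraints $C_\anotfg^j$ and $C_\anot^j$ introduced for the soft specifications. For the forward direction I observe that any satisfying assignment yields an input-enabled $\trans$ of size at most $b$ via the variables $\tau_{s,\inpval,s'}$ and $o_{s,\inpval}$, and $C_\anot$ supplies a valid $(b \cdot |F|)$-bounded annotation for $\autA$ on $\trans$, so $\trans \in \lang\autA$ and hence $\trans \models \varphi$. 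For the reverse direction, given $\trans$ with $|\trans| \leq b$ and $\trans \models \varphi$, I set the $\tau$ and $o$ variables according to $\trans$, populate the variables $\anotb_{s,q}$ and $\anotn_{s,q}$ using a valid bounded annotation guaranteed by bounded synthesis, and set every $\anotfgbj_{s,q}$ and $\anotbj_{s,q}$ to $\falseval$. Since $C_\anotfg^j$ and $C_\anot^j$ are implications whose antecedents include these Boolean reachability flags, they are vacuously satisfied.

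For part (b), let $a^*$ be an optimal satisfying assignment of the full constraint system, extract $\trans^*$ from its $\tau$ and $o$ variables, and let $w^*$ be its total satisfied soft-clause weight. By the forward direction of part (a), $\trans^* \models \varphi$. Since $a^*$ is consistent with $\trans^*$ and globally optimal, it is in particular an optimal assignment consistent with $\trans^*$. For any $\trans$ with $|\trans| \leq b$ and $\trans \models \varphi$, part (a) produces at least one satisfying assignment consistent with $\trans$; pick one of maximal weight $\tilde w$ among those. Global optimality of $a^*$ gives $w^* \geq \tilde w$, and Lemma~\ref{lem:encoding-weights} applied to $\trans^*$ and $\trans$ then yields $\val\trans{\gphi_1 \wedge \ldots \wedge \gphi_n} \leq \val{\trans^*}{\gphi_1 \wedge \ldots \wedge \gphi_n}$, as required.

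The main obstacle is the reverse direction of part (a): one must verify that the hard constraints $C_\anotfg^j$ and $C_\anot^j$, which are added solely to enable the soft-specification reasoning, do not accidentally filter out any transition system that merely satisfies $\varphi$. This reduces to inspecting the explicit form of these constraints in Section~\ref{sec:maxsat} and confirming that each conjunct is guarded by a Boolean reachability variable, so that assigning all such variables to $\falseval$ discharges the implication regardless of the accompanying numeric annotations. A secondary subtlety is that in part (b) the assignment $a^*$ is only globally optimal rather than a priori declared ``optimal consistent with $\trans^*$''; but since $a^*$ is consistent with $\trans^*$, any assignment beating it in weight while still consistent with $\trans^*$ would contradict global optimality, so the two notions coincide and Lemma~\ref{lem:encoding-weights} can be applied as a black box.
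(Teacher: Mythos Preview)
Your proof is correct and follows essentially the same approach as the paper: establish the satisfiability biconditional via the standard bounded-synthesis argument plus the observation that the extra hard constraints $C_\anotfg^j$ and $C_\anot^j$ become vacuous when all $\anotfgbj_{s,q}$ and $\anotbj_{s,q}$ are set to $\falseval$, then derive optimality of $\trans^*$ from global optimality of $a^*$ together with Lemma~\ref{lem:encoding-weights}. If anything, you are slightly more careful than the paper in explicitly choosing an \emph{optimal} assignment consistent with the competing $\trans$ before invoking Lemma~\ref{lem:encoding-weights}, which the paper's proof leaves implicit.
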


\begin{proposition}\label{prop:encoding-size}
Let $\autA$ be a given co-B\"uchi automaton for $\varphi$, and for each $j \in \{1,\ldots,n\}$, let $\autB_j = \relaxfg(\gphi_j)$ be the universal B\"uchi automaton for $\gphi_j$ constructed as in Section~\ref{sec:automata-safety}, and let $\autA_j$ be a universal co-B\"uchi automaton for $\gfphi$. 
The constraint system for bound $b \in \nats$ has weights in $\mathcal{O}(n^2)$,
\[
\begin{array}{l}
\mathcal{O}\Big(
(b^2 + b \cdot |\outv|)\cdot 2^{|\inpv|} +  
b \cdot |Q|\cdot (1+ \log(b \cdot |Q|)) +\\ 
\phantom{\mathcal{O}(}\sum_{j=1}^n\big(b \cdot |Q_j| (1 + \log(b\cdot |Q_j|))\big)
+ \sum_{j=1}^n\big(b \cdot |\widehat Q_j| (1 + \log(b\cdot |\widehat Q_j|))\big)
\Big)
\end{array}\]
variables, and its size (before conversion to CNF) is
\[
\begin{array}{l}
\mathcal{O}\Big(
|Q|^2 \cdot b^2 \cdot 2^{|\inpv|} \cdot (d + \log(b\cdot |Q|)) + \\
\phantom{\mathcal{O}(}\sum_{j=1}^n\big(|Q_j|^2 \cdot b^2 \cdot 2^{|\inpv|} \cdot (d_j + r_j + \log(b\cdot |Q_j|))\big) +
\\
\phantom{\mathcal{O}(}\sum_{j=1}^n\big(|\widehat Q_j|^2 \cdot b^2 \cdot 2^{|\inpv|} \cdot (\widehat d_j + \log(b\cdot |\widehat Q_j|))\big)
\Big), 
\end{array}\]
\[
\begin{array}{lllllll}
\text{ where} &d &=& \max_{s,q,\inpval,q'}|\delta_{s,q,\inpval,q'}|,&
d_j &=& \max_{s,q,\inpval,q'}|\delta_{s,q,\inpval,q'}^j|,\\
&\widehat d_j &=& \max_{s,q,\inpval,q'}|\widehat \delta_{s,q,\inpval,q'}^j|, \text{ and } &
r_j &=& \max_{s,q,\inpval,q'}|\rej^j(s,q,q',\inpval)|.
\end{array}
\]
\end{proposition}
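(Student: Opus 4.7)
The plan is a direct accounting of the components defined in Section~\ref{sec:maxsat}. The weight bound is immediate: the only weights appearing among the soft clauses $\soft_{\tiny\LTLglobally}^{j}$, $\soft_{\tiny\LTLfinally\LTLglobally}^{j}$, $\soft_{\tiny\LTLglobally\LTLfinally}^{j}$ are $1$, $n$, and $n^2$, so the maximum weight is $\mathcal{O}(n^2)$. The overall strategy is then to go variable-group by variable-group, then constraint-group by constraint-group, and sum up.

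For the variable count, I would first handle the transition-system part: the variables $\tau_{s,\inpval,s'}$ contribute $b^2 \cdot 2^{|\inpv|}$ Boolean variables and the output variables $o_{s,\inpval}$ contribute $b\cdot |\outv| \cdot 2^{|\inpv|}$, giving the summand $(b^2 + b\cdot|\outv|)\cdot 2^{|\inpv|}$. Next, for the annotation of the main co-B\"uchi automaton $\autA$: the reachability flags $\anotb_{s,q}$ contribute $b\cdot|Q|$ variables, and the numeric annotations $\anotn_{s,q}$ use $\lceil\log(b\cdot|F|)\rceil$ bits each, so the total is $b\cdot|Q|\cdot(1+\log(b\cdot|Q|))$ after using $|F|\leq|Q|$. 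The annotations $\anotfgbj,\anotfgnj$ for each $\autB_j=\relaxfg(\gphi_j)$ and $\anotbj,\anotnj$ for each $\autA_j$ are counted the same way, yielding the two sums $\sum_j b\cdot|Q_j|(1+\log(b\cdot|Q_j|))$ and $\sum_j b\cdot|\widehat Q_j|(1+\log(b\cdot|\widehat Q_j|))$.

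For the formula size (before CNF conversion), I would bound each conjunct of $C_\tau$, $C_\anot$, $C_\anotfg^j$, $C_\anot^j$, and the soft clauses separately. $C_\tau$ is $b\cdot 2^{|\inpv|}$ disjunctions of $b$ literals, which is dominated by later terms. In $C_\anot$, the outer conjunction ranges over $(q,q',s,s',\inpval)\in Q^2\times S^2\times\ialphabet$, contributing a factor $|Q|^2\cdot b^2\cdot 2^{|\inpv|}$; each conjunct contains the transition formula $\delta_{s,q,\inpval,q'}$ of size at most $d$, together with the reachability/comparison subformula $\succa_\anot$ whose size is $\mathcal{O}(\log(b\cdot|Q|))$ due to the magnitude comparison on $\log(b\cdot|Q|)$-bit numbers. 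Multiplying gives the first summand. For $C_\anotfg^j$, the same counting applies over $Q_j^2\times S^2\times\ialphabet$, but each conjunct additionally contains the rejection formula $\rej^j(s,q,q',\inpval)$ of size at most $r_j$ inside $\succa_\anotfg^j$, adding the $r_j$ term. For $C_\anot^j$ the counting mirrors $C_\anot$ but using $\widehat Q_j$, $\widehat d_j$. The soft clauses $\soft_\star^j$ have constant size each and are absorbed into the sums.

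The step I expect to require the most care, rather than any real difficulty, is the size bound for $\succa_\anot$, $\succa_\anotfg^j$, and $\succa_\anot^j$: these contain the arithmetic comparisons $\anotn_{s',q'} \geq \anotn_{s,q}$ and $\anotn_{s',q'} > \anotn_{s,q}$ between $\lceil\log(b\cdot|Q|)\rceil$-bit (respectively $\lceil\log(b\cdot|Q_j|)\rceil$- or $\lceil\log(b\cdot|\widehat Q_j|)\rceil$-bit) numeric annotations, which I would encode by the standard bit-wise comparison of linear size in the number of bits. This justifies the additive $\log(b\cdot|Q|)$, $\log(b\cdot|Q_j|)$, and $\log(b\cdot|\widehat Q_j|)$ terms in each summand. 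Adding up all the contributions and absorbing the $C_\tau$ and soft-clause terms into the dominant automaton-indexed sums yields exactly the bound claimed in the proposition.
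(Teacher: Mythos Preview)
Your proposal is correct and follows essentially the same approach as the paper's own proof: both are straightforward bookkeeping arguments that enumerate the variable groups (transition system, main annotation, and the per-$j$ annotations for $\autB_j$ and $\autA_j$) and the constraint groups ($C_\tau$, $C_\anot$, $C_\anotfg^j$, $C_\anot^j$, and the soft clauses), then sum. If anything, your write-up is slightly more explicit than the paper's, since you spell out why the $\succa$ formulas contribute the additive $\log$ terms via bit-wise comparison and why the $C_\tau$ and soft-clause contributions are absorbed.
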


\section{Experimental Evaluation}\label{sec:experiments}
We implemented the proposed approach to maximum realizability in Python 2.7. For the LTL to automata translation our code calls Spot~\cite{Duret-LutzLFMRX16} version 2.2.4. MaxSAT instances are solved by Open-WBO~\cite{MartinsML14} version 2.0. We evaluated our method on instances of two examples. Each experiment was run on machine with 2.3 GHz Intel Xeon E5-2686 v4 as processor and 16 GiB of memory. While the processor is quad-core, only a single core was used. We set a time-out of 2 hours.

\vspace{-.3cm}
\subsubsection{Robotic Navigation.}
We applied our method to the strategy synthesis for a robotic museum guide. The robot has to give a tour of the exhibitions in a specific order, which constitutes the hard specification. Preferably, it also avoids certain locations, such as the library, or the passage when it is occupied. These preferences are encoded in the soft specifications. There is one input variable that designates the occupancy of the passage, and eight output variables defining the position of the robot. 
The full set of specifications is given in Appendix~\ref{sec:robot-nav-appx}.

Table~\ref{tab:inst-robot} summarizes the results. With implementation bound of 8, the hard specification is realizable and a partial satisfaction of soft specifications is achieved. This strategy always selects the passage to transition from Exhibition 1 to Exhibition 2 and hence, avoids the library. It also violates the requirement of not entering the staff's office once, to acquire access to Exhibition 2. With implementation bound 10, a better strategy (with higher objective value) that alternates between entering the library and the passage is synthesized. This strategy is actually the optimal one with respect to the defined quantitative metric.

\begin{table*}[b!]\centering\footnotesize
	\begin{tabular*}{1\linewidth}{ccccccccccccccc}
		\toprule
		&& \multicolumn{3}{c}{Encoding} && \multicolumn{3}{c}{Solution} && \multicolumn{5}{c}{Time (s)} \\
		\cline{3-5} \cline{7-9} \cline{11-15}
		$|\trans|$ & \phantom{00} & \# vars & \phantom{0} & \# clauses & \phantom{00} & sat. & \phantom{0} & $\Sigma weights$ & \phantom{00} & Spot & \phantom{0} & Open-WBO & \phantom{0} & enc.+solve \\
		\midrule
		2 && 2837 && 18974 && UNSAT && 0 (39) && 0.61 && 0.0088 && 0.10 \\
		4 && 14573 && 98152 && UNSAT && 0 (39) && 0.61 && 0.057 && 0.46 \\
		6 && 40857 && 271474 && UNSAT && 0 (39) && 0.61 && 2.54 && 3.7 \\
		8 && 72057 && 482414 && SAT && 25 (39) && 0.61 && 2853 && 2856 \\
		10 && 135839 && 894892 && SAT && 30 (39) && 0.61 && time-out && time-out \\
		\bottomrule
	\end{tabular*}
	\caption{Results of applying synthesis with maximum realizability to the robotic navigation example, with different bounds on implementation size $|\trans|$. We report on the number of variables and clauses in the encoding, the satisfiability of hard constraints, the value (and bound) of the MaxSAT objective function, the running times of Spot and Open-WBO, and the time of the solver plus the time for generating the encoding.}
	\label{tab:inst-robot}
\end{table*}

\vspace{-.3cm}
\subsubsection{Power Distribution Network.}
We consider the problem of dynamic reconfiguration of power distribution networks. A power network consists of a set $\supplies$ of power supplies (generators) and a set $\loads$ of loads (consumers). The network is a bipartite graph with edges between supplies and loads, where each supply is connected to multiple loads and each load is connected to multiple supplies. Each power supply has an associated capacity, which determines how many loads it can power at a given point in time. Thus, it is possible that not all loads can be powered all the time. Some loads are critical and must be powered continuously, while others are not and should be powered when possible. Some loads can be initializing, meaning they must be powered only initially for several steps. Additionally, power supplies can become faulty during operation, which necessitates dynamic network reconfiguration.

\begin{table*}[b!]\centering\footnotesize

\begin{tabular*}{1\linewidth}{@{}c|ccccccccccccccccccc@{}}
\toprule
&& &\multicolumn{5}{c}{Network} &&\multicolumn{5}{c}{Load characterization} && \multicolumn{5}{c}{Specifications} \\
\cline{4-8} \cline{10-14} \cline{16-20}
& \begin{tabular}{c@{}} Instance \\ \# \end{tabular} & \phantom{ }\phantom{ }\phantom{ } & $|\supplies|$ & \phantom{} & $|\loads|$ & \phantom{} &  $\caps$ & \phantom{ }\phantom{ }\phantom{ } & crit. & \phantom{} & non-crit. & \phantom{} & init. & \phantom{ } \phantom{ }\phantom{ } \phantom{} & $|\inpv|$ & \phantom{} & $|\outv|$ & \phantom{} & \begin{tabular}{@{}c@{}} \# Soft \\ \; spec. \end{tabular}\\
\midrule
fully     & 1 && 3 && 3 &&  1 && 1 && 2 && 0 &&  2 && 9 && 2 \\
connected, & 2 && 3 && 6 &&  2 && 2 && 4 && 0 &&  2 && 18 && 4 \\
switching & 3 && 3 && 3 &&  1 && 0 && 2 && 1 &&  2 && 9 && 2 \\
allowed & 4 && 3 && 6 &&  2 && 1 && 4 && 1 &&  2 && 18 && 4 \\
\midrule
sparse,&5 && 4 && 2 &&  1 && 1 && 1 && 0 &&  3 && 4 && 1 \\
switching&6 && 4 && 4 &&  1 && 1 && 3 && 0 &&  3 && 8 && 3 \\
allowed&7 && 4 && 6 &&  1 && 1 && 5 && 0 &&  3 && 12 && 5 \\
&8 && 4 && 8 &&  1 && 1 && 7 && 0 &&  3 && 16 && 7 \\
\midrule
sparse,&9 && 4 && 2 &&  1 && 1 && 1 && 0 &&  3 && 4 && 5 \\
switching&10 && 4 && 4 &&  1 && 1 && 3 && 0 &&  3 && 8 && 11 \\
restricted&11 && 4 && 6 &&  1 && 1 && 5 && 0 &&  3 && 12 && 17 \\
&12 && 4 && 8 &&  1 && 1 && 7 && 0 &&  3 && 16 && 23 \\
\bottomrule
\end{tabular*}
\caption{Power distribution network instances. 
 An instance is determined by the number supplies $|\supplies|$, the number of loads $|\loads|$, the capacity of supplies $\caps$, the number of critical, non-critical and initializing loads. We also show the number of input $|\inpv|$ and output $|\outv|$ propositions and the number of soft specifications.}
\label{tab:def-inst}
\vspace{-.5cm}
\end{table*}

We apply our method to the problem of synthesizing a relay-switching strategy from LTL specifications. The input propositions $\inpv$ determine which, if any, of the supplies are faulty at each step. We are given an upper bound on the number of supplies that can be simultaneously faulty. The set $\outv$ of output propositions contains one proposition $\slp$ for each load $l \in \loads$  and each supply $p \in \supplies$ that are connected. The meaning of $\slp$ is that $l$ is powered by $p$.

The hard specifications assert that the critical loads must always be powered, the initializing loads should be powered initially, a load is powered by at most one supply, the capacity of supplies is not exceeded, and when a supply is faulty it is not in use.
The soft specifications state that non-critical loads are always powered, and that a powered load should remain powered unless its supply fails.

The full specifications are given in Appendix~\ref{sec:power-net-appx}. Table~\ref{tab:def-inst} describes the instances to which we applied our synthesis method. Power supplies have the same capacity $\caps$ (number of loads they can power) and at most one can be faulty. We consider three categories of instances, depending on the network connectivity (full or sparse), and whether we restrict frequent switching of supplies.
In Table~\ref{tab:inst-stat-1}, we report results for the instances from the first and the last category (for the middle category, see Appendix~\ref{sec:power-net-appx}). In the first set of instances in Table~\ref{tab:inst-stat-1}, we see the effect of specifications with large number of variables (due to full connectivity), where the bottleneck is the LTL to automata translation. In the second set in Table~\ref{tab:inst-stat-1} the limiting factor is the number of soft specifications, leading to large weights and variable numbers in the MaxSAT formulation.

\begin{table*}[!htbp]\centering\footnotesize
	\begin{tabular*}{1\linewidth}{@{}c|cccccccccccccc@{}}
		\toprule
		&& && \multicolumn{3}{c}{Encoding} && \multicolumn{1}{c}{Solution} && \multicolumn{5}{c}{Time (s)} \\
		\cline{5-7} \cline{9-9} \cline{11-15}
		\begin{tabular}{@{}c@{}} Instance \\ \# \end{tabular} & \phantom{0} & $|\trans|$ & \phantom{0} & \# vars & \phantom{0} & \# clauses & \phantom{0} & $\Sigma weights$ & \phantom{0} & Spot & \phantom{0} & Open-WBO & \phantom{0} & enc.+solve \\
		\midrule
		1 && 2 && 278 && 3599 && 11 (14) && 0.14 && 0.0040 && 0.071 \\
		  && 4 && 1102 && 20427 && 11 (14) && 0.14 && 0.017 && 0.14 \\
		  && 6 && 2958 && 60007 && 11 (14) && 0.14 && 0.64 && 0.97 \\
		  && 8 && 4966 && 106495 && 11 (14) && 0.14 && 71 && 71 \\
		\midrule
		2 && 2 && 516 && 14261 && 74 (84) && 85 && 0.013 && 0.16 \\
		&& 4 && 1988 && 81861 && 74 (84) && 85 && 0.079 && 0.59 \\
		&& 6 && 5292 && 240949 && 74 (84) && 85 && 2.2 && 3.6 \\
		&& 8 && 8844 && 427997 && 74 (84) && 85 && 1402 && 1405 \\
		\midrule
		3 && 2 && 334 && 7983 && 11 (14) && 0.29 && 0.032 && 0.12 \\
		&& 4 && 1510 && 45259 && 13 (14) && 0.29 && 0.021 && 0.26 \\
		&& 6 && 4302 && 132295 && 13 (14) && 0.29 && 0.099 && 0.77 \\
		&& 8 && 7334 && 235007 && 13 (14) && 0.29 && 1.0 && 2.2 \\
		\midrule
		4 && 2 && 572 && 38997 && 74 (84) && 200 && 0.023 && 0.37 \\
		&& 4 && 2396 && 224325 && 78 (84) && 200 && 0.14 && 1.5 \\
		&& 6 && 6636 && 659413 && 78 (84) && 200 && 1.1 && 4.9 \\
		&& 8 && 11212 && 1171933 && 78 (84) && 200 && 142 && 150 \\
		\midrule\midrule
		9 && 2 && 1047 && 10328 && 137 (155) && 0.18 && 0.0072 && 0.14 \\
		&& 4 && 3479 && 59298 && 137 (155) && 0.18 && 0.071 && 0.45 \\
		&& 6 && 8559 && 176428 && 137 (155) && 0.18 && 6.0 && 7.1 \\
		&& 8 && 26235 && 610246 && N/A (155) && 0.18 && time-out && time-out \\
		\midrule
		10 && 2 && 2153 && 22714 && 1403 (1463) && 0.48 && 0.015 && 0.27 \\
		&& 4 && 7113 && 130416 && 1403 (1463) && 0.48 && 0.60 && 1.5 \\
		&& 6 && 17505 && 387814 && 1403 (1463) && 0.48 && 716 && 718 \\
		&& 8 && 53693 && 1340908 && N/A (1463) && 0.48 && time-out && time-out \\
		\midrule
		11 && 2 && 3259 && 37596 && 5093 (5219) && 1.9 && 0.023 && 0.42 \\
		&& 4 && 10747 && 216254 && 5093 (5219) && 1.9 && 1.1 && 2.4 \\
		&& 6 && 26451 && 642976 && 5093 (5219) && 1.9 && 801 && 808 \\
		&& 8 && 81151 && 2222770 && N/A (5219) && 1.9 && time-out && time-out \\
		\midrule
		12 && 2 && 4365 && 61342 && 12503 (12719) && 27 && 0.031 && 0.74 \\
		&& 4 && 14381 && 355084 && 12503 (12719) && 27 && 2.6 && 4.8 \\
		&& 6 && 35397 && 1056826 && N/A (12719) && 27 && time-out && time-out \\
		&& 8 && 108609 && 3655032 && N/A (12719) && 27 && time-out && time-out \\
		\bottomrule
	\end{tabular*}
	\caption{Results of applying synthesis with maximum realizability on the instances in Table~\ref{tab:def-inst}, with different bounds on implementation size $|\trans|$. We report on the number of variables and clauses in the encoding, the value (and bound) of the objective function in the MaxSAT instance,  the running times of Spot and Open-WBO, and the time of the solver plus the time for generating the encoding.}
	\label{tab:inst-stat-1}
\end{table*}

\comment{
In large power networks, multiple generators act as power supplies that must provide power to numerous loads that represent the consumers. When continuously running, each of these components may become faulty which demands a network reconfiguration. The relays in the network perform such transitions between configurations. 

Here, we consider the problem of dynamic reconfiguration of a generic power distribution network. The system determines the configuration of the network through variables representing the relays' positions. Since the faults in the components are unpredictable, they can be represented by the environment. However, to ensure realizability of the problem, we restrict the number of faulty components. 

Let $\mathcal{P} = \{P_1, P_2, ..., P_m\}$ denote the set of power supplies and $\mathcal{L} = \{L_1, L_2, ..., L_n\}$ to denote the set of consumer components. Each power supply has limited power capacity, represented by $P^+_i$ while each load has a power demand, represented by $P^-_i$. Depending on the network architecture, each of the loads has access to a subset of power supplies, i.e., $A_L(L_i) \subseteq \mathcal{P}$ where $A_L(L_i)$ determines the power supplies that are accessible by load $L_i$. On the other hand, each power supply is accessible to a subset of loads. Let $A_P(P_i) \subseteq \mathcal{L}$ denote the load components that can connect to power supply $P_i$. Such architecture can be illustrated via a bipartite graph where the connectivity matrix defines set functions $A_L$ and $A_P$. 
Furthermore, some of the components are in the same subsystem whose operation requires all those loads to be powered. We capture this by introducing disjoint classes of loads $\mathcal{C} = \{C_1, C_2, ..., C_k\}$ where $C_i \subseteq \mathcal{L}$ and $C_i \cap C_j = \emptyset ,\; \forall i\neq j$. Additionally, the relative importance of subsystems imposes a set of weights that show the priorities of providing power to the loads. Let $w_1, w_2, ..., w_k$ to denote the weights associated with classes. If a subsystem must always be powered, its corresponding weight is set to $\infty$.

This framework allows us to model a generic power distribution network on an abstract level. For instance, the network shown in Fig.[-a] with 2 power supplies and 4 loads can be abstracted into the graph of Fig.[-b]. For such model, we aim to synthesis a planner that dynamically assigns the loads to healthy power supplies. The planner has to respect the connectivity structure of the network and should try to maximally power the subsystems. We cast this optimization according to the lexicographical ordering introduced in section [] and predefined weights over the soft constraints. 

We define a set of variables along a series of LTL constraints that model the problem and capture the desired requirements of the system. When the number of faulty generators upperbounded by $l$, the environment can be defined as $l$ integer variables $e_1, e_2, ..., e_l; \; e_i \in \{0, 1, ..., n\}$, where a non-zero assignment shows a faulty generator.\footnote{Two environment variables can be non-zero and have the same value. While this will be redundant, it does not affect the encoding.} The system variables are $s_{i \rightarrow j}$'s such that $s_{i \rightarrow j} = 1$ if load $i$ is assigned to power supply $j$ and $s_{i \rightarrow j} = 0$ otherwise. Note that not all the power supplies are available to a consumer, therefore, for the $i^{th}$ consumer, the range that $j$ varies over depends on the available suppliers for that load. 

A critical load must always be powered. This leads to hard constraint
\begin{equation}
\square \left( \bigvee_{j \in A_L(L_i)} s_{i \rightarrow j} \right) \,, 
\quad \forall \ i \in \{1, 2, ..., n\} \ s.t. \ w_i = \infty.
\end{equation}
A non-critical load being powered introduces a soft constraint, i.e.,
\begin{equation}
\square \left( \bigvee_{j \in A_L(L_i)} s_{i \rightarrow j} \right) \,, 
\quad \forall \ i \in \{1, 2, ..., n\} \ s.t. \ w_i < \infty,
\end{equation}
which if it is not realizable, we relax it according to the proposed lexicographical ordering. A load should only be assigned to one power supply:
\begin{equation}
\bigwedge_{i \in \{1, 2, ..., n\}}
\bigwedge_{j_1 \in A_L(L_i)}
\square \left(s_{i \rightarrow j_1} 
\rightarrow \bigwedge_{j_2 \in A_L(L_i), j_2 \neq j_1}\neg s_{i \rightarrow j_2}\right),
\end{equation}
or equivalently,
\begin{equation}
\square \left(\neg s_{i \rightarrow j_1} \lor \neg s_{i \rightarrow j_2}\right) \,,
\qquad \forall \ i \in \{1, 2, ..., n\}, \forall \ j_1, j_2 \in \{1, 2, ..., m\}, j_1 \neq j_2.
\end{equation}
Let the capacity of the power sources to be dependent on the number of loads that are attached to them, i.e., the power requirements of all load components are similar. Therefore, the constrained capacity of power supplies is expressed in the following constraint:
\begin{equation}
\bigwedge_{j \in \{1, 2, ..., m\}} 
\bigwedge_{\substack{{L \subseteq A_P(P_j)} \\ {|L| = P^+_j}}}
\square \left( \bigwedge_{i_1 \in L} s_{i_1 \rightarrow j} \rightarrow
\bigwedge_{i_2 \in A_P(P_j) \backslash L} \neg s_{i_2 \rightarrow j} \right).
\end{equation}
When a power supply becomes faulty, we require the loads to be disconnected from it:\footnote{Depending on the response time, one can add ``next'' operator to expect disconnection in future time steps.}
\begin{equation}
\bigwedge_{k \in \{1, 2, ..., l\}}
\bigwedge_{j \in \{1, 2, ..., m\}}
\square \left( e_k = j \rightarrow 
\bigwedge_{i \in A_P(P_j)} \neg s_{i \rightarrow j} \right).
\end{equation}

\begin{equation}
\bigwedge_{i \in \{1, 2, ..., n\}}
\bigwedge_{j \in A_L(L_i)}
\square \left( 
s_{i \rightarrow j} \land \LTLnext \left( \neg 
\bigvee_{k \in \{1, 2, ..., l\}} e_k = j \right)
\rightarrow \LTLnext s_{i \rightarrow j} \right) 
\end{equation}

\begin{equation}
\bigvee_{j \in A_L(L_i)} s_{i \rightarrow j} \land 
\LTLnext \left( \bigvee_{j \in A_L(L_i)} s_{i \rightarrow j} \right) 
\land \ldots \land
\LTLnext^{t_{init}} \left( \bigvee_{j \in A_L(L_i)} s_{i \rightarrow j} \right)
\,, \quad \forall \ i \in \mathcal{L}_{init}
\end{equation}
}


\newpage
\bibliographystyle{plain}
\bibliography{main.bib}

\appendix
\newpage
\section{Appendix: Specifications for the Robotic Navigation Example}
\label{sec:robot-nav-appx}
We now consider a scenario in which we want to synthesize a strategy for a robot who is a tour guide for a museum. The map of the museum is shown in Figure. 1. The tour starts at the entrance of the museum where the robot picks up a group of visitors. The main objective is to take the group through the two exhibitions on that floor and then return to the entrance to pick up a new group of people. However, there are further constraints.

\begin{figure*}
	\includegraphics[width=\linewidth]{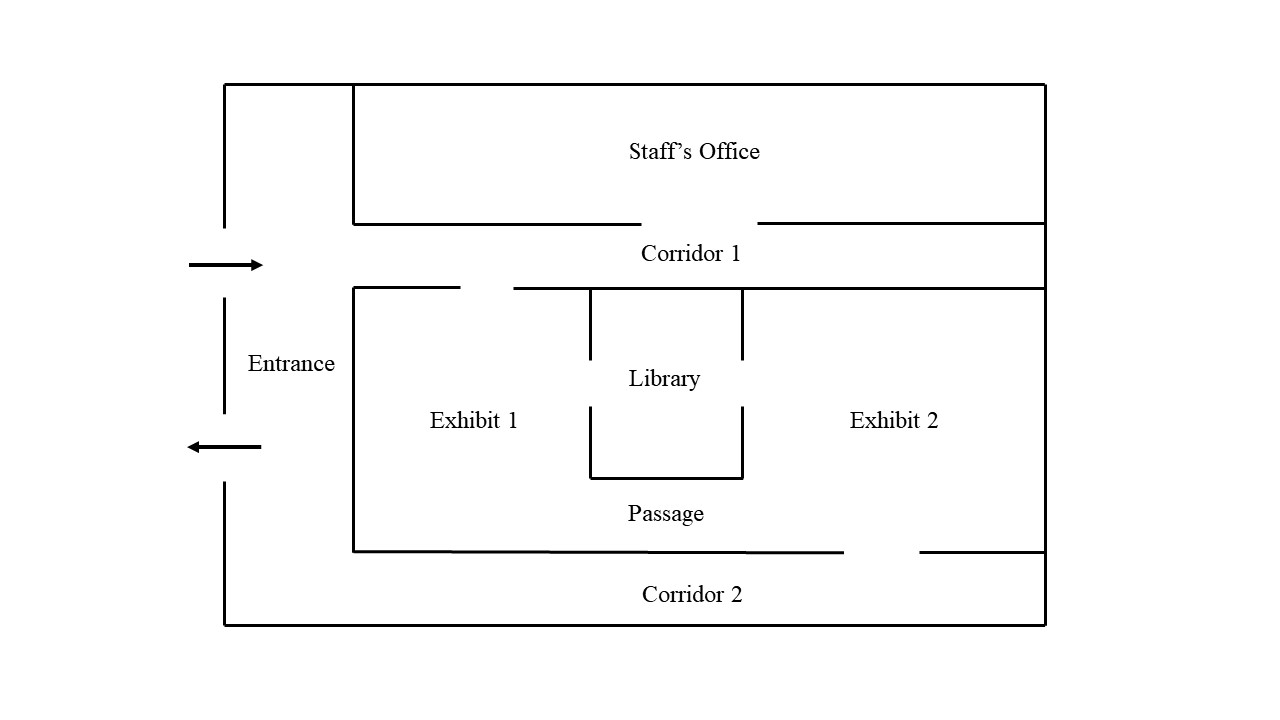}
	\caption{Map of the museum}
\end{figure*}

On one hand, the robot can only gain access to Exhibition 2 by getting a key from the staff's office. 
On the other hand, the robot is asked not to disturb the employees in the office. There is a library Exhibition 1 and Exhibition 2 which can be used to go from one to the other, but it is preferred that visitors do not enter the library. However, when the alternative passage between these two exhibitions is occupied, it is desirable that the robot does not choose to go through there. Clearly, these specifications cannot be realized in conjunction. Given their priorities, we categorize the requirements into hard and soft specifications, and synthesize a strategy which satisfies the hard specifications and maximizes the satisfaction of the soft specifications. We formalize the problem as follows. 

\smallskip
\noindent
{\bf Input propositions:} 
The set $\inpv$ contains a single Boolean variable $\occupied$ that indicates whether the passage between the two Exhibitions is occupied.

\smallskip
\noindent
{\bf Output propositions:}
The set of output propositions $\outv$ consists of eight Boolean variables corresponding to the eight locations on the map: $\ent$, $\corr_1$, $\corr_2$, $\exh_1$, $\exh_2$, $\passage$, $\office$, $\library$.

\smallskip
\noindent
{\bf The hard specification} is the conjunction of the following formulas.

\begin{itemize}
	\item The robot starts at the entrance:
	\begin{equation*}
	\ent
	\end{equation*}
	
	\item At each time step, the robot can occupy only one location:
	\begin{equation*}
	\LTLglobally \bigwedge_{o_1 \in \outv} \left( o_1 \rightarrow \bigwedge_{o_2 \in \outv \backslash \{o_1\}} \neg o_2 \right) .
	\end{equation*}
	
	\item The admissible actions of the robot are to stay in the current location or move to an adjacent one. This leads to eight requirements describing the map. 
	For instance:
	\begin{equation*}
	\LTLglobally \left( \corr_1 \rightarrow \LTLnext \left(\corr_1 \lor \office \lor \exh_1 \right) \right) .
	\end{equation*}
\end{itemize}
    
    {\it Remark: }Due to the requirements above,  the robot will always be in exactly one valid location, i.e., in a transition system that satisfies the specifications it is impossible to reach a state where all output variables are false.

\begin{itemize}
	
	\item The robot must infinitely often visit both exhibitions:
	\begin{equation*}
	\begin{aligned}
	& \LTLglobally \LTLfinally \exh_1 ,\\
	& \LTLglobally \LTLfinally \exh_2 .
	\end{aligned}
	\end{equation*}
	
	\item The robot has to respect the order of visits, by starting from exhibit 1, going to exhibit 2 and finishing at the entrance:
	\begin{equation*}
	\begin{aligned}
	& \LTLglobally (\exh_1 \rightarrow \LTLnext \neg \ent \LTLuntil \exh_2) ,\\
	& \LTLglobally (\exh_2 \rightarrow \LTLnext \neg \exh_1 \LTLuntil \ent) ,\\
	& \LTLglobally (\ent \rightarrow \LTLnext \neg \exh_2 \LTLuntil \exh_1) .
	\end{aligned}
	\end{equation*}
	
	\item The robot does not have access to exhibition 2 before it visits the office:
	\begin{equation*}
	\neg \exh_2 \LTLuntil\office.
	\end{equation*}

\end{itemize}

\smallskip
\noindent
{\bf The set of soft specifications} describes the desirable requirements that the robot does not enter the office, the library, or a occupied passage. Formally:

\begin{itemize}
	\item The robot must not enter the office:
	\begin{equation*}
	\LTLglobally \left(\corr_1 \rightarrow \LTLnext \neg \office \right) .
	\end{equation*}
	
	\item The robot must not enter the library:
	\begin{equation*}
	\LTLglobally \left( \exh_1 \lor \exh_2 \rightarrow \LTLnext \neg \library \right) .
	\end{equation*}
	
	\item The robot must not enter a occupied passage:
	\begin{equation*}
	\LTLglobally \left( \left( \exh_1 \lor \exh_2 \right) \land \LTLnext \occupied \rightarrow \LTLnext \neg \passage \right) .
	\end{equation*}
	
\end{itemize}

\newpage
\section{Appendix: Specifications for the Power Network Example}
\label{sec:power-net-appx}
Consider a power distribution network with set of power supplies $\supplies$ and set of loads $\loads$.
For each supply $p \in \supplies$, we denote with $\caps_p$ the capacity of $p$, that is, how many loads $p$ can power, and with $\cons(p)$ the set of loads connected to $p$ in the network graph. Similarly, for a load $l \in \loads$, let $\supl(l)$ be the set of suppliers to which the load $l$ is connected in the network.
The number of supplies that can be simultaneously faulty is upper bounded by a constant $f$. 

We now give an LTL specification of the relay switching strategy for a given power network.
We begin with the sets of input and output propositions.

\smallskip
\noindent
{\bf Input propositions:} The set $\inpv$ consists of input propositions which form the binary encoding of $f$ integer variables $e_1, e_2, \ldots, e_f$ each with domain $\{0, \ldots, |\loads|\}$. The values of these variables indicate which power supplies are faulty at a given point in time: if $e_i = p$ for some $i$ and $p$, then $p$ is faulty at this time.\footnote{There can be different indices $i$ for which $e_i = p$ at the same time. While this will be redundant, it does not affect the encoding.} 

\smallskip
\noindent
{\bf Output propositions:}
The set of output propositions $\outv$ consists of the Boolean variables $\slp$ for all loads $l \in \loads$ and  supplies $p \in \supplies$ where $l$ and $p$ are connected. The meaning of $\slp$ being true is that $l$ is powered by $p$.

\smallskip
\noindent
{\bf The hard specification} is the conjunction of the following formulas.

\begin{itemize}
\item A critical load must always be powered:
\begin{equation*}
\LTLglobally \left( \bigvee_{p \in \supl(l)} \slp \right) \,, 
\quad \forall \ l \in \loads \ \text{where } l \text{ is critical}.
\end{equation*}

\item An initializing node must be powered during the first two steps:
\begin{equation*}
\left(\bigvee_{p \in \supl(l)} \slp\right) \land\,\,
\LTLnext \left(\bigvee_{p \in \supl(l)} \slp\right)
\,, \quad \forall \ l \in \loads \ \text{where } l \text{ is initializing}.
\end{equation*}

\item A load must only be assigned to one power supply:
\begin{equation*}
\bigwedge_{l \in \loads}\,\,\,
\bigwedge_{p_1 \in \supl(l)}
\LTLglobally \left(s_{l \rightarrow p_1} 
\rightarrow \bigwedge_{p_2 \in \supl(l), p_2 \neq p_1}\neg s_{l \rightarrow p_2}\right).
\end{equation*}

\item The capacity of power supplies must not be exceeded:
\begin{equation*}
\bigwedge_{p \in \supplies} 
\bigwedge_{\substack{{L' \subseteq \cons(p)} \\ {|L'| = \caps_p}}}
\LTLglobally \left( \left(\bigwedge_{l \in L'} s_{l \rightarrow p}\right) \rightarrow
\bigwedge_{l \in \cons(p) \setminus L'} \neg s_{l \rightarrow p} \right).
\end{equation*}

\item When a power supply becomes faulty, no loads can be powered by it:
\begin{equation*}
\bigwedge_{i \in \{1, 2, \ldots f\}}
\bigwedge_{p \in \supplies}
\LTLglobally \left( e_i = p \rightarrow 
\bigwedge_{l \in \cons(p)} \neg \slp \right).
\end{equation*}

\end{itemize}

\smallskip
\noindent
{\bf The set of soft specifications} consist of the requirements for powering the non-vital loads, and optionally, a restriction on switching supplies too often unless they become faulty. The respective formulas are given below.

\begin{itemize}
\item A non-critical load should always be powered:\begin{equation*}
\LTLglobally \left( \bigvee_{p \in \supl(l)} \slp \right) \,, 
\quad \forall \ l \in \loads \ \text{where } l \text{ is non-critical}.
\end{equation*}

\item All loads powered by a supply remain powered by it unless it becomes faulty:
\begin{equation*}
\bigwedge_{l \in \loads}\,\,
\bigwedge_{p \in \supl(l)}
\LTLglobally \left( 
\slp \land \LTLnext \left( \neg 
\bigvee_{i \in \{1, 2, \ldots, f\}} e_i = l \right)
\rightarrow \LTLnext \slp \right). 
\end{equation*}

\end{itemize}

\begin{table*}\centering\footnotesize
	\begin{tabular*}{1\linewidth}{@{}c|cccccccccccccc@{}}
		\toprule
		&& && \multicolumn{3}{c}{Encoding} && \multicolumn{1}{c}{Solution} && \multicolumn{5}{c}{Time (s)} \\
		\cline{5-7} \cline{9-9} \cline{11-15}
		\begin{tabular}{@{}c@{}} Instance \\ \# \end{tabular} & \phantom{00} & $|\trans|$ & \phantom{00} & \# vars & \phantom{0} & \# clauses & \phantom{00} & $\Sigma weights$ & \phantom{00} & Spot & \phantom{0} & Open-WBO & \phantom{0} & enc.+solve \\
		\midrule
		5 && 2 && 235 && 2764 && 3 (3) && 0.53 && 0.044 && 0.079 \\
		&& 4 && 843 && 15726 && 3 (3) && 0.53 && 0.19 && 0.28 \\
		&& 6 && 2115 && 46288 && 3 (3) && 0.53 && 1.8 && 2.0 \\
		&& 8 && 3523 && 82128 && 3 (3) && 0.53 && 3.5 && 4.0 \\
		\midrule
		6 && 2 && 529 && 7586 && 35 (39) && 0.16 && 0.0060 && 0.075 \\
		&& 4 && 1841 && 43272 && 35 (39) && 0.16 && 0.025 && 0.27 \\
		&& 6 && 4617 && 127534 && 35 (39) && 0.16 && 0.64 && 1.4 \\
		&& 8 && 7625 && 226286 && 35 (39) && 0.16 && 71 && 73 \\
		\midrule
		7 && 2 && 823 && 14904 && 137 (155) && 1.3 && 0.011 && 0.14 \\
		&& 4 && 2839 && 85538 && 137 (155) && 1.3 && 0.058 && 0.54 \\
		&& 6 && 7119 && 252556 && 137 (155) && 1.3 && 1.8 && 3.1 \\
		&& 8 && 11727 && 448264 && 137 (155) && 1.3 && 437 && 440 \\
		\midrule
		8 && 2 && 1117 && 31086 && 359 (399) && 26 && 0.017 && 0.28 \\
		&& 4 && 3837 && 180796 && 359 (399) && 26 && 0.13 && 1.1 \\
		&& 6 && 9621 && 536266 && 359 (399) && 26 && 6.1 && 8.9 \\
		&& 8 && 15829 && 952362 && 359 (399) && 26 && 3106 && 3111 \\
		\bottomrule
	\end{tabular*}
	\label{tab:inst-stat-2}
	\caption{Results of applying synthesis with maximum realizability on the instances in Table~\ref{tab:def-inst}, with different bounds on implementation size $|\trans|$. We report on the number of variables and clauses in the encoding, the value (and bound) of the objective function in the MaxSAT instance,  the running times of Spot and Open-WBO, and the time of the solver plus the time for generating the encoding.}	
\end{table*}

\newpage

\newpage
\section{Appendix: Proofs Omitted from Section~\ref{sec:maxsat-encoding}}
\addtocounter{appxlemma}{1}
\begin{appxlemma}\label{lem:value-as-ltl-appx}
For every transition system $\trans$ and soft safety specifications $\gphi_1,\ldots,$ $\gphi_n$, if $\val\trans{\gphi_1 \wedge \ldots \wedge \gphi_n} = v$, then there exists an LTL formula $\psi_v$ such that $\trans \models \psi_v$ and the following conditions hold
\begin{itemize}
\item[(1)] $\psi_v = \softSpec_1'\wedge\ldots\wedge\softSpec_n'$, where $\softSpec_i' \in\{\gphi_i,\fgphi_i,\gfphi_i,\true\} \text{ for }i=1,\ldots,n$, 
\item[(2)] for every $\trans'$, if $\trans' \models \psi_v$, then $\val{\trans'}{\gphi_1 \wedge \ldots \wedge \gphi_n} \geq v$.
\end{itemize}
\end{appxlemma}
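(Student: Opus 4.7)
The construction of the witness formula $\psi_v$ is essentially forced by the definition of $\valf$. For each $i \in \{1,\ldots,n\}$, I would define $\softSpec_i'$ to be the strongest formula in the chain $\gphi_i \Rightarrow \fgphi_i \Rightarrow \gfphi_i \Rightarrow \true$ that is actually satisfied by $\trans$. Concretely, writing $\val\trans{\gphi_i} = (v_{i,1},v_{i,2},v_{i,3})$, set
\[
\softSpec_i' \defeq
\begin{cases}
\gphi_i & \text{if } v_{i,1}=1,\\
\fgphi_i & \text{if } v_{i,1}=0,\, v_{i,2}=1,\\
\gfphi_i & \text{if } v_{i,1}=0,\, v_{i,2}=0,\, v_{i,3}=1,\\
\true & \text{otherwise,}
\end{cases}
\]
and let $\psi_v \defeq \softSpec_1' \wedge \ldots \wedge \softSpec_n'$. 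Item (1) of the lemma then holds by construction. That $\trans \models \psi_v$ follows immediately from the three-valued characterisation noted right after the definition of $\mathit{val}$ in Section~\ref{sec:quantitative-semantics}: $v_{i,1}=1$ iff $\trans\models\gphi_i$, $v_{i,2}=1$ iff $\trans\models\fgphi_i$, and $v_{i,3}=1$ iff $\trans\models\gfphi_i$.

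For item (2), consider any $\trans'$ with $\trans' \models \psi_v$ and let $\val{\trans'}{\gphi_i} = (v'_{i,1},v'_{i,2},v'_{i,3})$. Using the implication chain $\gphi_i \Rightarrow \fgphi_i \Rightarrow \gfphi_i$, I would verify case by case on the value of $\softSpec_i'$ that $v'_{i,k} \geq v_{i,k}$ for every $k \in \{1,2,3\}$. For instance, if $\softSpec_i' = \fgphi_i$, then $\trans'\models\fgphi_i$ forces $v'_{i,2} = v'_{i,3} = 1$, which matches or exceeds $v_{i,1}=0$, $v_{i,2}=v_{i,3}=1$; the other cases are analogous, and the $\true$ case is trivial since the corresponding $v_{i,k}$ are all zero.

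Finally, since the componentwise inequalities $v'_{i,k} \geq v_{i,k}$ hold for every $i$, summing over $i$ yields $\sum_{i} v'_{i,k} \geq \sum_{i} v_{i,k}$ for each $k \in \{1,2,3\}$. Because the value function $\valf$ on the conjunction is the triple $\bigl(\sum_i v_{i,3},\sum_i v_{i,2},\sum_i v_{i,1}\bigr)$ compared lexicographically, componentwise dominance implies lexicographic dominance, giving $\val{\trans'}{\gphi_1 \wedge \ldots \wedge \gphi_n} \geq v$, as required.

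\textbf{Main obstacle.} The content of the lemma is essentially bookkeeping, so there is no deep obstacle; the one point that requires care is ensuring the case analysis in item (2) uses the right direction of the implications $\gphi_i \Rightarrow \fgphi_i \Rightarrow \gfphi_i$ and accounts for the fact that $\trans'$ may strictly outperform $\trans$ on some conjuncts (so componentwise inequality, rather than equality, is the right notion). Stating the argument as componentwise dominance followed by monotonicity of summation, rather than reasoning directly in the lexicographic order, keeps the proof clean.
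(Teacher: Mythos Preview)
Your proposal is correct and follows essentially the same approach as the paper: the paper defines $\psi_v^i$ by exactly the four-case construction you give, shows $\trans\models\psi_v$ directly from the definition of $\valf$, and for item~(2) argues that $v_{i,k}=1$ implies $v'_{i,k}=1$ for each $i,k$ (i.e., componentwise dominance), then sums over $i$ and observes that componentwise dominance of the triples implies lexicographic dominance. Your organisation via the implication chain and the ``componentwise then sum'' phrasing is slightly cleaner, but there is no substantive difference.
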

\begin{proof}
For each $i \in \{1,\ldots,n\}$, let $(v_{i,1},v_{i,2},v_{i,3}) = \val\trans{\gphi_i}$, and let
\[\psi_v^i \defeq
\begin{cases}
\gphi_i & \text{if } v_{i,1} = 1,\\
\fgphi_i & \text{if } v_{i,1} = 0 \text{ and } v_{i,2} = 1,\\
\gfphi_i  & \text{if } v_{i,2} = 0 \text{ and } v_{i,3} = 1,\\
\true & \text{if } v_{i,3} = 0.\\
\end{cases}
\]

We define $\psi_v  = \bigwedge_{i=1}^n\psi_v^i$. By the definition of $\val\trans{\gphi_i}$ and $\psi_v^i$, we have that $\trans \models \psi_v^i$ for all $i\in \{1,\ldots,n\}$. Thus, we  conclude that $\trans \models \psi_v$. Clearly, $\psi_v$ also satisfies condition \emph{(1)}. Now, consider a transition system $\trans'$ with $\trans' \models \psi_v$.

Let $v = (v_3,v_2,v_1) \in \{0,\ldots,n\}^3$, and $(v_3',v_2',v_1') = \val{\trans'}{\gphi_1 \wedge \ldots \wedge \gphi_n}$. We will show that $v_1' \geq v_1$, $v_2' \geq v_2$ and $v_3' \geq v_3$. Fix some $i \in\{1,\ldots,n\}$.

Let $(v_{i,1}',v_{i,2}',v_{i,3}') = \val{\trans'}{\gphi_i}$. Since $\trans' \models \psi_v$ we have that $\trans' \models \psi_v^i$. Thus by the definition of $\psi_v^i$, we have that if $v_{i,1} = 1$, then $\trans' \models \gphi_i$, and thus $v_{i,1}'=1$. Similarly, if $v_{i,2} = 1$ we can conclude that $v_{i,2}' = 1$, and if  
$v_{i,3} = 1$, then we have $v_{i,3}' = 1$. Since $i \in \{1,\ldots,n\}$ was arbitrary, and since 
\[
\begin{array}{lll}
(v_3,v_2,v_1) & = &  \big(
\sum_{i=1}^n v_{i,3},
\sum_{i=1}^n v_{i,2},
\sum_{i=1}^n v_{i,1}
\big)\text{ and }\\
(v_3',v_2',v_1') & = &  \big(
\sum_{i=1}^n v_{i,3}',
\sum_{i=1}^n v_{i,2}',
\sum_{i=1}^n v_{i,1}'
\big),
\end{array}
\]
we can conclude that $v_1' \geq v_1$, $v_2' \geq v_2$ and $v_3' \geq v_3$. This implies that $(v_3',v_2',v_1') \geq (v_3,v_2,v_1)$ also according to the lexicographic ordering, which proves \emph{(2)}.\qed
\end{proof}

\begin{appxthm}\label{thm:optimal-bound-safety-appx}
Given an LTL specification $\spec$ and soft safety specifications $\gphi_1,\ldots,$ $\gphi_n$,
if there exists a transition system $\trans \models \spec$, then there exists  $\trans^*$ such that
\begin{itemize}
\item [(1)] $\val\trans{\gphi_1 \wedge \ldots \wedge \gphi_n} \leq \val{\trans^*}{\gphi_1 \wedge \ldots \wedge \gphi_n}$ for all $\trans$ with $\trans \models \spec$,
\item[(2)] $\trans^* \models \spec$ and $|\trans^*| \leq \left((2^{(b+\log b)})!\right)^2$,
\end{itemize}
 where $b = \max\{|\subf{\spec\wedge\softSpec_1'\wedge\ldots\wedge\softSpec_n'}| \mid \forall i:\ \softSpec_i' \in\{\gphi_i,\fgphi_i,\gfphi_i\}\}$.
\end{appxthm}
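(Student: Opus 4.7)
My plan is to reduce the theorem to a combination of Lemma~\ref{lem:value-as-ltl} and the bounded-synthesis upper bound on implementation size recalled in Section~\ref{sec:bounded-synthesis}. The essential point is that the range of $\mathit{val}$ is the finite set $\{0,\ldots,n\}^3$, so the set of values attained by implementations of $\spec$ has a maximum, and Lemma~\ref{lem:value-as-ltl} converts ``being at least that maximum'' into an ordinary LTL formula whose realizability bound we can read off.

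First I would note that, by hypothesis, the set
\[
V \defeq \{\val\trans{\gphi_1 \wedge \ldots \wedge \gphi_n} \mid \trans \models \spec\}
\]
is a nonempty subset of $\{0,\ldots,n\}^3$ and hence has a maximum $v^\star$ in the lexicographic order; fix any $\trans_0 \models \spec$ with $\val{\trans_0}{\gphi_1 \wedge \ldots \wedge \gphi_n} = v^\star$. Applying Lemma~\ref{lem:value-as-ltl} to $\trans_0$ yields a formula $\psi_{v^\star} = \softSpec_1'\wedge\ldots\wedge\softSpec_n'$ with $\softSpec_i' \in \{\gphi_i,\fgphi_i,\gfphi_i,\true\}$ such that $\trans_0 \models \psi_{v^\star}$, and such that $\trans' \models \psi_{v^\star}$ implies $\val{\trans'}{\gphi_1 \wedge \ldots \wedge \gphi_n} \geq v^\star$. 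In particular, $\spec \wedge \psi_{v^\star}$ is realizable.

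Next I would invoke the standard bounded-synthesis bound recalled in Section~\ref{sec:bounded-synthesis}: any realizable LTL formula $\chi$ admits an implementation of size at most $\bigl(2^{|\subf\chi| + \log|\chi|}\bigr)!^{\,2}$. Applying this to $\chi = \spec \wedge \psi_{v^\star}$ produces $\trans^\star \models \spec \wedge \psi_{v^\star}$ of bounded size. To match the bound in the statement, I would argue that replacing each $\true$ conjunct of $\psi_{v^\star}$ by, say, $\gphi_i$ can only (weakly) increase the subformula count and length, so $|\subf{\spec \wedge \psi_{v^\star}}| \leq b$ and the $\log$ term is likewise dominated; hence $|\trans^\star| \leq \bigl((2^{b+\log b})!\bigr)^2$. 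Finally, $\trans^\star \models \psi_{v^\star}$ combined with Lemma~\ref{lem:value-as-ltl}(2) gives $\val{\trans^\star}{\gphi_1 \wedge \ldots \wedge \gphi_n} \geq v^\star$, and by maximality of $v^\star$ in $V$ we get equality, so condition~(1) holds for $\trans^\star$, and condition~(2) is exactly what the bounded-synthesis step provides.

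The proof is structurally short; the only delicate point is the size bookkeeping in the last step, namely verifying that $|\subf{\spec\wedge\psi_{v^\star}}|$ and $\log|\spec\wedge\psi_{v^\star}|$ are majorized by $b$ and $\log b$ respectively, even after accounting for possible $\true$ conjuncts. That step is routine (monotonicity of $|\subf\cdot|$ under conjunction with nontrivial formulas), but it is where one has to be careful that the definition of $b$ as a maximum over nontrivial choices $\softSpec_i' \in \{\gphi_i,\fgphi_i,\gfphi_i\}$ genuinely upper-bounds the quantity appearing in the bounded-synthesis bound for the possibly simpler $\psi_{v^\star}$.
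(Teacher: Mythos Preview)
Your proposal is correct and follows essentially the same approach as the paper's own proof: take the maximal attainable value $v^\star$ over the finite range $\{0,\ldots,n\}^3$, apply Lemma~\ref{lem:value-as-ltl} to obtain an LTL formula $\psi_{v^\star}$ witnessing that value, and then invoke the bounded-synthesis size bound on $\spec\wedge\psi_{v^\star}$. Your treatment is in fact slightly more explicit than the paper's about the size bookkeeping (the $\true$ conjuncts and the comparison with $b$), which the paper simply asserts without comment.
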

\begin{proof}
Let \[v^* = \max\big\{v \in \{0,\ldots,n\}^3 \mid \exists \trans: \trans\models \varphi \text{ and } \val\trans{\gphi_1 \wedge \ldots \wedge \gphi_n} = v\big\}.\]

Let $\trans$ be a transition system such that $\trans \models \varphi$ and 
$\val\trans{\gphi_1 \wedge \ldots \wedge \gphi_n} = v^*$. According to Lemma~\ref{lem:value-as-ltl}, there exists an LTL formula $\psi_{v^*}$ that satisfies the conditions of the lemma. Thus, $\trans \models \varphi \wedge \psi_{v^*}$. According~\cite{ScheweF07a}, there exists a transition system $\trans^*$ such that $\trans^* \models \varphi \wedge \psi_{v^*}$ and $|\trans^*| \leq \big(2^{|\subf{\varphi \wedge \psi_{v^*}}| +\log |\varphi \wedge \psi_{v^*}|}\big)!^2$. Combining this with the guarantees of Lemma~\ref{lem:value-as-ltl}, we get that $\val{\trans^*}{\gphi_1 \wedge \ldots \wedge \gphi_n} \geq v^*$, and 
$\trans^* \models \spec$ and $|\trans^*| \leq (2^{b+\log b})!^2$.
Thus, $\trans^*$ satisfies condition \emph{(2)}, and since the value $v^*$ is optimal, we have that condition \emph{(1)} holds as well. \qed
\end{proof}

\begin{appxprop}\label{prop:aut-globally-appx}
Given an LTL formula $\gphi$ where $\varphi$ is syntactically safe, we can construct a universal B\"uchi automaton $\autB_{\scriptsize \gphi} = (Q_{\scriptsize \gphi},q_0^{\scriptsize \gphi},\delta_{\scriptsize \gphi},F_{\scriptsize \gphi})$ such that 
 $\lang{\autB_{\scriptsize \gphi}} = \{\trans \mid \trans \models \gphi\}$, and $\autB_{\scriptsize \gphi}$ has a unique non-accepting sink state, that is, there exists a unique state $\rej_\varphi \in Q_{\scriptsize \gphi}$ such that 
$F_{\scriptsize \gphi} = Q_{\scriptsize \gphi} \setminus \{\rej_\varphi\}$, and for every $\sigma \in \Sigma$ it holds that $\{q \in Q_{\scriptsize \gphi} \mid (\rej_\varphi,\sigma,q) \in \delta_{\scriptsize \gphi}\}  = \{\rej_\varphi\}$.
\end{appxprop}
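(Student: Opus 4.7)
The plan is to construct $\autB_{\scriptsize \gphi}$ as a subformula-tracking universal automaton, exploiting that $\varphi$ (and hence $\gphi$) is syntactically safe, i.e., contains no $\LTLuntil$ in NNF. The state space $Q_{\scriptsize \gphi}$ will consist of sets of obligations drawn from a closure of $\subf{\gphi}$, together with one distinguished sink $\rej_\varphi$; the initial state is $q_0^{\scriptsize \gphi} \defeq \{\gphi\}$. Given a state $\Phi$ and an input letter $\sigma \in \Sigma$, the transition is computed by recursively unfolding each obligation using the standard fixed-point identities $\LTLglobally\psi \equiv \psi \wedge \LTLnext\LTLglobally\psi$ and $\psi_1 \LTLrelease \psi_2 \equiv \psi_2 \wedge (\psi_1 \vee \LTLnext(\psi_1 \LTLrelease \psi_2))$, evaluating literals against $\sigma$, and collecting the set of $\LTLnext$-guarded residuals as the successor. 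Crucially, since no $\LTLuntil$ is present, every obligation decomposes into a conjunction of an immediate letter-check and next-step safety obligations, so no liveness promise ever needs to be tracked.

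Whenever the letter-check falsifies one of the current obligations, the transition is redirected to $\rej_\varphi$; the sink is equipped with self-loops on every letter, giving the required property that $\{q \mid (\rej_\varphi,\sigma,q) \in \delta_{\scriptsize \gphi}\} = \{\rej_\varphi\}$ for all $\sigma \in \Sigma$. Set $F_{\scriptsize \gphi} \defeq Q_{\scriptsize \gphi} \setminus \{\rej_\varphi\}$. Correctness then reduces to a standard safety argument: a trace $w$ violates $\gphi$ iff some finite prefix of $w$ forces $\gphi$ to be false (since $\gphi$ is a safety property), iff the run of $\autB_{\scriptsize \gphi}$ on $w$ eventually enters $\rej_\varphi$, iff that run visits $F_{\scriptsize \gphi}$ only finitely often. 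Hence $\lang{\autB_{\scriptsize \gphi}} = \{\trans \mid \trans \models \gphi\}$ with the B\"uchi acceptance condition over $F_{\scriptsize \gphi}$.

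The main obstacle is ensuring the subformula closure is finite and that the unfolding terminates after consuming one letter; this is guaranteed by the syntactic-safety restriction, which bounds the closure (and hence $|Q_{\scriptsize \gphi}|$) by $2^{\mathcal{O}(|\gphi|)}$. An equivalent route that avoids giving the construction from scratch is to start from any universal co-B\"uchi automaton for $\gphi$, for instance the one provided by Lemma~\ref{lem:ltl-ucba}: because $\gphi$ is safety, every visit to a rejecting state already witnesses a permanent violation, so merging all rejecting states into a single absorbing sink preserves the language, and relabelling the accepting set as all non-sink states converts the condition to the desired B\"uchi form with a unique non-accepting sink.
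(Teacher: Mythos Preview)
Both routes you propose have real gaps.

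In the first route, the claim that ``every obligation decomposes into a conjunction of an immediate letter-check and next-step safety obligations'' fails for syntactically safe formulas containing disjunctions. The unfolding of $\psi_1 \LTLrelease \psi_2$ is $\psi_2 \wedge (\psi_1 \vee \LTLnext(\psi_1 \LTLrelease \psi_2))$, which contains an essential disjunction, and plain $\psi_1 \vee \psi_2$ is itself syntactically safe. If states are sets of obligations interpreted conjunctively and the transition is deterministic (as your description and your later use of ``the run'' suggest), there is no way to track such disjunctions faithfully; one is then forced to let states be arbitrary Boolean combinations of subformulas, which does not obviously yield the $2^{\mathcal{O}(|\gphi|)}$ bound you claim. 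Using universal branching does not help either, since in a universal automaton nondeterministic branching encodes conjunction, not disjunction.

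In the second route, the assertion that ``every visit to a rejecting state already witnesses a permanent violation'' is a property of the \emph{language}, not of an \emph{arbitrary} universal co-B\"uchi automaton recognising it. A trivial counterexample: the deterministic automaton with states $q_0$ (initial), $q_1$ (rejecting), $q_2$, and transitions $q_0 \to q_1 \to q_2 \to q_2$ on every letter, is a universal co-B\"uchi automaton for the safety language $\Sigma^\omega$, yet every run passes through the rejecting state $q_1$; merging $q_1$ into an absorbing rejecting sink yields an automaton with empty language. So you cannot start from Lemma~\ref{lem:ltl-ucba} without further structural guarantees on the automaton it produces.

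The paper takes a different path that avoids both issues. It invokes a result of Kupferman and Vardi to obtain from the syntactically safe $\varphi$ a \emph{nondeterministic finite automaton} $\autN$ of size $2^{\mathcal{O}(|\varphi|)}$ that accepts only bad prefixes of $\varphi$ and at least one bad prefix of every word violating $\varphi$. The automaton $\autB_\varphi$ is built by collapsing the accepting states of $\autN$ into the single sink $\rej_\varphi$ and declaring all remaining states accepting; the transition structure of $\autN$ is then reinterpreted \emph{universally}. Finally, $\autB_{\scriptsize\gphi}$ is obtained from $\autB_\varphi$ by adding, at the initial state, a self-loop on every letter that does not already force a transition to $\rej_\varphi$; under the universal reading this spawns a fresh check of $\varphi$ at every position, which is exactly $\LTLglobally\varphi$. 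The correctness proof then uses the bad-prefix characterisation of $\autN$ directly, and the size bound is inherited from the cited construction.
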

\begin{proof} We first describe the construction of the automaton $\autB_{\scriptsize \gphi}$ of the desired form, and then proceed to prove its correctness.

{\bf Construction.}
We construct a universal B\"uchi automaton $\autB_\varphi$ for the formula $\varphi$ such that, for every word $w \in \alphabet^\omega$, it holds that $w$ is accepted by $\autB_\varphi$ if and only if $w \models \varphi$. To this end, we use the following result from~\cite{KupfermanV01}.
Given a syntactically safe LTL formula $\varphi$, we 
can construct a nondeterministic finite automaton $\autN = (Q_{\autN},q_0^{\autN},\delta_{\autN},F_{\autN})$ with at most $2^{\bigo{|\varphi|}}$ states, and such that:\\
-- if $v \in \alphabet^*$ is accepted by $\autN$, then for all $w' \in \alphabet^\omega$ we have $vw' \not\models \varphi$, and\\
-- for every $w \in \alphabet^\omega$, if $w \not \models \varphi$, then there exists a prefix $v$ of $w$ accepted by $\autN$.
Thus, $\autN$ accepts at least one bad prefix of each word $w \in \alphabet^\omega$ that violates $\varphi$. 

The automaton $\autB_\varphi = (Q_\varphi,q_0^\varphi,\delta_\varphi,F_\varphi)$ is obtained from $\autN$ as follows. The set of states of $\autB_\varphi$ consists of those states of $\autN$ that are not accepting, together with a new state $\rej_\varphi \not\in Q_{\autN}$, that is, $Q_\varphi = (Q_{\autN} \setminus F_{\autN}) \cup \{\rej_\varphi\}$. We let $q_0^\varphi = q_0^{\autN}$ and $F_\varphi = Q_\varphi \setminus \{\rej_\varphi\}$. The transition relation of $\autB_\varphi$ is obtained from $\delta_{\autN}$ by redirecting all transitions leading to states in $F_{\autN}$ to the new state $\rej_\varphi$. Formally, 
\begin{align*}
\delta_\varphi = \left(\delta_{\autN} \cap (Q_\varphi \times \Sigma \times Q_\varphi)\right)& \cup \{(q,\sigma,\rej_\varphi) \mid \sigma \in \alphabet \text{ and }\exists q' \in F_{\autN}.\ (q,\sigma,q') \in \delta_{\autN}\}\\&
\cup \{(\rej_\varphi,\sigma,\rej_\varphi) \mid \sigma \in \alphabet\}.
\end{align*}

\noindent
We now construct a universal B\"uchi automaton $\autB_{\scriptsize \gphi} = (Q_{\scriptsize \gphi},q_0^{\scriptsize \gphi},\delta_{\scriptsize \gphi},F_{\scriptsize \gphi})$ such that $w$ is accepted by $\autB_{\scriptsize \gphi}$ iff $w \models \gphi$.
We let $Q_{\scriptsize \gphi} = Q_\varphi$, $q_0^{\scriptsize \gphi} = q_0^\varphi$, and  $F_{\scriptsize \gphi} = F_\varphi$. The transition relation $\delta_{\scriptsize \gphi}$ extends $\delta_\varphi$ by adding a self-loop at the initial state $q_0^\gphi$ for all transitions from $q_0^\varphi$ in $\delta_\varphi$ that do not lead to $\rej_\varphi$:
\begin{align*}
\delta_{\scriptsize \gphi} = \delta_\varphi & \cup  \{(q_0^\varphi,\sigma,q_0^\varphi) \mid \sigma \in \alphabet \text{ and }\exists q' \in (Q_\varphi \setminus \{\rej_\varphi\}).\ (q_0^\varphi,\sigma,q') \in \delta_\varphi\}.
\end{align*}

{\bf Correctness.}
Let  $\trans \in \lang{\autB_{\scriptsize \gphi}}$. 
Since $\autB$ is a universal B\"uchi automaton, this means that the unique run graph of $\autB$ on $\trans$ is accepting, which in turn means that each infinite path contains infinitely many occurrences of states in $F_{\scriptsize \gphi}$. Since $F_{\scriptsize \gphi}$ contains all states except $\rej_\varphi$, and $\rej_\varphi$ is a sink state, it follows that every infinite path in the run graph contains only states in 
$F_{\scriptsize \gphi}$. 

Suppose, for the sake of contradiction, that $\trans\not\models\varphi$. Thus, there exists $\omega = \sigma_0,\sigma_1,\ldots\in\Traces(\trans)$ such that $\omega\not\models\gphi$. Let $i \geq 0$ be an index such that $\sigma_i,\sigma_{i+1},\ldots \not\models\varphi$. By the choice of the automaton $\autN$, there exists a prefix of $\sigma_i,\sigma_{i+1},\ldots$ accepted by $\autN$.
Since $\omega \in \Traces(\trans)$ and $\trans \in \lang{\autB_{\scriptsize \gphi}}$, every path in the run graph corresponding to $\omega$ never visits $\rej_\varphi$. Thus, since $\delta_{\scriptsize \gphi}$ contains a self-loop at state $q_0^{\scriptsize \gphi}$ with letters not leading to $\rej_\varphi$, there exists a path in the run graph corresponding to $\sigma_0,\ldots,\sigma_{i-1}$ that ends in $q_0^{\scriptsize \gphi}$. By the definition of $\delta_{\varphi}$ and the existence of an accepting run of $\autN$ on a prefix of $\sigma_i,\sigma_{i+1},\ldots$ we can conclude that there exists a path in the run graph of $\autB_{\scriptsize \gphi}$ corresponding to $\omega$ that reaches $\rej_\varphi$, which is a contradiction.

For the other direction, consider a transition system $\trans$ such that $\trans \models \gphi$, and suppose that $\trans \not \in \lang{\autB_{\scriptsize \gphi}}$. This means that there exists an infinite path in the run graph of $\autB_{\scriptsize \gphi}$ on $\trans$ that visits states in $F_{\scriptsize \gphi}$ only finitely many times, which means that this path eventually reaches $\rej_{\varphi}$. Let $\omega = \sigma_0,\sigma_1,\ldots\in\Traces(\trans)$ be the word corresponding to this path, and $i\geq 0$ be the last occurrence of $q_0^{\scriptsize \gphi}$ on this path and $j > i$ be the index of the first occurrence of $\rej_\varphi$.
Due to the definition of $\delta_\varphi$, this implies that there exists an accepting run of $\autN$ on the word $\sigma_i,\ldots,\sigma_{j-1}$. Thus, $\sigma_i,\sigma_{i+1},\ldots\not\models\varphi$, which in turn means  that $\omega\not\models\gphi$. This is a contradiction with $\trans \models \gphi$, and thus we can conclude that $\trans \in \lang{\autB_{\scriptsize \gphi}}$.\qed
\end{proof}

\begin{appxprop}\label{prop:rej-trans-appx}
Let $\trans$ be a transition system and let $G = (V,E)$ be the run graph of $\relaxfg(\gphi)$ on $\trans$. Then, $\trans \in\lang{\autB_{\scriptsize \gphi}}$ iff for every  $((s,q),\sigma,(s',q')) \in E$ with $(q,\sigma,q') \in  \Rej(\relaxfg(\gphi))$, $(s,q)$ is not reachable from $(s_0,q_0)$ in $G$.
\end{appxprop}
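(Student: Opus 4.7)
The plan is to exploit the tight structural correspondence between the run graphs of $\autB_{\scriptsize \gphi}$ and $\relaxfg(\gphi)$ on $\trans$: by construction these two graphs use the same transition system $\trans$ and the same automaton transitions, \emph{except} that each transition of the form $(q,\sigma,\rej_\varphi) \in \delta_{\scriptsize \gphi}$ is replaced in $\relaxfg(\gphi)$ by $(q,\sigma,q_0^{\scriptsize \gphi})$, and $\Rej(\relaxfg(\gphi))$ is precisely the set of these redirected transitions. Moreover, since $\rej_\varphi$ is the unique non-accepting sink of $\autB_{\scriptsize \gphi}$ (Proposition~\ref{prop:aut-globally-appx}), $\trans \in \lang{\autB_{\scriptsize \gphi}}$ is equivalent to the node $\rej_\varphi$ being unreachable from $(s_0,q_0^{\scriptsize \gphi})$ in the run graph $G'$ of $\autB_{\scriptsize \gphi}$ on $\trans$. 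I would carry out the proof in terms of this reachability reformulation.

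For the $(\Rightarrow)$ direction I would argue the contrapositive by establishing the following invariant by induction on path length in $G$: for every node $(s,q)$ reachable in $G$, either (a) $(s,q)$ is reachable in $G'$, or (b) some node $(s'',\rej_\varphi)$ is already reachable in $G'$. The base case is trivial; in the inductive step, extending by a non-rejecting edge preserves (a) because such edges also belong to $G'$, while extending by an edge in $\Rej(\relaxfg(\gphi))$, which in $\autB_{\scriptsize \gphi}$ is an edge into $\rej_\varphi$ with the same transition-system part, immediately yields (b). Now, if $(s,q)$ is reachable in $G$ and has an outgoing edge with label in $\Rej(\relaxfg(\gphi))$, applying the invariant yields (b) directly, or (a), in which case firing the corresponding $\autB_{\scriptsize \gphi}$-transition from the $G'$-copy of $(s,q)$ reaches $\rej_\varphi$; either way $\rej_\varphi$ is reachable in $G'$, so $\trans \notin \lang{\autB_{\scriptsize \gphi}}$.

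For the $(\Leftarrow)$ direction I would also proceed by contrapositive: assuming $\trans \notin \lang{\autB_{\scriptsize \gphi}}$, pick a shortest path in $G'$ from $(s_0,q_0^{\scriptsize \gphi})$ to some node $(s',\rej_\varphi)$. By minimality, every node strictly before $(s',\rej_\varphi)$ has its automaton component in $Q_{\scriptsize \gphi}\setminus\{\rej_\varphi\}$, so all edges of that prefix belong to $\delta_{\scriptsize \gphi}\setminus\{(q,\sigma,q'):q'=\rej_\varphi\}\subseteq \delta$, and hence the very same prefix is a path in $G$ ending in the predecessor $(s,q)$ of the final edge. Since the final edge in $G'$ has label $(q,\sigma,\rej_\varphi)\in\delta_{\scriptsize \gphi}$, the corresponding edge $((s,q),\sigma,(s',q_0^{\scriptsize \gphi}))$ lies in $G$ with automaton label in $\Rej(\relaxfg(\gphi))$, exhibiting the desired reachable rejecting edge.

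The only non-trivial step is formulating and maintaining the two-case invariant for the forward direction; once that correspondence between $G$-reachability and $G'$-reachability is in place, both directions reduce to a short case analysis on whether the offending edge is rejecting. Everything else is bookkeeping about how $\delta$ is obtained from $\delta_{\scriptsize \gphi}$.
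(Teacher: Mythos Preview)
Your proposal is correct and follows essentially the same approach as the paper's proof: both directions are argued by contradiction/contrapositive, exploiting that paths in $G$ avoiding $\Rej$-edges are literally paths in $G'$, and that the first $\Rej$-edge along any $G$-path corresponds to entering $\rej_\varphi$ in $G'$. Your two-case invariant is simply a more explicit packaging of what the paper handles with a ``without loss of generality'' reduction to a $\Rej$-free prefix, and your shortest-path argument for the converse makes precise the paper's informal ``this path corresponds to a path in $G$''.
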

\begin{proof}
Suppose, for the sake of contradiction, that $\trans \in\lang{\autB_{\scriptsize \gphi}}$ and let $G'=(V',E')$ be the run graph of $\autB_{\scriptsize \gphi}$ on $\trans$. Suppose that there exists a path $(s_0,q_0),\sigma_0,\ldots (s_l,q_l)$ in $G$ such that there exists an edge $((s_l,q_l),\sigma,(s',q')) \in E$ with $(q,\sigma,q') \in  \Rej(\relaxfg(\gphi))$. Without loss of generality, we assume that 
$(q_i,\sigma_i,q_{i+1}) \not \in \Rej(\relaxfg(\gphi))$ for all $i=0,\ldots,l-1$. Then, the sequence $(s_0,q_0),\sigma_0,\ldots (s_l,q_l),\sigma,(s',q')$ corresponds to a path in the run graph $G'$ of $\autB_{\scriptsize \gphi}$ on $\trans$ which enters the state $\rej_\varphi$. Since $\rej_\varphi$ is a non-accepting sink state, we conclude that $G'$ is not accepting. This implies $\trans \not\in\lang{\autB_{\scriptsize \gphi}}$, which is a contradiction.

Suppose now that for every node $(s,q)$ reachable in $G$ from $(s_0,q_0)$ and every edge $((s,q),\sigma,(s',q')) \in E$ we have that $(q,\sigma,q') \not\in  \Rej(\relaxfg(\gphi))$. 
Assume that $\trans \not\in\lang{\autB_{\scriptsize \gphi}}$, which means that there exists an infinite path from $(s_0,q_0)$ in the run graph of $\autB_{\scriptsize \gphi}$ on $\trans$ that reaches the state $\rej_\varphi$. This path corresponds to a path in $G$ from $(s_0,q_0)$ to some state $(s,q)$ for which there is an edge  $((s,q),\sigma,(s',q')) \in E$ with $(q,\sigma,q') \in  \Rej(\relaxfg(\gphi))$, which is a contradiction.\qed
\end{proof}

\begin{appxprop}\label{prop:anotfg-appx}
Let $\trans = (S,s_0,\tau)$ be a finite-state transition system, and $G = (V,E)$ be the run graph of  $\relaxfg(\gphi)$ on $\trans$. Then, $\trans \models \fgphi$ if and only if there exists a \fgvalid\ $|\trans|$-bounded annotation for $\trans$ and $\relaxfg(\gphi)$. 
\end{appxprop}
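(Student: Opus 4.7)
The plan is to prove both directions via an intermediate characterization: $\trans \models \fgphi$ if and only if every infinite path from $(s_0, q_0)$ in the run graph $G$ of $\relaxfg(\gphi)$ on $\trans$ contains only finitely many edges of $\Rej(\relaxfg(\gphi))$. This is the ``$\fgphi$-analogue'' of Proposition~\ref{prop:rej-trans}, and it rests on the same interpretation of a rejecting edge as a ``restart'' in $\relaxfg(\gphi)$ that corresponds to a visit of $\rej_\varphi$ in $\autB_\gphi$.

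For the $(\Leftarrow)$ direction, assume a $\fgvalid$ $|\trans|$-bounded annotation $\anotfg$ exists. Along any infinite path from $(s_0,q_0)$ in $G$, the annotation values are non-decreasing and strictly increase at each rejecting edge, so at most $|\trans|$ rejecting edges can appear before the bound is exceeded. To conclude $\trans \models \fgphi$, I would argue the contrapositive: if some $\omega \in \Traces(\trans)$ violates $\fgphi$, then infinitely many indices $i$ satisfy $\omega[i..]\not\models\varphi$. For each such $i$, the bad-prefix automaton $\autN$ used in the proof of Proposition~\ref{prop:aut-globally} accepts a finite prefix of $\omega[i..]$, yielding a partial run of $\autB_\gphi$ from $q_0$ that enters $\rej_\varphi$, and hence a finite run of $\relaxfg(\gphi)$ that traverses an edge of $\Rej(\relaxfg(\gphi))$. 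Iterating this construction produces, for every $n$, a finite path from $(s_0,q_0)$ in $G$ along a prefix of $\omega$ with at least $n$ rejecting edges; since $G$ is finitely branching, K\"onig's lemma then yields an infinite such path with infinitely many rejecting edges, contradicting the annotation bound.

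For the $(\Rightarrow)$ direction, assume $\trans \models \fgphi$. By the characterization above, every infinite path from $(s_0,q_0)$ in $G$ has finitely many rejecting edges, so in particular no reachable cycle of $G$ contains a rejecting edge (iterating such a cycle would yield an infinite path with infinitely many rejecting edges). It follows that
\[
\anotfg(s,q) \defeq \sup\{r \mid \text{some path from } (s_0,q_0) \text{ to } (s,q) \text{ uses } r \text{ rejecting edges}\}
\]
is finite for every reachable $(s,q)$, since any such path decomposes into a simple path plus cycles and the cycles contribute no rejecting edges. Setting $\anotfg(s,q)=\bot$ for unreachable nodes yields a valid annotation directly from the definition. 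For the bound $\anotfg(s,q)\leq|\trans|$, I observe that every rejecting edge ends at a node of the form $(s',q_0)$, and on any path achieving the supremum the $\trans$-components of these restart nodes must be pairwise distinct (otherwise a reachable cycle would contain a rejecting edge); so the number of rejecting edges along such a path is at most $|\trans|$.

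The main obstacle is the intermediate characterization, and specifically the implication from a violating trace to an infinite path with infinitely many rejecting edges. The subtlety arises because $\autB_\gphi$ is universal: in one direction a path with infinitely many rejecting edges directly witnesses, by universality, that infinitely many suffixes of the corresponding trace violate $\gphi$, so $\omega\not\models\fgphi$; in the opposite direction, stitching together the individual bad-prefix runs at the infinitely many bad positions of $\omega$ into a single infinite run of $\relaxfg(\gphi)$ requires handling the self-loops at $q_0$ (which may themselves be rejecting in $\relaxfg(\gphi)$) and appealing to K\"onig's lemma. Once this characterization is in place, defining $\anotfg$ as the supremum, checking validity, and deriving the $|\trans|$-bound by pigeonhole on restart states are routine adaptations of the standard bounded synthesis arguments for co-B\"uchi automata.
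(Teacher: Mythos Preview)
Your proposal is correct and takes essentially the same approach as the paper: the same intermediate characterization (finitely many rejecting edges on every infinite path), the same annotation defined as the maximal rejecting-edge count along paths to $(s,q)$, and the same pigeonhole on restart nodes $(s,q_0)$ for the $|\trans|$-bound. One small remark: the paper avoids K\"onig's lemma entirely, because the iterative construction you describe (each rejecting edge lands at a $q_0$-node, from which the next segment is built) already produces a nested sequence of finite paths whose union is the desired infinite path; invoking K\"onig is superfluous here and, as you stated it, would not by itself guarantee that the limit path carries infinitely many rejecting edges.
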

\begin{proof}
Suppose that $\trans \models \fgphi$. We will fist show that in every infinite path from $(s_0,q_0)$ in $G$ there are at most $|S|$ occurrences of edges whose corresponding transitions are in $\Rej(\relaxfg(\gphi))$, and then we will use this fact to define a \fgvalid\ $|S|$-bounded annotation. Assume, for the sake of contradiction, that there exists an infinite path $(s_0,q_0),\sigma_0,(s_1,q_1),\sigma_1,\ldots$ such that for infinitely many positions $i\geq 0$ it holds that $(q_i,\sigma_i,q_{i+1}) \in \Rej(\relaxfg(\gphi))$. Let $i_1 < i_2 <\ldots$ be a sequence of such positions. By the construction of $\relaxfg(\gphi)$, we have $q_{i_j+1} = q_0$ for each $i_j$. Thus, using reasoning similar to that in Proposition~\ref{prop:rej-trans}, we can show that the trace $\sigma_0,\sigma_1,$ contains infinitely many positions $k$ such that $\sigma_k,\sigma_{k+1},\ldots \not\models\varphi$. This means that $\sigma_0,\sigma_1,\ldots \not \models \fgphi$. Since $\sigma_0,\sigma_1,\ldots \in \Traces(\trans)$, we can conclude that $\trans \not \models \fgphi$, which is a contradiction. 

Since the number of distinct nodes in $G$ of the form $(s,q_0)$ is $|S|$, we obtain an upper bound of $|S|$ occurrences of transitions from $\Rej(\relaxfg(\gphi))$ on every path in $G$. Thus, we can construct a \fgvalid\ $|S|$-bounded annotation $\anotfg$ by mapping each reachable node $(s,q)$ to the maximal number of transitions from $\Rej(\relaxfg(\gphi))$ on a path from $(s_0,q_0)$ to $(s,q)$, and mapping each unreachable node to $\bot$.

For the other direction, suppose that $\anotfg$ is a \fgvalid\ $|S|$-bounded annotation for $\trans$ and $\relaxfg(\gphi)$. Assume that $\trans \not\models \fgphi$. This means that there exists a trace $w = \sigma_0,\sigma_1,\ldots \in \Traces(\trans)$ such that for every position $i$ it holds that $\sigma_i,\sigma_i+1,\ldots \not\models \gphi$. Let $s_0,\sigma_0,s_1,\sigma_1\ldots$ be the execution of $\trans$ corresponding to $w$. Since $\sigma_i,\sigma_i+1,\ldots \not\models \gphi$, with reasoning similar to Proposition~\ref{prop:rej-trans} we can establish that there exists a path in
$G$ starting from $(s_i,q_0)$ that eventually takes an edge corresponding to a transition in $\Rej(\relaxfg(\gphi))$, and by the construction of $\relaxfg(\gphi)$, this transition leads to the node $(s_j,q_0)$ for some $j > i$. Thus, by induction, we can establish the existence of an infinite path $(s_0,q_0),(s_1,q_1),\ldots$ in $G$ that contains infinitely many occurrences of edges whose transitions are in $\Rej(\relaxfg(\gphi))$. Since the annotation $\anotfg$ is \fgvalid, we can show by induction that for each $i\geq 0$ it holds that $\anotfg(s_i,q_i) \in \nats$ and $\anotfg(s_{i+1},q_{i+1})  \geq \anotfg(s_i,q_i)$. Since $G$ is finite, this path contains an edge  $((s_i,q_i),\sigma_i,(s_{i+1},q_{i+1}))$ for which $(q_i,\sigma_i,q_{i+1})\in\Rej(\relaxfg(\gphi))$, and which is such that there exists $j \leq i$ such that $(s_{i+1},q_{i+1}) = (s_j,q_j)$. Since the annotation $\anotfg$ is \fgvalid, we have that $\anotfg(s_j,q_j) \leq \anotfg(s_i,q_i)$ and $\anotfg(s_i,q_i) < \anotfg(s_{i+1},q_{i+1})$, which contradicts $(s_{i+1},q_{i+1}) = (s_j,q_j)$. Thus, by contradiction, we conclude that $\trans \models \fgphi$. \qed
\end{proof}

\begin{appxlemma}\label{lem:encoding-weights-appx}
Let $\trans'$ and $\trans''$ be transition systems such that $\trans' \models \spec$ and $\trans'' \models \spec$. Let $a'$ and $a''$ be variable assignments satisfying the constraint system, such that $a'$ is an optimal assignment consistent with $\trans'$, and $a''$ is an optimal assignment consistent with $\trans''$. Furthermore, let $w'$ and $w''$ be the sums of the weights of the soft clauses satisfied in $a'$ and $a''$ respectively. Then, it holds that
$\val{\trans'}{\gphi_1 \wedge \ldots \wedge \gphi_n} < \val{\trans''}{\gphi_1 \wedge \ldots \wedge \gphi_n} \text{ iff } w' < w''.$
\end{appxlemma}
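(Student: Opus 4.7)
The plan is to translate the claim into a numeric comparison. Let $(v_{j,1}, v_{j,2}, v_{j,3}) \defeq \val{\trans}{\gphi_j}$ and $V_k \defeq \sum_{j=1}^n v_{j,k}$, so that $\val{\trans}{\gphi_1 \wedge \ldots \wedge \gphi_n} = (V_3, V_2, V_1)$ compared under the lexicographic order. I would first argue that for any $\trans$ satisfying $\spec$ and any $\trans$-consistent optimal assignment, the total weight of satisfied soft clauses equals $W(\trans) \defeq V_1 + n \cdot V_2 + n^2 \cdot V_3$; then conclude by showing that the numeric order on $W$ coincides with the lex order on $(V_3, V_2, V_1)$ within the feasible region.

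For the first step, the key sub-claim is an indicator property: in an optimal $\trans$-consistent assignment, $\soft_{\LTLglobally}^j$ is satisfied iff $v_{j,1}=1$, $\soft_{\LTLfinally\LTLglobally}^j$ is satisfied iff $v_{j,2}=1$, and $\soft_{\LTLglobally\LTLfinally}^j$ is satisfied iff $v_{j,3}=1$. The characterization of $\soft_{\LTLfinally\LTLglobally}^j$ follows directly from Proposition~\ref{prop:anotfg} applied to $\relaxfg(\gphi_j)$, and that of $\soft_{\LTLglobally\LTLfinally}^j$ from the standard bounded-synthesis characterization applied to the co-B\"uchi automaton $\autA_j$ for $\gfphi_j$ (together with $\fgphi_j \Rightarrow \gfphi_j$). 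For $\soft_{\LTLglobally}^j$, the forward direction exploits the bit-width bound on $\anotfgnj$: a valid annotation with $\anotfgnj_{s_0,q_0^j}=b$ forces every reachable node to have annotation exactly $b$ (the value cannot exceed $b$ and cannot decrease along edges), so no transition in $\Rej(\relaxfg(\gphi_j))$ is reachable, and Proposition~\ref{prop:rej-trans} yields $\trans\models\gphi_j$; the reverse direction assigns $b$ to every reachable node, producing a valid annotation that satisfies $\soft_{\LTLglobally}^j$. Since the annotation variables for distinct $j$ are disjoint, and within a fixed $j$ the three soft clauses are monotone in the shared variables (satisfying the stronger clause only entails setting more variables to true), an optimal assignment simultaneously satisfies every satisfiable soft clause, giving $w = \sum_j(v_{j,1}+n\cdot v_{j,2}+n^2\cdot v_{j,3}) = W(\trans)$.

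For the second step, the structural constraint $V_1 \leq V_2 \leq V_3 \leq n$ (inherited from the semantic containment $\gphi_j \Rightarrow \fgphi_j \Rightarrow \gfphi_j$) is what makes the weights $1,n,n^2$ faithful to the lex order. I would do a case split on the first coordinate where the two tuples disagree. If $V_3'' > V_3'$, then $V_3' \leq n-1$, the $n^2$-weighted gain is at least $n^2$, and the worst-case loss from the lower coordinates is at most $n\cdot V_2' + V_1' \leq (n+1)V_3' \leq n^2-1$, so $W(\trans'') > W(\trans')$; the case $V_3''=V_3',\,V_2'' > V_2'$ is handled similarly using $V_2' < V_3' \leq n$; the remaining case is immediate. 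The converse direction then follows by trichotomy.

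The main obstacle is the indicator property for $\soft_{\LTLglobally}^j$, which hinges on the subtle interaction between the valid-annotation constraints $C_\anotfg^j$ and the bit-width bound on $\anotfgnj$. A related subtlety is that the weights $1,n,n^2$ are tight: without the nested monotonicity $V_1 \leq V_2 \leq V_3$, they would fail to encode the lex order (e.g.\ $(V_2,V_1)=(n,0)$ and $(0,n)$ would tie at weight $n$ but differ in lex order), so the semantic containment between $\gphi_j,\fgphi_j,\gfphi_j$ is essential to the correctness of the encoding.
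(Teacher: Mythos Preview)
Your proposal is correct and follows essentially the same approach as the paper: both establish the weight formula $w = V_1 + nV_2 + n^2V_3$ for an optimal assignment, then verify via a case split on the first differing coordinate that this numeric order agrees with the lexicographic order, using the semantic containment $V_1 \leq V_2 \leq V_3$ and trichotomy for the converse. You spell out the indicator property for $\soft_{\LTLglobally}^j$ more carefully than the paper does (which simply asserts that an optimal assignment satisfies exactly $v_k'$ of the weight-$n^{k-1}$ clauses), and your closing remark on the tightness of the weights is a useful observation not made explicit in the paper.
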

\begin{proof}
Let $(v_3',v_2',v_1')  = \val{\trans'}{\gphi_1 \wedge \ldots \wedge \gphi_n}$ and 
$(v_3'',v_2'',v_1'')  = \val{\trans''}{\gphi_1 \wedge \ldots \wedge \gphi_n}$. This means that there are exactly $v_1'$ distinct indices $i \in \{1,\ldots,n\}$ such that $\trans' \models \gphi_i$,  $v_2'$ distinct indices $i \in \{1,\ldots,n\}$ such that $\trans' \models \fgphi_i$ and $v_3'$ distinct indices $i \in \{1,\ldots,n\}$ such that $\trans' \models \gfphi_i$. Since $a'$ is an optimal satisfying assignment corresponding to $\trans'$, we have that $a'$ satisfies exactly $v_1'$ of the soft clauses $\soft_{\tiny\LTLglobally}^{j}$, exactly $v_2'$ of the soft clauses $\soft_{\tiny\LTLfinally\LTLglobally}^{j}$ and exactly $v_3'$ of the soft clauses $\soft_{\tiny\LTLglobally\LTLfinally}^{j}$. This means that $w' = v_1' + v_2' \cdot n + v_3' \cdot n^2$. In a similar way we can conclude that $w'' = v_1'' + v_2'' \cdot n + v_3'' \cdot n^2$ holds for $\trans''$.

First, suppose that $(v_3',v_2',v_1') < (v_3'',v_2'',v_1'')$. There are three possible cases:

\noindent
\emph{Case 1:} $v_3' = v_3''$, $v_2' = v_2''$ and $v_1' < v_1''$. Then, $w'' - w' = (v_1'' - v_1') > 0$.

\noindent
\emph{Case 2:} $v_3' = v_3''$ and $v_2' < v_2''$. Then, $w'' - w' = (v_2'' - v_2')\cdot n  + (v_1'' - v_1')$.
Since $\trans' \models \gphi_i$ implies $\trans' \models \fgphi_i$, we have that $v_1' - v_1'' \leq n-1$, due to the fact that $v_2'' - v_2' \geq 1$. Thus, we conclude $w'' - w' \geq n - (n-1) = 1 >0$.

\noindent
\emph{Case 3:} $v_3' < v_3''$. Now, $w'' - w' = (v_3'' - v_3')\cdot n^2 + (v_2'' - v_2')\cdot n  + (v_1'' - v_1')$. Again, since $\trans' \models \gphi_i$ implies $\trans' \models \gfphi_i$ and 
$\trans' \models \fgphi_i$ implies $\trans' \models \gfphi_i$, we have that $v_1' - v_1'' \leq n-1$ and 
$v_2' - v_2'' \leq n-1$, both due to the fact that $v_3'' - v_3' \geq 1$. Thus, we conclude $w'' - w' \geq n^2 - (n-1)\cdot n -(n-1) = 1 >0$.

In all three cases we showed that $w' < w''$.

For the other direction, suppose that $w' < w''$. If we assume that $(v_3',v_2',v_1') \geq (v_3'',v_2'',v_1'')$, then we can show as above that $w'' \geq w'$, which is a contradiction. Hence, we have that $(v_3',v_2',v_1') < (v_3'',v_2'',v_1'')$, which concludes the proof.\qed
\end{proof}

\begin{appxthm}\label{thm:encoding-correctness-appx}
Let $\autA$ be a given co-B\"uchi automaton for $\varphi$, and for each $j \in \{1,\ldots,n\}$, let $\autB_j = \relaxfg(\gphi_j)$ be the universal automaton for $\gphi_j$ constructed as in Section~\ref{sec:automata-safety}, and let $\autA_j$ be a universal co-B\"uchi automaton for $\gfphi_j$. 
The constraint system for bound $b \in \nats$ is satisfiable if and only if there exists an implementation $\trans$ with $|\trans| \leq b$  such that $\trans \models \varphi$. Furthermore, from the optimal satisfying assignment to the variables $\tau_{s,\inpval,s'}$ and $o_{s,\inpval}$, one can extract a transition system $\trans^*$ such that for every transition system $\trans$ with $|\trans| \leq b$ and $\trans \models \varphi$ it holds that $\val\trans{\gphi_1 \wedge \ldots \wedge \gphi_n} \leq \val{\trans^*}{\gphi_1 \wedge \ldots \wedge \gphi_n}$.
\end{appxthm}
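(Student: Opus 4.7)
The plan is to prove the two claims of the theorem in sequence, reducing each to the machinery already established, namely the standard bounded synthesis correctness result recalled in Section~\ref{sec:bounded-synthesis}, Propositions~\ref{prop:rej-trans-appx} and~\ref{prop:anotfg-appx} for the translations of $\gphi_j$ and $\fgphi_j$ into annotation existence, and Lemma~\ref{lem:encoding-weights-appx} which translates weight sums into lexicographic values.

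For the satisfiability biconditional, I would observe that $C_\tau \wedge C_\anot$ is precisely the bounded synthesis SAT encoding for the co-B\"uchi automaton $\autA$, and hence is satisfiable iff there exists $\trans$ with $|\trans|\leq b$ and $\trans \models \varphi$. The additional hard clauses $C_\anotfg^j$ and $C_\anot^j$ do not affect satisfiability because they constrain only the fresh variables $\anotfgbj,\anotfgnj,\anotbj,\anotnj$, and every implication they impose is trivially satisfied by setting all reachability bits $\anotfgbj_{s,q}$ and $\anotbj_{s,q}$ to false.

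For the optimality claim I would argue in two directions. First, from the optimal assignment $a^*$ I read off $\trans^*$ via the variables $\tau_{s,\inpval,s'}$ and $o_{s,\inpval}$; since $a^*$ satisfies the hard constraints, $\trans^* \models \varphi$ and $|\trans^*|\leq b$, and by Lemma~\ref{lem:encoding-weights-appx} the weight $w^*$ equals $v_1^* + v_2^*\cdot n + v_3^*\cdot n^2$ with $(v_3^*,v_2^*,v_1^*) = \val{\trans^*}{\gphi_1\wedge\ldots\wedge\gphi_n}$. Second, given any competitor $\trans$ with $|\trans|\leq b$ and $\trans\models\varphi$, I construct a satisfying assignment $a$ whose soft-weight realizes the full value $\val\trans{\gphi_1\wedge\ldots\wedge\gphi_n}$. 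For each $j$ the annotations are chosen by cases on $\val\trans{\gphi_j}$: if $\trans \models \gphi_j$, Proposition~\ref{prop:rej-trans-appx} guarantees no reachable rejecting edges, so setting $\anotfgbj_{s,q}$ true and $\anotfgnj_{s,q}=b$ on all reachable nodes (and $\bot$ elsewhere) satisfies $C_\anotfg^j$ vacuously on rejecting edges and makes all three soft clauses $\soft_{\tiny\LTLglobally}^j, \soft_{\tiny\LTLfinally\LTLglobally}^j, \soft_{\tiny\LTLglobally\LTLfinally}^j$ true; if $\trans \not\models \gphi_j$ but $\trans\models\fgphi_j$, Proposition~\ref{prop:anotfg-appx} provides a $b$-bounded $\fg$-valid annotation, satisfying $\soft_{\tiny\LTLfinally\LTLglobally}^j$ and $\soft_{\tiny\LTLglobally\LTLfinally}^j$ but not $\soft_{\tiny\LTLglobally}^j$; if only $\trans\models\gfphi_j$, the $\anotfgbj$ variables are left false and a standard valid co-B\"uchi annotation for $\autA_j$ is placed on $\anotbj,\anotnj$, satisfying only $\soft_{\tiny\LTLglobally\LTLfinally}^j$; and otherwise no soft clause of index $j$ is satisfied. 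Since $a^*$ is MaxSAT-optimal, $w^*\geq w$, and combining with Lemma~\ref{lem:encoding-weights-appx} applied to both $\trans^*$ and $\trans$ gives $\val\trans{\gphi_1\wedge\ldots\wedge\gphi_n} \leq \val{\trans^*}{\gphi_1\wedge\ldots\wedge\gphi_n}$.

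The main obstacle is the careful case analysis used to build annotations that match exactly the value $\val\trans{\gphi_1\wedge\ldots\wedge\gphi_n}$ while still satisfying the hard annotation constraints. The most delicate point is the $\gphi_j$-case: simultaneously satisfying $\soft_{\tiny\LTLglobally}^j$ (which forces $\anotfgnj_{s_0,q_0^j}=b$) and $C_\anotfg^j$ is only possible because Proposition~\ref{prop:rej-trans-appx} rules out reachable rejecting edges, so the strict-increase requirement on rejecting transitions never fires on reachable nodes. Once this compatibility is verified, the weight-to-value correspondence of Lemma~\ref{lem:encoding-weights-appx} ties everything together, and the remaining bookkeeping is routine.
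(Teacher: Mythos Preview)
Your proposal is correct and follows essentially the same approach as the paper's proof: both establish the biconditional by observing that the additional hard constraints $C_\anotfg^j$ and $C_\anot^j$ are trivially satisfiable by setting all reachability bits to $\falseval$, and both derive optimality by taking a satisfying assignment for an arbitrary competitor $\trans$ and invoking Lemma~\ref{lem:encoding-weights-appx}. Your version is more explicit in one respect: you spell out the case analysis (via Propositions~\ref{prop:rej-trans-appx} and~\ref{prop:anotfg-appx}) that produces an assignment for $\trans$ whose soft-clause weight exactly encodes $\val\trans{\gphi_1\wedge\ldots\wedge\gphi_n}$, whereas the paper leaves this to the proof of Lemma~\ref{lem:encoding-weights-appx}, which simply asserts that an optimal assignment consistent with a given system satisfies precisely the soft clauses corresponding to its value.
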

\begin{proof}
The first part of the claim follows from the correctness of the classical bounded synthesis approach. More precisely, if the constraint system is satisfiable, then there exists a satisfying assignment, which, in particular satisfies the constraints asserting the existence of a transition system $\trans$ of size less than or equal to $b$, and the existence of valid annotation of the run graph of $\autA$ on $\trans$. If, on the other hand, there exists a transition system $\trans$ such that $|\trans| \leq b$ and $\trans \models \varphi$, then there exists a variable assignment $a$ consistent with $\trans$ that satisfies the constraints asserting the existence of valid annotation for $\autA$ and $\trans$. It remains to show that $a$ can be chosen in a way that satisfies the remaining hard constraints as well. To see that, notice that all the constraints for the annotations $\anotfg_j$ and $\anot_j$ can be satisfied (not necessarily in an optimal way) by setting all the variables $\anotfgbj_{s,q}$ and $\anotbj_{s,q}$ to $\falseval$. This completes the proof of the first statement.

Now, let $a^*$ be an optimal solution to the MaxSAT problem, and $\trans^*$ be the transition system extracted from $a^*$. Consider a transition system $\trans$ such that $|\trans| \leq b$ and $\trans \models \varphi$. Then, as we showed above, there exists a satisfying assignment $a$ consistent with $\trans$. Let $w^*$ be the sum of weights of soft clauses satisfied by $a^*$, and $w$ be the sum of weighs of soft clauses satisfied by $a$. Since $a^*$ is an optimal satisfying assignment we have that $w \leq w^*$. Thus, applying Lemma~\ref{lem:encoding-weights-appx} we obtain $\val\trans{\gphi_1 \wedge \ldots \wedge \gphi_n} \leq \val{\trans^*}{\gphi_1 \wedge \ldots \wedge \gphi_n}$, which concludes the proof of the second claim of the theorem.\qed
\end{proof}

\begin{appxprop}\label{prop:encoding-size-appx}
Let $\autA$ be a given co-B\"uchi automaton for $\varphi$, and for each $j \in \{1,\ldots,n\}$, let $\autB_j = \relaxfg(\gphi_j)$ be the universal B\"uchi automaton for $\gphi_j$ constructed as in Section~\ref{sec:automata-safety}, and let $\autA_j$ be a universal co-B\"uchi automaton for $\gfphi$. 
The constraint system for bound $b \in \nats$ has weights in $\mathcal{O}(n^2)$,
\[
\begin{array}{l}
\mathcal{O}\Big(
(b^2 + b \cdot |\outv|)\cdot 2^{|\inpv|} +  
b \cdot |Q|\cdot (1+ \log(b \cdot |Q|)) +\\ 
\phantom{\mathcal{O}(}\sum_{j=1}^n\big(b \cdot |Q_j| (1 + \log(b\cdot |Q_j|))\big)
+ \sum_{j=1}^n\big(b \cdot |\widehat Q_j| (1 + \log(b\cdot |\widehat Q_j|))\big)
\Big)
\end{array}\]
variables, and its size (before conversion to CNF) is
\[
\begin{array}{l}
\mathcal{O}\Big(
|Q|^2 \cdot b^2 \cdot 2^{|\inpv|} \cdot (d + \log(b\cdot |Q|)) + \\
\phantom{\mathcal{O}(}\sum_{j=1}^n\big(|Q_j|^2 \cdot b^2 \cdot 2^{|\inpv|} \cdot (d_j + r_j + \log(b\cdot |Q_j|))\big) +
\\
\phantom{\mathcal{O}(}\sum_{j=1}^n\big(|\widehat Q_j|^2 \cdot b^2 \cdot 2^{|\inpv|} \cdot (\widehat d_j + \log(b\cdot |\widehat Q_j|))\big)
\Big), 
\end{array}\]
\[
\begin{array}{lllllll}
\text{ where} &d &=& \max_{s,q,\inpval,q'}|\delta_{s,q,\inpval,q'}|,&
d_j &=& \max_{s,q,\inpval,q'}|\delta_{s,q,\inpval,q'}^j|,\\
&\widehat d_j &=& \max_{s,q,\inpval,q'}|\widehat \delta_{s,q,\inpval,q'}^j|, \text{ and } &
r_j &=& \max_{s,q,\inpval,q'}|\rej^j(s,q,q',\inpval)|.
\end{array}
\]
\end{appxprop}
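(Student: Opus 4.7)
The plan is to prove Proposition~\ref{prop:encoding-size-appx} by a direct structural accounting over the variables and constraints listed in Section~\ref{sec:maxsat}, since the result is a syntactic size bound rather than a semantic statement. I would organize the proof in three parts: variables, hard constraints, and soft constraints (including weights).

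For the variable count, I would enumerate the five families defined in the encoding. The transition and output variables $\tau_{s,\inpval,s'}$ and $o_{s,\inpval}$ contribute $\mathcal{O}((b^2 + b\cdot|\outv|)\cdot 2^{|\inpv|})$. For the annotation of $\autA$, each pair $(s,q)\in S\times Q$ yields one Boolean variable $\anotb_{s,q}$ plus a bit-vector $\anotn_{s,q}$ of length $\log(b\cdot|F|)\leq \log(b\cdot|Q|)$, giving $\mathcal{O}(b\cdot|Q|\cdot(1+\log(b\cdot|Q|)))$. The annotations $\anotfg_j$ over $\autB_j$ and $\anot_j$ over $\autA_j$ contribute the two corresponding sums over $j$. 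Adding these gives exactly the stated variable bound.

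For the hard constraints, I would take each constraint schema in turn. The input-enabledness formula $C_\tau$ has size $\mathcal{O}(b^2 \cdot 2^{|\inpv|})$, which is absorbed into the first summand. The validity constraint $C_\anot$ is a conjunction indexed by $(q,q',s,s',\inpval)\in Q^2\times S^2\times \ialphabet$, and each conjunct contains the automaton-transition formula $\delta_{s,q,\inpval,q'}$ of size at most $d$ together with a bit-vector comparison of width $\log(b\cdot|Q|)$, giving the summand $|Q|^2\cdot b^2 \cdot 2^{|\inpv|}\cdot(d+\log(b\cdot|Q|))$. The schemas $C_\anotfg^j$ and $C_\anot^j$ are counted analogously; the only extra contribution is the $\rej^j(s,q,q',\inpval)$ formula inside $\succa_\anotfg^j$, which accounts for the extra $r_j$ term in the first summation. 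The soft constraints $\soft_{\tiny\LTLglobally}^{j},\soft_{\tiny\LTLfinally\LTLglobally}^{j},\soft_{\tiny\LTLglobally\LTLfinally}^{j}$ each have constant size (a Boolean reachability flag together with at most one equality against the bit-vector of width $\log(b\cdot|Q_j|)$), and their $\mathcal{O}(n)$ total size is subsumed. Finally, the weights used are $1$, $n$ and $n^2$, so the maximum weight is in $\mathcal{O}(n^2)$.

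I do not expect a genuine obstacle here; the claim is essentially a bookkeeping exercise and the main risk is misallocating factors. The one place where care is required is in the size of $\succa_\anotfg^j$, because it has to encode both the rejection-test formula $\rej^j$ (hence the $r_j$ term) and the ordered bit-vector comparison (hence the $\log(b\cdot|Q_j|)$ term), and these must be distinguished from the automaton transition formula $\delta^j_{s,q,\inpval,q'}$ which contributes $d_j$. I would explicitly write out the decomposition of each implication body into these three parts before multiplying by the number $|Q_j|^2\cdot b^2 \cdot 2^{|\inpv|}$ of implications, so that the $(d_j+r_j+\log(b\cdot|Q_j|))$ factor appears unambiguously.
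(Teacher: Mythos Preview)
Your proposal is correct and follows essentially the same approach as the paper: a direct structural accounting that enumerates the variable families (transition/output variables and the three annotation encodings) and then the constraint schemas $C_\tau$, $C_\anot$, $C_\anotfg^j$, $C_\anot^j$, and the soft clauses, bounding each by counting index tuples times per-conjunct size. You also correctly isolate the $r_j$ term as coming from the $\rej^j$ subformula inside $\succa_\anotfg^j$, and the $\mathcal{O}(n^2)$ weight bound from the $1,n,n^2$ weights, exactly as in the paper.
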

\begin{proof}
The constraint system is defined in terms of the following variables:
\begin{itemize}
\item Boolean variables $\tau_{s,\inpval,s'}$ and $o_{s,\inpval}$ representing the transition system. The total number of these variables is $b^2 \cdot 2^{|\inpv|} + b \cdot |\outv| \cdot 2^{|\inpv|}$.
\item Boolean variables $\anotb_{s,q}$ and vectors of Boolean variables $\anotn_{s,q}$ representing the annotation $\anot$. The total number of bits is $b \cdot |Q| +b \cdot |Q|\cdot \log(b \cdot |Q|)$.
\item Boolean variables $\anotfgbj_{s,q}$ and vectors of Boolean variables $\anotfgnj_{s,q}$ representing the annotations $\anotfg_j$. The total number of bits is $\sum_{j=1}^n\big(b \cdot |Q_j| (1 + \log(b\cdot |Q_j|))\big)$.
\item Boolean variables $\anotbj_{s,q}$ and vectors of Boolean variables $\anotnj_{s,q}$ representing the annotation $\anot_j$. The total number of bits is $\sum_{j=1}^n\big(b \cdot |\widehat Q_j| (1 + \log(b\cdot |\widehat Q_j|))\big)$. 
\end{itemize}

The sum of the above quantities yields the total number of Boolean variables.

The constraint system consists of the following constraints:
\begin{itemize}
\item Constraints $C_\tau$ encoding input-enabledness, of size $b^2 \cdot 2^{|\inpv|}$.
\item Constraints $C_\anot$ for valid annotation $\anot$ of size $\mathcal{O}\big(|Q|^2 \cdot b^2 \cdot 2^{|\inpv|} \cdot (d + \log(b\cdot |Q|))\big)$.
\item Hard constraints $C_\anotfg^{j}$ for valid annotations $\anotfg_{j}$, of size \[\mathcal{O}\big(\sum_{j=1}^n\big(|Q_j|^2 \cdot b^2 \cdot 2^{|\inpv|} \cdot (d_j + r_j + \log(b\cdot |Q_j|))\big)\big).\]
\item Hard constraints $C_\anot^{j}$ for valid annotations $\anot_{j}$, of size \[\mathcal{O}\big(\sum_{j=1}^n\big(|\widehat Q_j|^2 \cdot b^2 \cdot 2^{|\inpv|} \cdot (\widehat d_j + \log(b\cdot |Q_j|))\big)\big).\]
\item Soft constraints $\soft_{\tiny\LTLglobally}^{j}$, $\soft_{\tiny\LTLfinally\LTLglobally}^{j}$ and $\soft_{\tiny\LTLglobally\LTLfinally}^{j}$ for valid annotations, of size
\[\mathcal{O}\big(\sum_{j=1}^n\big(|\log(b\cdot |Q_j|)\big)\big).\]
\end{itemize}

Summing up, we obtain the total size of the constraint system.\qed
\end{proof}

\newpage
\section{Generalizations of the Maximum Realizability Problem}\label{sec:generalizations}
\subsubsection{Maximum Realizability with Soft LTL Specifications.}\label{sec:soft-ltl}
The first generalization of the maximum realizability problem that we consider is the setting where the soft specifications can be arbitrary LTL formulas, and not just safety properties of specific form. More precisely, each soft specification $\softSpec$ is an LTL formula for which we are also given a vector $\Relax(\softSpec)$ of LTL formulas that defines the possible relaxations of $\softSpec$. Formally, $\Relax(\softSpec) = (\psi_{0},\ldots,\psi_{m})$, where $\psi_{0} = \softSpec$, and for every $0 \leq k < m$ it holds that $\trans \models \psi_{k}$ implies $\trans \models \psi_{k+1}$ for every transition system $\trans$. That is, $\psi_{0},\ldots,\psi_{m}$ are ordered according to strength.
In particular, if $\trans \models \softSpec$, then $\trans \models \psi_k$ for each $\psi_k$ in $\Relax(\softSpec)$.
For example, if $\softSpec = \LTLglobally p$ for some atomic proposition $p$, we can take $\Relax(\softSpec) = (\LTLglobally p,\LTLfinally\LTLglobally p,\LTLglobally\LTLfinally p)$.

As in Section~\ref{sec:quantitative-semantics} we define the value of $\softSpec$ for given $\Relax(\softSpec) = (\psi_{0},\ldots,\psi_{m})$ to be $\val\trans\softSpec = (v_1,\ldots,v_m)$, where $v_k = 1$ if $\trans \models \psi_k$, and $v_k = 0$ otherwise.

For a conjunction $\softSpec_1\wedge\ldots\wedge\softSpec_n$ of soft specifications with given $\Relax(\softSpec_j) = (\psi_{j,0},\ldots,\psi_{j,m})$ for each $j \in \{1,\ldots,n\}$, we define the value
$\val\trans{\softSpec_1 \wedge \ldots \wedge \softSpec_n} = \big(
\sum_{i=1}^n v_{i,m},
\ldots,
\sum_{i=1}^n v_{i,0}
\big),$ where
$\val\trans{\softSpec_i} = (v_{i,0},\ldots,v_{i,m})$ for $i \in \{1,\ldots,n\}$.

The maximum realizability problem asks for given LTL specification $\spec$ and soft LTL specifications 
$\softSpec_1,\ldots,\softSpec_n$ with given $\Relax(\softSpec_j) = (\psi_{j,0},\ldots,\psi_{j,m})$, to determine whether there exists a transition system $\trans$ such that $\trans\models\spec$, and if the answer is positive, to construct a transition system $\trans^*$ such that $\trans^* \models \spec$, and for every $\trans$ with $\trans\models\spec$ it holds that $\val\trans{\softSpec_1 \wedge \ldots \wedge \softSpec_n} \leq \val{\trans^*}{\softSpec_1 \wedge \ldots \wedge \softSpec_n}$.
The bounded maximum realizability problem is defined in the obvious way.

We can adapt the MaxSAT approach from Section~\ref{sec:maxsat} to solve the bounded maximum realizability problem in this setting as follows. 

First, for each $\psi_{j,k}$ in $\Relax(\softSpec_j)$, we construct a universal co-B\"uchi automaton $\autA_{j,k} = (Q_{j,k},q_0^{j,k},\delta_{j,k},F_{j,k})$ such that $\trans \in \lang{\autA_{j,k}}$ if and only if $\trans \models \psi_{j,k}$.
In the MaxSAT encoding, the hard constraints for the annotation $\anot_{j,k}$  are
\begin{align*}
C_{j,k}\defeq\bigwedge_{q,q' \in Q_{j,k}}\bigwedge_{s,s' \in S}\bigwedge_{\inpval \in \ialphabet}
\Bigg( &
\big(
\anotbjk_{s,q} \wedge
\delta^{j,k}_{s,q,\inpval,q'}\wedge 
\tau_{s,\inpval,s'}
\big) \rightarrow 
\succa_{\anot}^{j,k}(s,q,s',q',\inpval)
\Bigg).
\end{align*}
Generalizing the encoding, for each $j \in \{1,\ldots,n\}$ and $k \in \{0,\ldots,m\}$ we now have one soft constraint
$\soft_{j,k}\defeq  \bigvee_{l=0}^k \anotbjl_{s_0,q^{j,l}}$ with weight $n^k$.

\smallskip
\noindent 
{\bf Maximum Realizability with Priorities.}
In the definitions in Section~\ref{sec:max-realizability} and the paragraph above, all soft specifications have the same priority. Now, we extend the maximum realizability setting with priorities for the soft specifications given as part of the input to the problem.

\smallskip
\noindent
{\it Soft specifications with priority ordering.} We begin with a simple setting where soft specifications are simply ordered in decreasing priority, without assigning any numerical weight for each formula's preference. More specifically, we assume that the soft specifications $\softSpec_1,\ldots,\softSpec_n$ are ordered such that, for every $i \in \{1,\ldots,n\}$, we have that $\softSpec_i$ has higher priority than $\softSpec_j$ for all $j > i$. 

Now, given a vector $\Relax(\softSpec_j) = (\psi_{j,0},\ldots,\psi_{j,m})$ for $\softSpec_j$  we define the value of $\softSpec_j$ in a transition system $\trans$ to be the number of specifications in $\Relax(\softSpec_j)$ satisfied by $\trans$, i.e., $\val\trans{\softSpec_j} = | \{k \in \{0,\ldots,m\} \mid \trans\models \psi_{j,k}\}|$. The value of $\softSpec_1\wedge\ldots\wedge\softSpec_n$ is then defined as $\val\trans{\softSpec_1 \wedge \ldots \wedge \softSpec_n} = (\val\trans{\softSpec_1},\ldots,\val\trans{\softSpec_n})$. The values of transition systems are compared according to the lexicographic ordering of vectors in $\{0,\ldots,m+1\}^n$, giving priority of $\softSpec_i$ over $\softSpec_j$ for $i < j$.

The MaxSAT approach can be adapted for this value function in the same way as above. The difference is in the weights of the soft constraints for the annotations $\anot_{j,k}$: for each
$j \in \{1,\ldots, n\}$ and $k \in \{1,\ldots,m\}$ we have a soft constraint  $\soft_{j,k} \; \defeq \;  \bigvee_{l=0}^k \anotbjl_{s_0,q^{j,l}}$ with weight $w_{j,k}$, where 
$w_{j,k} = 1 $ if $j=n$ or $k < m$, and $w_{j,k} = \sum_{j' = j+1}^n \sum_{k = 1}^m w_{j',k} + 1$ otherwise.

\smallskip
\noindent
{\it Soft specifications with given weights.}
We also consider the weighted maximum realizability problem, in which, together with $\Relax(\softSpec)$ for each soft specification $\softSpec$, the user also provides numerical weights for the formulas in $\Relax(\softSpec)$. That is, for each $j \in \{1,\ldots,n\}$ and $k \in \{1,\ldots,m\}$ we are given a weight $w_{j,k}$ for $\psi_{j,k}$. These weights specify the priority of each of the soft specifications.

The MaxSAT formulation is then adapted to incorporate the given weights, by using them for the corresponding soft constraints. Namely, for each $j$ and $k$, the corresponding soft constraint $\soft_{j,k}  \; \defeq \;  \bigvee_{l=0}^k \anotbjl_{s_0,q^{j,l}}$ has weight $w_{j,k}$.

\comment{
\begin{theorem}\label{thm:optimal-bound-general}
Given an LTL specification $\spec$ and a soft specification $\gphi_1 \wedge\ldots \wedge\gphi_n$ together with a vector of formulas $\Relax(\softSpec_j) = (\psi_{j,0},\ldots,\psi_{j,m})$ for each $\softSpec_j$,
if there is a transition system $\trans$ with  $\trans \models \spec$, then there exists $\trans^*$ such that:
\begin{itemize}
\item $\val\trans{\softSpec_1 \wedge \ldots \wedge \softSpec_n} \leq \val{\trans^*}{\softSpec_1 \wedge \ldots \wedge \softSpec_n}$ for all $\trans$ with $\trans \models \spec$,
\item $\trans^* \models \spec$ and $|\trans^*| \leq (2^{b+\log b})!^2$, where
\end{itemize}
$b = \max\{|\subf{\spec\wedge\softSpec_1'\wedge\ldots\wedge\softSpec_n'}| \mid \softSpec_i' \in \Relax(\softSpec_i) \text{ for }i=1,\ldots,n\}$.
\end{theorem}
}

\begin{appxthm}\label{thm:optimal-bound-general-appx}
Given an LTL specification $\spec$ and a soft specifications $\gphi_1,\ldots,\gphi_n$ together with a vector of formulas $\Relax(\softSpec_j) = (\psi_{j,0},\ldots,\psi_{j,m})$ for each $\softSpec_j$,
if there is a transition system $\trans$ with  $\trans \models \spec$, then there exists $\trans^*$ such that:
\begin{itemize}
\item $\val\trans{\softSpec_1 \wedge \ldots \wedge \softSpec_n} \leq \val{\trans^*}{\softSpec_1 \wedge \ldots \wedge \softSpec_n}$ for all $\trans$ with $\trans \models \spec$, and
\item $\trans^* \models \spec$ and $|\trans^*| \leq (2^{b+\log b})!^2$,
\end{itemize}
where $b = \max\{|\subf{\spec\wedge\softSpec_1'\wedge\ldots\wedge\softSpec_n'}| \mid \softSpec_i' \in \Relax(\softSpec_i) \text{ for }i=1,\ldots,n\}$.
\end{appxthm}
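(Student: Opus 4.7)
The plan is to mirror the argument used for Theorem~\ref{thm:optimal-bound-safety-appx}, replacing the fixed relaxation menu $\{\gphi_i, \fgphi_i, \gfphi_i, \true\}$ by the user-supplied chain $\Relax(\softSpec_j) = (\psi_{j,0},\ldots,\psi_{j,m})$. The first step is to establish the natural analogue of Lemma~\ref{lem:value-as-ltl-appx}: for every transition system $\trans$ with $\val\trans{\softSpec_1 \wedge \ldots \wedge \softSpec_n} = v$, there is an LTL formula $\psi_v = \softSpec_1' \wedge \ldots \wedge \softSpec_n'$ with each $\softSpec_i' \in \Relax(\softSpec_i) \cup \{\true\}$ such that (i)~$\trans \models \psi_v$ and (ii)~every $\trans'$ with $\trans' \models \psi_v$ satisfies $\val{\trans'}{\softSpec_1 \wedge \ldots \wedge \softSpec_n} \geq v$. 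The construction is pointwise in $j$: let $(v_{j,0},\ldots,v_{j,m}) = \val\trans{\softSpec_j}$ and let $k_j$ be the largest index with $v_{j,k_j} = 1$; set $\softSpec_j' = \psi_{j,k_j}$, or $\softSpec_j' = \true$ if no such index exists. Property~(i) is then immediate from the definition of $\val\trans{\softSpec_j}$. Property~(ii) uses the monotonicity of the chain, $\trans' \models \psi_{j,k_j} \Rightarrow \trans' \models \psi_{j,k}$ for all $k \geq k_j$, which forces $\val{\trans'}{\softSpec_j}$ to dominate $\val\trans{\softSpec_j}$ coordinate-wise for every $j$, and hence the lexicographic value of the conjunction as well (the ordering-preservation step is identical to the safety case in Lemma~\ref{lem:value-as-ltl-appx}).

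Once this lemma is in place, I would pick an optimal value
\[ v^* = \max\bigl\{ v \mid \exists\, \trans.\ \trans \models \spec \text{ and } \val\trans{\softSpec_1 \wedge \ldots \wedge \softSpec_n} = v\bigr\}, \]
which is well-defined because the set of possible values is finite (each $v_{j,k}$ is Boolean) and nonempty by the assumed existence of a $\trans \models \spec$. Pick any witness $\trans$ realizing $v^*$, and let $\psi_{v^*} = \softSpec_1' \wedge \ldots \wedge \softSpec_n'$ be the formula supplied by the generalized lemma. Then $\trans \models \spec \wedge \psi_{v^*}$, so $\spec \wedge \psi_{v^*}$ is realizable. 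Now invoke the standard bounded-synthesis size bound from~\cite{ScheweF07a}: there exists a transition system $\trans^*$ with $\trans^* \models \spec \wedge \psi_{v^*}$ and $|\trans^*| \leq (2^{|\subf{\spec \wedge \psi_{v^*}}| + \log|\spec \wedge \psi_{v^*}|})!^2$. Since $\spec \wedge \psi_{v^*}$ is one of the finitely many conjunctions $\spec \wedge \softSpec_1' \wedge \ldots \wedge \softSpec_n'$ with $\softSpec_i' \in \Relax(\softSpec_i)$ (the cases $\softSpec_i' = \true$ only make the subformula count smaller), its subformula count is at most the maximum $b$ in the statement, giving $|\trans^*| \leq (2^{b + \log b})!^2$ as required by condition~(2).

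It remains to verify condition~(1). From $\trans^* \models \spec \wedge \psi_{v^*}$ we have $\trans^* \models \spec$ and $\trans^* \models \psi_{v^*}$. Applying part~(ii) of the generalized lemma to $\trans^*$ yields $\val{\trans^*}{\softSpec_1 \wedge \ldots \wedge \softSpec_n} \geq v^*$, and by maximality of $v^*$ over all $\trans \models \spec$ this means $\val{\trans}{\softSpec_1 \wedge \ldots \wedge \softSpec_n} \leq v^* \leq \val{\trans^*}{\softSpec_1 \wedge \ldots \wedge \softSpec_n}$ for every $\trans$ with $\trans \models \spec$, which is exactly condition~(1).

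The only nontrivial step is the generalized lemma, and within it the coordinate-wise dominance argument for the lexicographic order on sums $\sum_i v_{i,k}$. The potential worry is that a gain on a low-index coordinate might compensate for a loss on a high-index one, but the monotonicity $\trans \models \psi_{j,k} \Rightarrow \trans \models \psi_{j,k+1}$ along each chain ensures that any $\trans' \models \psi_v$ matches $\trans$ on every coordinate up to $k_j$ and can only improve beyond it, so every partial sum $\sum_i v_{i,k}$ can only grow. This is the key place where the monotonicity assumption on $\Relax(\softSpec_j)$ is used, and it is exactly the property that makes the naive lifting of Theorem~\ref{thm:optimal-bound-safety-appx} go through without modification.
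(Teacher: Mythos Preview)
Your approach is exactly the paper's: state and prove the generalized analogue of Lemma~\ref{lem:value-as-ltl-appx}, then repeat the argument of Theorem~\ref{thm:optimal-bound-safety-appx} verbatim with the bounded-synthesis size bound from~\cite{ScheweF07a}.

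There is one slip in your construction of $\psi_v$. You write ``let $k_j$ be the \emph{largest} index with $v_{j,k_j}=1$''. Since the chain is ordered from strongest to weakest ($\trans\models\psi_{j,k}\Rightarrow\trans\models\psi_{j,k+1}$), the vector $(v_{j,0},\ldots,v_{j,m})$ has the form $(0,\ldots,0,1,\ldots,1)$, so the largest such index is always $m$ whenever any entry is~$1$. Choosing $\softSpec_j'=\psi_{j,m}$ would then only guarantee $v'_{j,m}=1$ for $\trans'\models\psi_v$, which does not dominate $\val\trans{\softSpec_j}$ coordinate-wise. You need $k_j$ to be the \emph{smallest} index with $v_{j,k_j}=1$, i.e.\ pick the strongest relaxation that $\trans$ satisfies, mirroring the case split in the proof of Lemma~\ref{lem:value-as-ltl-appx}. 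With that correction your monotonicity argument goes through: for $k\geq k_j$ both systems have value~$1$, and for $k<k_j$ you have $v_{j,k}=0$, so $v'_{j,k}\geq v_{j,k}$ trivially. Your final paragraph's ``matches up to $k_j$, improves beyond'' is likewise reversed; it should read ``matches from $k_j$ onward, can only improve below $k_j$''.
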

\begin{proof}
The proof is a generalization of the proof of Theorem~\ref{thm:optimal-bound-safety}. First, we need to establish the analogue of Lemma~\ref{lem:value-as-ltl} for the general case. 

\begin{appxlemma}\label{lem:value-as-ltl-general-appx}
For every transition system $\trans$, soft specifications $\gphi_1,\ldots,\gphi_n$, and vector of formulas $\Relax(\softSpec_j) = (\psi_{j,0},\ldots,\psi_{j,m})$ for each $\softSpec_j$, if $\val\trans{\gphi_1 \wedge \ldots \wedge \gphi_n} = v$, then there exists an LTL formula $\psi_v$ such that $\trans \models \psi_v$ and the following hold:
\begin{itemize}
\item[(1)] $\psi_v = \softSpec_1'\wedge\ldots\wedge\softSpec_n'$, where $\softSpec_i' \in\Relax(\softSpec_i)  \text{ for }i=1,\ldots,n$, 
\item[(2)] for every $\trans'$, if $\trans' \models \psi_v$, then $\val{\trans'}{\gphi_1 \wedge \ldots \wedge \gphi_n} \geq v$.
\end{itemize}
\end{appxlemma}
The proof of Lemma~\ref{lem:value-as-ltl-general-appx} is analogous to the proof of Lemma~\ref{lem:value-as-ltl-appx}. Then, with the help of Lemma~\ref{lem:value-as-ltl-general-appx} we can establish the theorem in the same way as Theorem~\ref{thm:optimal-bound-safety}. \qed
\end{proof}

\end{document}